\newtheorem{assumption}{{\sc Assumption}}
\newcommand{\bargmin}{\mathop{\mathrm{arg\ min}}}
\newcommand{\bargmax}{\mathop{\mathrm{arg\ max}}}
\begin{document}

\title{\LARGE Quantile Factor Analysis for Large-dimensional Time Series with Statistical Guarantee
}

	\author{Yong He\thanks{ Institute for Financial Studies, Shandong University, Jinan, China; },~~Xin-Bing Kong\thanks{Nanjing Audit University, Nanjing, 211815, China; Email:{\tt xinbingkong@126.com}.},~~Long Yu\thanks{National University of Singapore, 117546, Singapore;},~~Peng Zhao\thanks{Jiangsu Normal University,
Xuzhou, Jiangsu, {\rm221116}, China}}	
	\date{}	
	\maketitle
Quantile is an important measure in finance and quality assessment in service industry. In this paper, we model the temporal and cross-sectional interactive effect of the quantiles of  large-dimensional time series by a latent quantile factor model. The factor loadings and scores are learnt with statistical guarantee via an iterative check-loss-minimization procedure.
Without any moment constraint on the idiosyncratic errors, we correctly identify the common and idiosyncratic components for each variable. We obtained the statistical convergence rates of the minimization estimators. Bahardur representations for the estimated factor loadings and scores are provided under some mild conditions. Moreover, a robust method is proposed to select the number of factors consistently. Simulation experiments checked the validity of the theory. Our analysis on a financial data set shows the superiority of learning quantile factors in portfolio allocation over other state-of-the-art methods that learn mean factors.

\vspace{2em}

\textbf{Keyword:} Factors of quantiles; Check loss minimization; Large-dimensional factor analysis; Principal component analysis.

\section{Introduction}

Factor models are widely used in practice such as biology, image processing, recommending system, economics and finance. The mathematical expression of a large dimensional static approximate factor model is
\begin{equation}\label{model}
(\bY_t)_{p\times 1}=\bL_{p\times r}(\bbf_t)_{r\times 1}+(\bepsilon_t)_{p\times 1}, \ t=1,\ldots, T,
\end{equation}
where $\bY_t$ is a $p$-dimensional vector observed at time $t$, $\bL$ is the factor loading matrix, $\bbf_t$ is a vector of factors at time $t$, and $\bepsilon_t$ is the idiosyncratic component that can be cross-sectionally weakly dependent. Recent years have seen increasing interest in statistical inference on model (\ref{model}). The approximate factor structure instead of the strict factor structure was introduced and studied in \cite{Chamberlain1983Arbitrage}. \cite{Bai2002Determining} presented information criterions to determining the number of factors under the framework of the static approximate factor model. \cite{Bai2003Inferential} further gave the asymptotic theory on the estimated factor loadings and scores. \cite{fan2013large} proposed a POET estimator of the large covariance matrix with factor structure.  \cite{stock2002forecast,Stock2002Macroeconomic} incorporated the factors into the autoregressive model to predict macroeconomic variables. \cite{onatski2009testing} provided a hypothesis testing procedure to a prefixed number of factors. \cite{Ahn2013Eigenvalue} proposed the eigenvalue-space-ratio estimators of the number of factors. {\cite{Trapani2018A} sequentially tested the divergence of eigenvalues and found a consistent estimate of the number of factors. } \cite{Kong2019Factor} established the theory of empirical processes of the series of  estimated common components and idiosyncratic components. With high-frequency data, \cite{Ait2017Using}, \cite{Pelger2018Large}, \cite{Chen2020The}, \cite{Kim2016Sparse}, \cite{Kong2017On,Kong2018On} and \cite{Kong2018Testing} extensively studied the continuous-time version of model (\ref{model}). In the seminal paper by \cite{forni2000generalized}, the authors proposed a generalized dynamic factor model that can accommodate a factor space of infinite dimension and the factors are loaded via linear filters. Adapting to the dynamic feature, \cite{hallin2007determining} developed an information criterion to estimate the number of factors. In this paper, we only consider robust estimation of the factors, loadings, and common and idiosyncratic components, under the static model, and leave extensions to the generalized dynamic factor model to our future work.



A basic requisite of the aforementioned papers is the finiteness of forth moment of idiosyncratic errors to obtain the convergence rate of the estimated factor loadings and scores. Theoretically, a natural question is ``{\it how to do factor analysis if the fourth moment or even the second moment does not exist}". In practice, many financial and macroeconomic variables have heavy-tailed distributions, and thus the assumption in most recent PCA-based factor analysis papers is violated. In finance, a stylized empirical fact of asset returns is leptokurtosis, c.f., Chapter 1 of \cite{Tancov2004Financial} and \cite{kong2015testing}. This motivates us to find a way to do factor analysis under model (\ref{model}) without any moment constraint on the idiosyncratic errors and with theoretical guarantee and computational feasibility.

To the best of our knowledge, few papers considered robust factor analysis without moment constraint. \cite{He2020Large} provided robust consistent estimates of the factor space and common and idiosyncratic components using an eigen-analysis of the spatial Kendal's tau matrix. However, it assumed a joint elliptical distribution for the factor vector and the large cross-section of idiosyncratic components, which rules out typical families of heavy-tailed distributions, such as stable distributions. \cite{Calzolari2018Estimating} assumed the stable distribution for independent factors and idiosyncratic noises and did factor analysis with indirect inference, but no asymptotic theory was established. We are aimed at giving a completely nonparametric approach and providing reliable asymptotic results for separating common and idiosyncratic components from each variable and for estimating the factor loadings and scores, as $p, T\rightarrow \infty$ simultaneously.


Our methodology is inspired by the equivalence of PCA and double least square estimation when there aren't missing values. {That is, the PCA-based estimators of the factor loadings and scores are identical to}
\begin{equation}\label{L2}
(\overline{\bL}, \overline{\bF})=\bargmin_{\bL, \bF}\left\{\sum_{i=1}^p\sum_{t=1}^T(y_{it}-\bl_i^{\prime}\bbf_t)^2\right\},
\end{equation}
up to some orthogonal transformations, where $y_{it}$ is the $(i,t)$-th entry of $\bY_t$, $\bL=(\bl_1,\ldots,\bl_p)^{\prime}$ and $\bF=(\bbf_1,\ldots,\bbf_T)$.
As in robust regression, we simply replace the quadratic loss function by the weighted absolute loss function. That being said,
\begin{equation}\label{L1}
(\hat{\bL}, \hat{\bF})=\bargmin_{\bL, \bF}\left\{\sum_{i=1}^p\sum_{t=1}^T\rho_{\tau}(y_{it}-\bl_i^{\prime}\bbf_t)\right\}=:\bargmin_{\bL, \bF}\|\bY-\bL\bF\|_{WL_1},
\end{equation}
where $\bY=(\bY_1,\ldots,\bY_T)$, $\hat{\bL}=(\hat{\bl}_1,\ldots,\hat{\bl}_p)^{\prime}$, $\hat{\bF}=(\hat{\bbf}_1,\ldots,\hat{\bbf}_T)$, and $\rho_{\tau}(x)=\{\tau-I(x\leq 0)\}x$. The loss function $\rho_{\tau}(x)$ puts weight $\tau$ $(\tau\in(0, 1))$ to the positive axis and $1-\tau$ to the negative axis. It is also named as check function in the literature. When $\tau=1/2$, it is simply the absolute loss function. The optimization solutions to (\ref{L2}) and (\ref{L1}) have the advantage that they are not much affected by the missing values in $\bY_t$'s compared with the PCA solution. This is because the PCA solution relies on the input of a sample covariance matrix. To calculate the sample covariance matrix, one needs to delete the $t$-th column of $\bY$ if $y_{it}$ is missing for some $i$ or impute $y_{it}$ with some extra effort. For optimizing the loss functions in (\ref{L2}) or (\ref{L1}), only the single loss term containing $y_{it}$ needs to be deleted when $y_{it}$ is missing. This advantage is advocated in machine learning area, such as image processing, c.f., \cite{Ke2005robust} and \cite{Aan2002robust}. However, no statistical theory had ever been presented in machine learning field. The major difficulty in deriving the asymptotic theory of the estimated factors and loadings via optimizing the weighted $L_1$ loss in (\ref{L1}) lies in three aspects. First, the minimizers of (\ref{L1}) have no closed form expression compared with the PCA solution (or equivalently the $L_2$ minimizer of (\ref{L2})); Second, the weighted $L_1$ loss function in (\ref{L1}) is not a jointly convex function of $\bL$ and $\bF$, which is totally different from the least weighted absolute deviation setting in quantile regressions; Third, there are a large number of parameters to be optimized in (\ref{L1}) as $p, T\rightarrow \infty$ simultaneously, which makes it hard to construct a small ball containing the true parameters in the parameter space in contrary to the typical derivation of the consistency of the robust regression estimators, c.f., \cite{Pollard1991Asymptotics} and \cite{Knight1998Asymptotics}.


The weighted $L_1$ minimization in (\ref{L1}) amounts to saying that the $\tau$-th quantiles of a large cross-section of asset returns are driven by the common factor vector $\bbf_t$ and the corresponding exposures are measured by the loading matrix $\bL$. And dynamically the quantiles of the return series are modeled by $\{\bl_i^{\prime}\bbf_t\}$ given latent $\bbf_t$. This implies the identifiability condition for the idiosyncratic components, $Q_{\tau}(\epsilon_{it}|\{\bbf_t^0\})=0$ for all $i=1,\ldots, p$ and $t=1,\ldots, T$, where $Q_{\tau}(X|Y)$ refers to the $\tau$-th quantile of $X$ given $Y$ and $\bbf_t^0$ is the true factor vector realized at time $t$. \cite{Ando2020Quantile} studied the quantile co-movement in financial market, but their theory requires at least finiteness of the first moment of the idiosyncratic errors.  In this paper, we derive the convergence rates of computationally feasible weighted $L_1$ estimators of the common components, factor loadings and scores. For the term computationally feasible estimator, it refers to an estimator in the algorithmic solution path after a number of alternating iterations, rather than the theoretically optimal minimizers. This is another originality of the present paper.
We show that up to some orthogonal transformations, the weighted $L_1$ estimators of the loadings converge at rate $({1}/{\sqrt{p}})\wedge ({\log{p}}/{\sqrt{T}})$, where the $\log{p}$ term stems from the aggregation of the estimation errors along the cross-sectional dimension in the solution path. Our results do not need any moment constraint on the idiosyncratic components. Under some mild conditions, we obtained the Barhadur representations of the estimated factor loadings and scores. The weighted $L_1$ estimation successfully separates the common and idiosyncratic components for each variable.


Related to the present paper, we are noticed most recently an interesting concurrent independent working paper by \cite{Chen2019Quantile}. They also proposed estimating the factor loadings and scores by (\ref{L1}) and got a similar conclusion that the finite forth moment constraints on the idiosyncratic errors can be relaxed, see the definition of $\overline{M}_{NT}(\theta)$ in Remark 1.1 of their paper, though we believe that their condition $E\rho_{\tau}(\epsilon_{it})<\infty$ can be removed completely. However, our paper differs from theirs in at least the following aspects. First, their paper considered the theoretical minimizer $(\hat{\bL}, \hat{\bF})$ of (\ref{L1}) while ours is concerned with directly the computationally feasible estimator (i.e., $(\tilde{\bL}, \tilde{\bF})$ in Algorithm 1 below), and for more details see the remarks after Algorithm 1; Second, their paper provided results on the summed squared errors of the estimated factor loadings, while ours on the maximum estimation error with the target of separating each variable to common and idiosyncratic components; third, they assumed the uniform boundedness of $\|\bbf_t^0\|$'s, while we only need some moment conditions on them; forth, their estimation of the number of factors is based on thresholding the eigenvalues while ours on maximizing eigenvalue ratios.


The present paper is arranged as follows. In Section \ref{assmain}, we present some setup assumptions and provide the main results of computationally feasible weighted $L_1$ estimators, realized by an alternating iterative algorithm to solve the non-convex objective function in (\ref{L1}). In addition, a robust method is proposed to estimate  the number of factors consistently. Extensive simulation studies and an empirical application are given in Section \ref{se} and Section \ref{em}, respectively. A brief conclusion and discussion on future works are given in Section \ref{condis}. All the technical proofs are relegated to the Appendix.

\section{Assumptions and main results}\label{assmain}

It is well known that the factor loadings and factors are only identifiable up to some orthogonal transformations. This gives the freedom to restrict the columns of the factor loading matrix to be orthogonal vectors spanning the same factor space. Notice also that the factor space spanned by the columns of $\bL$ is the same as that spanned by the $r$ principal components of $\bL Cov(\bbf_t) \bL^{\prime}$, without loss of generality and as in \cite{fan2013large}, {we assume that $\bL$ and $\bF$ have the canonical form in (\ref{canonical}) below.}

\begin{assumption}\label{ass1}
\begin{enumerate}
\item[(1)] The factor loading matrix $\bL$ and the factor series $\bbf_t$ satisfy
{\begin{equation}\label{canonical}
\bL^{\prime}\bL/p \ \mbox{is \ diagonal \ and } \ \bSigma_f=Cov(\bbf_t)=\Ib_r,
\end{equation} where $\Ib_r$ stands for the $r\times r$ identity matrix and the diagonal elements of $\bL^{\prime}\bL/p$ are bounded away from zero and infinity;}

\item[(2)] {$\{\bbf_t^0\}$ is a stationary and $\alpha$-mixing sequence of random vectors satisfying $E\|\bbf_t^0\|^4\leq C$ for some constant $C>0$, and $\sum^{\infty}_{n=1}\sqrt{\alpha(n)}<\infty$, where
\[
\alpha(n)=:\sup\Big\{\big|P(A\cap B)-P(A)P(B)\big|; A\in \sigma(\bbf_{-\infty},..., \bbf_{-n}), B\in\sigma(\bbf_k, k\geq 0)\Big\}.
\]}
\end{enumerate}
\end{assumption}

Given Assumption \ref{ass1}(1), the weighted $L_1$ minimization (\ref{L1}) can be done subject to
\begin{equation}\label{canonical1}
\bL^{\prime}\bL/p \ \text{is  diagonal  and } \frac{1}{T}\sum^T_{t=1}\bbf_t\bbf_t^{\prime}=\Ib_r.
\end{equation}
Assumption \ref{ass1}(1) assumed a strong factor condition saying that the signal strength of the common components grows at rate $\sqrt{p}$. This condition is mainly used to derive the second-order property of the estimators. For only the consistency, this might be relaxed to the weak factor condition that $\bL^{\prime}\bL/p^{\alpha}$ has bounded eigenvalues for some $0<\alpha<1$ as long as the common and idiosyncratic components are separable asymptotically. Assumption \ref{ass1}(2) is a standard assumption on the factor series, c.f., \cite{fan2013large} and the references therein.

\begin{assumption}\label{ass2}
\begin{equation}\label{median}
Q_{\tau}\left(\epsilon_{it}|\{\bbf_t^0\}\right)=0.
\end{equation}
\end{assumption}

Assumption \ref{ass2} is an identifiability condition for weighted $L_1$ optimization. When the factors are observable, it is simply the identifiability condition used in quantile regression. It is equivalent to stating $Q_{\tau}(y_{it}| \{\bbf_t^0\})=\bl_i^{\prime}\bbf_t^0$ which means the quantiles of a large cross-section of asset returns are driven by the common true factor vector $\bbf_t^0$ and the corresponding exposures are measured by the loading matrix $\bL$. This is not in accordance with  the classic CAPM theory which explains the mean cross-section excess returns via exposure to the value of the market portfolio. But the focus of the present paper is not on finance theory but a statistical investigation into the weighted $L_1$ estimators of the factor loadings, scores, and the common and idiosyncratic components under (\ref{L1}), (\ref{canonical}) and (\ref{median}).

Before presenting the next assumption on temporal and cross-sectional weak dependence on functionals of $\{\epsilon_{it}\}$, we introduce two sums of bounded functionals of $\{\epsilon_{it}\}$. Let
\[
H_1(\{\epsilon_{it}\}, p, T)=\sum^p_{i=1}\sum^T_{t=1}\Big\{(\sigma_{it}-\overline{\epsilon}_{it})\big[I(\overline{\epsilon}_{it}\leq \sigma_{it})-I(\overline{\epsilon}_{it}\leq 0)\big]-E_f(\sigma_{it}-\overline{\epsilon}_{it})\big[I(\overline{\epsilon}_{it}\leq \sigma_{it})-I(\overline{\epsilon}_{it}\leq 0)\big]\Big\},\\
\]
where $E_f$ stands for conditional expectation on $\{\bbf_t^0\}$ (the true factor vector), $\sigma_{it}$'s are bounded variables, $\overline{\epsilon}_{it}=\epsilon_{it}-\mu_{it}$ with $\mu_{it}$'s being fixed parameters.
Let $$
H_2(\{\epsilon_{it}\}, p, T)=\sum^p_{i=1}\sum^T_{t=1}c_{it}\Big\{I(\overline{\epsilon}_{it}\leq 0)-\tau-E_f\big[I(\overline{\epsilon}_{it}\leq 0)-\tau\big]\Big\},
$$
where $c_{it}$'s are bounded coefficients irrelevant to $\epsilon_{it}$'s.

\begin{assumption}\label{ass3}
\begin{enumerate}
\item[(1)] {$\epsilon_{it}$ has probability density function $h_{i}(x)$ satisfying $\min_ih_i(x)>0$ for all $x\in R$. The derivative function $\dot{h}_i(x)$ of $h_i(x)$ is bounded uniformly in $i$. For $M$ large enough, $\min_i\inf_{|x|\leq M}h_i(x)>c>0$ for some constant $c$ and $h_i(x)$ does not increase as $|x|\rightarrow\infty$ for $|x|>M$;}

\item[(2)] {$E_f\Big\{H_1(\{\epsilon_{it}\}, p, T)/\sqrt{pT\max_{i,t}|\sigma_{it}|^3}\Big\}^2\leq C$ and $E_f\Big\{H_2(\{\epsilon_{it}\}, p, T)/\sqrt{pT\max_{it}c_{it}^2}\Big\}^2\leq C$ for some constant $C$.}
\end{enumerate}
\end{assumption}

Assumption \ref{ass3}(1) is a regular condition on the distribution functions of the idiosyncratic components. It assumes that the probability density functions of $\epsilon_{it}$'s have uniform support. The assumption does not impose any moment constraint on $\epsilon_{it}$'s. The moment condition in Assumption \ref{ass3}(2) assumes that a series of bounded functions of $\epsilon_{it}$'s are weakly correlated temporally and cross-sectionally, under which the $\sqrt{pT\max_{i,t}|\sigma_{it}|^3}$ and $\sqrt{pT\max_{it}c_{it}^2}$ give the scales of $H_1(\{\epsilon_{it}\}, p, T)$ and $H_2(\{\epsilon_{it}\}, p, T)$, respectively. This is satisfied when $\epsilon_{it}$'s are independent given $\{\bbf_t^0\}$.

Different from the optimization problem (\ref{L2}), problem (\ref{L1}) has no explicit closed form solution. Yet the SVD algorithm designed for problem (\ref{L2}) with no missing values is not applicable to solving problem (\ref{L1}). To be computationally feasible, we introduce an alternating iterative algorithm to solve the optimization problem (\ref{L1}). Although the objective function in (\ref{L1}) is in general non-convex jointly in all parameters, it is indeed convex in $\bL$ (or $\bF$) when $\bF$ (or $\bL$) is fixed in advance. The above fact motivates to minimize the loss function alternatively over $\bL$ and $\bF$, each time optimizing one argument while keeping the other fixed. The alternating optimization steps can be solved by linear programming or gradient descent schemes. The detailed algorithm is presented in Algorithm \ref{alg:first}.

\begin{algorithm}[H]
	\caption{Iterative Algorithm for Robust Factor Analysis}\label{alg:first}
	{\bf Input:} $\cD=\{\by_t,t=1,\ldots, T\}$\\
	{\bf Output:} Alternating Iterative Estimates of the factor loadings and scores, i.e., $\tilde \bL$, $\tilde \bF$
	\begin{algorithmic}[1]
		\State Initialization: $k=0$; Set $\bL^{(0)}=(l_{ij}^{(0)})$ so that (\ref{canonical1}) is satisfied.

\\

$\hat{\bF}^{(k)}=\bargmin_{\bF}\|\bY-\hat{\bL}^{(k-1)}\bF\|_{WL_1}$, where $\hat{\bL}^{(0)}=\bL^{(0)}$, and then {transform $\hat{\bF}^{(k)}$ so that (\ref{canonical1}) is satisfied.} \\

$\hat{\bL}^{(k)}=\bargmin_{\bL}\|\bY-\bL\hat{\bF}^{(k)}\|_{WL_1}$ and then transform $\hat{\bL}^{(k)}$ so that (\ref{canonical1}) is satisfied. \\

		Repeat Steps 2-3 until convergence. \\
        Output $\tilde{\bL}=\hat{\bL}^{(K)}$ and $\tilde\bF=\hat{\bF}^{(K)}$ as the final estimates of the factor loading and score matrices when the convergence condition is met.
	\end{algorithmic}
\end{algorithm}

Algorithm \ref{alg:first} amounts to alternatively carrying out cross-sectional quantile regression on factors and serial quantile regression on loadings, starting from some initial guess of $\bL$. One could also start from an initial guess of $\bF$ and alternating the serial and cross-section quantile regression iteratively. To reduce the sensitivity in the initial parameter values, we can try a set of different initial parameters and choose the solution resulting in lowest loss. As for the convergence criterion, denote the factor loading and score matrices at the $k$-th step as $\hat{\bL}^{(k)}=(\hat{l}_{ij}^{(k)})=(\hat{\bl}_1^{(k)},\ldots,\hat{\bl}_p^{(k)})^{\prime}$, $\hat{\bF}^{(k)}=(\hat{f}_{ij}^{(k)})=(\hat{\bbf}_1^{(k)},\ldots,\hat{\bbf}_T^{(k)})$ and let $\bC^{(k)}=\hat{\bL}^{(k)}\hat{\bF}^{(k)}=(C_{ij}^{(k)})$. In our simulation studies, the iteration is terminated with a prefixed finite number of alternating steps or when
\begin{equation}\label{equ:convergecondition}
\sum_{i}\sum_{j}\big|C_{ij}^{(K)}-C_{ij}^{(K-1)}\big|/\big(pT|C_{ij}^{(K-1)}|\big)=o\Big(\frac{\log{p}}{\sqrt{T}}+\frac{1}{\sqrt{p}}\Big),
\end{equation}
which means that the average relative iteration error for computing the common components are small enough compared with the estimation error theoretically obtained in Theorem \ref{th1} below. Our simulation experience shows that the above accuracy tolerance condition is always met within a finite number of iterations and $\{\hat{\bl}_i^{(k)}\}_{k=1}^K$,  $\{\hat{\bbf}_t^{(k)}\}_{k=1}^K$ form a solution path of the algorithm. The alternating iterative estimators are simply the ending-step solutions of the path. Notice that $\tilde{\bL}$ and $\tilde{\bF}$ are generally different from $\hat{\bL}$ and $\hat{\bF}$. $\tilde{\bL}$ and $\tilde{\bF}$ are computationally feasible while $\hat{\bL}$ and $\hat{\bF}$ are only theoretical minimizers. Therefore $\hat{\bl}_i$'s and $\hat{\bbf}_t$'s incur two sources of errors, the computing error for a fixed sample measured by the discrepancy between $(\hat{\bl}_i, \hat{\bbf}_t)$ and $(\hat{\bl}^{(K)}_i, \hat{\bbf}^{(K)}_t)$, and the statistical estimation error due to the sampling randomness. Thus instead of investigating into the asymptotics of the theoretical minimizers having unknown computing error, we are concerned with the asymptotics of the feasible alternating iterative estimators. Our theory below shows that the $\tilde{\bl}_i$'s and $\tilde{\bbf}_t$'s correctly identifies the true loadings and realized factors up to orthogonal transformations, and that $C^{(K)}_{ij}$'s consistently match the true common components.

To successfully implement the alternating iterative algorithm, we need a slightly stronger version of Assumption \ref{ass1} to regularize the parameter space.

\noindent{\sc Assumption {\bf 1}'} \
{\it Assumption \ref{ass1} holds and} {\begin{enumerate}
\item[\text{(1)}] {\it the eigenvalues of $\bL^{\prime}\bH\bL/p$ and $\bL^{\prime}\bH\bL^0/p$ are bounded away from zero and infinity, where $\bH=\text{diag}\big\{h_i(\bl_i^{\prime}\bbf_t-\bl_i^{0\prime}\bbf_t^0)\big\}$ is a $p\times p$ diagonal matrix and $\bL^0=(\bl_1^0,\ldots,\bl_p^0)^{\prime}$; }
\item[\text{(2)}] {\it $\max_i\|\bl_i\|\leq C$ for some generic constant $C$.}
\end{enumerate}
}

Assumption {\bf 1}'(1) demonstrates that the loading parameters span a full rank-$r$ space after being normalized by the probability density of $\epsilon_{it}$'s, and the spaces spanned by $\bL$ and $\bL^0$ are not orthogonal after the same normalization. Assumption {\bf 1}'(2) restricts that the loadings for each variable are not explosive. We remark that this assumption is not minimal. As a first attempt to establish the asymptotic theory for the robust factor analysis and for technical simplicity, we assume this condition in the present paper. We leave extending the theory to more general setup to our future work.

\begin{assumption}\label{ass4}
\[
\begin{array}{rll}
E\exp\Big\{H_1(\{\epsilon_{it}\}, p, 1)/\sqrt{p\max_{i,t}|\sigma_{it}|^3}\Big\}\leq C, &\  E\exp\Big\{H_1(\{\epsilon_{it}\}, 1, T)/\sqrt{T\max_{i,t}|\sigma_{it}|^3}\Big\}\leq C, \\
E\exp\Big\{H_2(\{\epsilon_{it}\}, p, 1)/\sqrt{p\max_{i,t}c_{it}^2}\Big\} \leq C, &\  E\exp\Big\{H_2(\{\epsilon_{it}\}, 1, T)/\sqrt{T\max_{i,t}c_{it}^2}\Big\}\leq C.
\end{array}
\]
\end{assumption}
Assumption \ref{ass4} is satisfied if $\big\{\epsilon_{it}|\{\bbf_t^0\}\big\}$ are independent arrays due to the boundedness of the summands of $H_1(\{\epsilon_{it}\}, p, T)$ and $H_2(\{\epsilon_{it}\}, p, T)$. Next assumption provides the conditions on the increasing orders of $p$ and $T$.

\begin{assumption}\label{ass5}
$$
\frac{\log{p}}{\sqrt{T}}+\frac{\log{T}}{p^{1/4}\log{p}}=o(1).
$$
\end{assumption}
Assumption \ref{ass5} assumes that $p$ (or $T$) can not be exponentially large relative to $T$ (or $p$). The reason is that $\hat{\bbf}_t^{(k)}$'s (or $\hat{\bl}_i^{(k-1)}$'s) are required to converge uniformly in $t$ (or $i$) to guarantee the convergence of $\hat{\bl}_i^{(k)}$ (or $\hat{\bbf}_t^{(k)}$) in Algorithm \ref{alg:first}.

Now we state our theoretical results on the solution path estimators of Algorithm \ref{alg:first}. Our first result shows that $\tilde\bbf_t$'s and $\tilde\bl_i$'s have similar asymptotic results as those given in \cite{Bai2002Determining} and \cite{fan2013large}.

\begin{theorem}\label{th1}
Under Assumptions \ref{ass1}-\ref{ass5}, for $2\leq K< \infty$ in Algorithm \ref{alg:first},
\begin{eqnarray*}
\tilde{\bbf}_t&=&\tilde{\bW}_0\bbf_t^0+O_p\Big(\frac{\log{p}}{\sqrt{T}}+\frac{1}{\sqrt{p}}\Big),\\
\tilde{\bl}_i&=&\tilde{\bW}_0^{-1}\bl_i^0+O_p\Big(\frac{\log{p}}{\sqrt{T}}+\frac{1}{\sqrt{p}}\Big),\\
\tilde{\bl}_i^{\prime}\tilde{\bbf}_t&=&\bl_i^{0\prime}\bbf_t^0+O_p\Big(\frac{\log{p}}{\sqrt{T}}+\frac{1}{\sqrt{p}}\Big),\\
\tilde{\epsilon}_{it}&=:&y_{it}-\tilde{\bl}_i^{\prime}\tilde{\bbf}_t=\epsilon_{it}+O_p\Big(\frac{\log{p}}{\sqrt{T}}+\frac{1}{\sqrt{p}}\Big),
\end{eqnarray*}
where $\tilde{\bW}_0=\big\{\sum^p_{i=1}h_i(0)\tilde{\bl}_i\tilde{\bl}_i^{\prime}\big\}^{-1}\sum^p_{i=1}h_i(0)\tilde{\bl}_i\bl_i^{0\prime}$ satisfying $\tilde{\bW}_0\tilde{\bW}_0^{\prime}=\Ib_r$ with probability approaching one. If further ${p\log^2{p}}/{T}=o(1)$,
$$
\tilde{\bbf}_t=\tilde{\bW}_0\bbf_t^0+\frac{1}{2}\Big\{\sum^{p}_{i=1}h_i(0)\tilde{\bl}_i\tilde{\bl}_i^{\prime}\Big\}^{-1}\sum^p_{i=1}\tilde{\bl}_iD_{it}+o_p\Big(\frac{1}{\sqrt{p}}\Big),
$$
where $D_{it}=I(\epsilon_{it}\leq 0)-\tau$. If $\big(\log^2{p}\log^2{T}+{\log^3{T}}/{\sqrt{p}}\big)T/p+{\log^5{p}}/{\sqrt{T}}=o(1)$, there exists an $r\times r$ matrix $\bW$ satisfying $\bW\bW^{\prime}=\Ib_r$ with probability approaching one, such that
$$
\tilde{\bl}_i=\bW\bl_i^0+\frac{1}{2h_i(0)}\Big(\sum^T_{t=1}\tilde{\bbf}_t\tilde{\bbf}_t^{\prime}\Big)^{-1}\sum^T_{t=1}\tilde{\bbf}_tD_{it}+o_p\Big(\frac{1}{\sqrt{T}}\Big).
$$
\end{theorem}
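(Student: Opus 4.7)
The plan is to prove the theorem by induction on the iteration index $k$ in Algorithm \ref{alg:first}, exploiting the fact that each half-step is a collection of standard quantile regressions with \emph{estimated} covariates. Conditional on $\hat{\bL}^{(k-1)}$, the objective $\sum_{i=1}^p\rho_\tau(y_{it}-(\hat{\bl}_i^{(k-1)})'\bbf)$ is convex in $\bbf$ for each $t$, and the minimizer $\hat{\bbf}_t^{(k)}$ can be analyzed by the Knight identity
\[
\rho_\tau(u-v)-\rho_\tau(u)=-v[\tau-I(u\le 0)]+\int_0^v\{I(u\le s)-I(u\le 0)\}\,ds,
\]
combined with Pollard's convexity lemma. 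A symmetric argument governs the $\bL$-step. The induction invariant I will maintain is a pair of rates $(a_k,b_k)$ and an orthogonal $\bW^{(k)}$ (produced by the normalization \eqref{canonical1} applied after each half-step) such that $\max_t\|\hat{\bbf}_t^{(k)}-\bW^{(k)}\bbf_t^0\|=O_p(a_k)$ and $\max_i\|\hat{\bl}_i^{(k)}-(\bW^{(k)})^{-1}\bl_i^0\|=O_p(b_k)$.

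For the base case, starting from $\bL^{(0)}$ satisfying \eqref{canonical1}, I would first show that $\hat{\bbf}_t^{(1)}$ is tight and approximately aligned with the true factor subspace, using the local strict convexity of the population criterion (which follows from positivity of $h_i(\cdot)$ in Assumption \ref{ass3}(1)) and the uniform positive-definiteness of the sample Hessian along the path (Assumption \textbf{1}'(1)). Here is where the two characteristic rates arise: in the cross-sectional quantile regression for $\bbf_t$ one aggregates over $p$ independent-enough terms and the score scales as $\sqrt{p}$, giving $1/\sqrt{p}$; in the serial quantile regression for $\bl_i$ one aggregates over $T$ terms, giving $1/\sqrt{T}$, but to run the induction one needs uniformity over $i=1,\ldots,p$, which via Assumption \ref{ass5} and the exponential tails in Assumption \ref{ass4} produces the $\log p$ inflation. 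The $L_2$ control of $H_1$ and $H_2$ in Assumption \ref{ass3}(2) feeds into stochastic equicontinuity of the remainder in the Knight expansion.

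The induction step is the crux. Given $(a_{k-1},b_{k-1})$, I would Taylor-expand the first-order condition of the $\bbf$-step around $(\bW^{(k-1)})^{-1}\bl_i^0$: the centered score at the truth has the form $\sum_i\hat{\bl}_i^{(k-1)}[\tau-I(\epsilon_{it}\le r_{it})]$ with $r_{it}$ absorbing the loading estimation error, and the deterministic part converges to $\sum_i h_i(0)\hat{\bl}_i^{(k-1)}\hat{\bl}_i^{(k-1)\prime}$ by Assumption \textbf{1}'(1). The stochastic remainder splits into an $H_1$-type bracketed piece of order $(b_{k-1})^{1/2}$ times the statistical scale, controlled by Assumption \ref{ass3}(2) and \ref{ass4}, and a bias of order $b_{k-1}$ coming from the plug-in loading. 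Iterating the resulting recursion contracts $(a_k,b_k)$ geometrically down to the fixed rate $\log p/\sqrt{T}+1/\sqrt{p}$, which is reached after finitely many steps; this delivers the first four displays. The claim about $\tilde{\bW}_0\tilde{\bW}_0'=\Ib_r$ follows from the Crawford-Moler-type identification argument combined with the normalization \eqref{canonical1} imposed at each output.

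For the two refined Bahadur expansions, I would plug the rate $1/\sqrt{p}+\log p/\sqrt{T}$ back into the Knight expansion centered at $\tilde{\bW}_0\bbf_t^0$: the linear leading term is exactly $\tfrac12\{\sum_i h_i(0)\tilde{\bl}_i\tilde{\bl}_i'\}^{-1}\sum_i\tilde{\bl}_i D_{it}$ once the loading rotation is absorbed into $\tilde{\bW}_0$, while the quadratic Knight remainder is $o_p(1/\sqrt{p})$ exactly when $p\log^2 p/T=o(1)$. The $\tilde{\bl}_i$ expansion is analogous but requires uniform control of $\hat{\bbf}_t$ in $t$, which via Assumption \ref{ass5} together with the sub-exponential tails from Assumption \ref{ass4} generates the higher-log condition $(\log^2 p\log^2 T+\log^3 T/\sqrt{p})T/p+\log^5 p/\sqrt{T}=o(1)$. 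The hardest parts, I anticipate, will be (i) coherently tracking the sequence of rotations $\bW^{(k)}$ across iterations so that a single $\tilde{\bW}_0$ emerges at the end, and (ii) upgrading the pointwise quantile-regression expansion of Pollard/Knight to the \emph{uniform-in-$t$-or-$i$} expansion needed here; both are handled by chaining the exponential bounds of Assumption \ref{ass4} with the uniform Hessian invertibility of Assumption \textbf{1}'.
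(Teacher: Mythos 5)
Your overall architecture --- alternating cross-sectional and serial quantile regressions, Knight's identity plus a Pollard-type convexity localization, discretization chained with the exponential bounds of Assumption \ref{ass4} for uniformity in $t$ or $i$, and plugging the crude rates back in to obtain the Bahadur expansions --- matches the paper's. The genuine gap is in your induction step. You posit that the $\bbf$-step inherits ``a bias of order $b_{k-1}$ coming from the plug-in loading'' and that the resulting two-term recursion ``contracts geometrically down to the fixed rate.'' Nothing in your argument supplies a contraction factor below one: with an arbitrary admissible initialization $\bL^{(0)}$ the loading error is $O(1)$, so a recursion of the form $a_k\lesssim C\,b_{k-1}+p^{-1/2}$, $b_k\lesssim C\,a_k+\log p/\sqrt{T}$ need not shrink at all. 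The paper does not contract this bias; it kills it exactly. At each step it defines a rotation depending on the current loadings \emph{and on the densities evaluated at intermediate points},
$$
\bW^{(k-1)}=\Big\{\sum^p_{i=1}h_i(\xi_{it}^{(k-1)})\hat{\bl}_i^{(k-1)}\hat{\bl}_i^{(k-1)\prime}\Big\}^{-1}\sum^{p}_{i=1}h_i(\xi_{it}^{(k-1)})\hat{\bl}_i^{(k-1)}\bl_i^{0\prime},
$$
for which the conditional mean of the score in the cross-sectional regression, $\sum_i E_f\tilde{D}_{it}^{(k-1)}\hat{\bl}_i^{(k-1)\prime}v_t$, vanishes identically (see (\ref{eI})). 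Hence the factor step already attains the $1/\sqrt{p}$ rate relative to $\bW^{(k-1)}\bbf_t^0$ at $k=1$ for \emph{any} admissible initialization, the loading step then attains $\log p/\sqrt{T}+o_p(1/\sqrt{p})$, and Lemma \ref{lemma4} ($A_k\subseteq B_k\subseteq A_{k+1}$ with probability approaching one) shows these rates are merely preserved, not improved, for $k\geq 2$.

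Two further points follow from this. First, your identification of $\bW^{(k)}$ with ``the normalization \eqref{canonical1} applied after each half-step'' is off: the normalization only delivers $\bW^{(k)}\bW^{(k)\prime}=\Ib_r+o_p(1)$ a posteriori, while the rotation itself is the weighted projection displayed above, and it is its specific form that makes the first-order bias vanish. Second, the coalescence of the per-step rotations into the single $\tilde{\bW}_0$ appearing in all four rate statements --- which you flag as the hard part but do not resolve --- is exactly the content of Lemma \ref{lemma5}, namely $\|\bW^{(k)}-\bW^{(k-1)}\|=O_p(\log p/\sqrt{T})+o_p(1/\sqrt{p})$ and $\|\bW^{(k)}-\tilde{\bW}_0\|$ of the same order; without it the theorem cannot be stated with one common rotation. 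Your treatment of the Bahadur expansions and of the rate conditions ${p\log^2 p}/{T}=o(1)$ and $(\log^2 p\log^2 T+\log^3 T/\sqrt{p})T/p+\log^5 p/\sqrt{T}=o(1)$ is otherwise consistent with Lemmas \ref{lemma2} and \ref{lemma3} and Corollaries \ref{cor2}--\ref{cor3}.
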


Theorem \ref{th1} demonstrates that the computationally feasible factor and loading estimates match the realized factor and true loadings up to some orthogonal transformations, and recover the common components (factor returns) and idiosyncratic components (idiosyncratic returns) consistently for each variable. It also shows that the alternating iterative estimators share similar but slightly different asymptotics with the PCA-based estimators given in \cite{Bai2003Inferential} and \cite{fan2013large}. One reason is that our estimators rely on computing iterations. The other reason is the absence of an explicit decomposition of $\tilde{\bl}_i-\tilde{\bW}_0^{-1}\bl_i^0$ (or $\tilde{\bbf}_t-\tilde{\bW}_0\bbf_t^0$) in contrast to the eigen-decomposition of the PCA-based estimators. Indeed, the Bahadur representations present the principal correction terms of orders $p^{-1/2}$ and $T^{-1/2}$, but there aren't closed form expression for the $o_p(p^{-1/2})$ and $o_p(T^{-1/2})$ terms.

\textbf{Remark:}
Lemma 5 in the supplementary material demonstrates that the asymptotic results for $\tilde{\bl}_i$ and $\tilde{\bbf}_t$ in Theorem \ref{th1} can be strengthened to
\begin{eqnarray*}
\max_t\left\|\tilde{\bbf}_t-\bW^{(K)}\bbf_t^0\right\|&=&O_p\Big(\frac{\log{p}}{\sqrt{T}}+\frac{1}{\sqrt{p}}\Big),\\
\max_i\left\|\tilde{\bl}_i-(\bW^{(K)})^{-1}\bl_i^0\right\|&=&O_p\Big(\frac{\log{p}}{\sqrt{T}}+\frac{1}{\sqrt{p}}\Big),\\
\max_i\left\|\tilde{\bl}_i^{\prime}\tilde{\bbf}_t-\bl_i^{0\prime}\bbf_t^0\right\|&=&O_p\Big(\frac{\log{p}}{\sqrt{T}}+\frac{1}{\sqrt{p}}\Big),
\end{eqnarray*}
where $\bW^{(K)}$ is defined before Lemma 3 in the supplementary material . However, the rate for the common components are incorrect uniformly in $t$ except for assuming $\max_t\|\bbf_t^0\|=O_p(1)$ as in \cite{Chen2019Quantile}, which is far too restrictive.

{
In the above analysis, we assumed that the true number of factors $r$ is known in advance. However, in practice, $r$ is unknown and should be determined prior to implementing the robust iterative algorithm. In the remainder of this section, we introduce a robust method for determining the number of factors, which is of independent interest.
Our ``Robust Eigenvalue-Ratio" (RER) method is inspired by the ``Eigenvalue-Ratio" (ER) method in \cite{Ahn2013Eigenvalue}. That is,
\begin{equation}\label{equ:rer}
\hat r_{\text{RER}}=\bargmax_{1\le j\le r_{{\rm max}}-1}\left\{\frac{\lambda_j\Big(\tilde \bL(r_{\max})^\prime \tilde\bL(r_{\max})/p\Big)}{\lambda_{j+1}\Big(\tilde \bL(r_{\max})^\prime  \tilde\bL(r_{\max})/p\Big)}\right\},
\end{equation}
where $r_{\max}$ is a predetermined constant larger than $r$ and $\tilde \bL(r_{\max})$ is the estimated factor loading matrix  by the iterative algorithm in Algorithm \ref{alg:first} if we assume the number of factors is $r_\text{max}$. The notation $\lambda_j(\Ab)$ denotes the $j$-th largest eigenvalue of a nonnegative definitive matrix $\Ab$.

To analyze the theoretical properties of the estimator $\hat r_{\text{RER}}$, we assumed the following modification of Assumption {\bf 1'}. Let $\bL(m)$ be the factor loading matrix pretending that there are $m$ columns.

\noindent{\sc Assumption {\bf 1}''} {\it Assumption \ref{ass1} holds and}{
\begin{enumerate}
\item [(1)] \emph{the singular values of $\bL(r_{\max})^{\prime}\bH\bL(r_{\max})/p$ and $\bL(r_{\max})^\prime \bH\bL^0/p$ are bounded away from zero and infinity.}
    \item[(2)] {\it $\max_i\|\bl_i(r_{\max})\|\leq C$ for some generic constant $C$ where $\bl_i(r_{\max})$ is the $i$-th row of $\bL(r_{\max})$.}
\end{enumerate}}

The following theorem shows the property of $\tilde\bl_i(r_{\max})$ from the robust iterative algorithm  with $r_{\max}>r$, where $\tilde\bl_i(r_{\max})$ is the $i$-th row of $\tilde \bL(r_{\max})$.
\begin{theorem}\label{th2}
	Under Assumption {\bf 1''} and Assumptions \ref{ass2}-\ref{ass5}, for $1\le K<\infty$, there exists a positive definite matrices $\bW^{(K)}_{r_{\max}}$ of dimension $r_{\max}\times r_{\max}$ such that
	\[
	 \max_{i}\Big\|\tilde\bl_i(r_{\max})-\bW^{(K)}_{r_{\max}}(\bl_i^{0\prime},\zero^\prime)^\prime\Big\|=o_p(1).
	\]
\end{theorem}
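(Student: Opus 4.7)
The plan is to adapt the inductive proof of Theorem \ref{th1} (spelled out in Lemma 5 of the supplementary material) to the overspecified setting in which Algorithm \ref{alg:first} is run with $r_{\max}>r$ factor dimensions. The central observation is that the true $r$-factor model can be embedded in an $r_{\max}$-factor representation by padding with zero loadings: writing $\bl_i^{0,\text{aug}}=(\bl_i^{0\prime},\zero^\prime)^{\prime}\in\mathbb{R}^{r_{\max}}$, the common component $\bl_i^{0\prime}\bbf_t^{0}$ equals $\bl_i^{0,\text{aug}\prime}(\bbf_t^{0\prime},\bg_t^{\prime})^{\prime}$ for any extension $\bg_t$. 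Assumption {\bf 1''} is precisely the version of Assumption {\bf 1'} needed so that, with $r$ replaced by $r_{\max}$, all normal equations encountered by the algorithm are non-degenerate: the cross-sectional Hessian $p^{-1}\bL(r_{\max})^{\prime}\bH\bL(r_{\max})$ is bounded away from $0$ and $\infty$, and the cross-product with the true loadings $p^{-1}\bL(r_{\max})^{\prime}\bH\bL^{0}$ is well conditioned.

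With this embedding in place, I would run the same induction on the iteration index $k$ that drives Theorem \ref{th1}. At iteration $k$, the factor update is a cross-sectional quantile regression of $y_{it}$ on $\hat\bl_i^{(k-1)}(r_{\max})$, and the loading update is a serial quantile regression of $y_{it}$ on $\hat\bbf_t^{(k)}(r_{\max})$; mimicking the Bahadur-type expansion used in Lemma 5, each step produces an estimate that equals a leading linear transform of the augmented truth plus a remainder that is $o_p(1)$ uniformly in $i$ or $t$. Specifically, after the loading step at iteration $k$ one obtains an identity of the form $\tilde\bl_i^{(k)}(r_{\max})=\bU^{(k)}\bl_i^{0,\text{aug}}+o_p(1)$ uniformly in $i$, where $\bU^{(k)}$ is the inverse of a Gram-type matrix times a cross product built from $\tilde\bF^{(k)}(r_{\max})$ and $\bF^{0,\text{aug}}$; the composition $\bW^{(K)}_{r_{\max}}=\bU^{(K)}$ is invertible by Assumption {\bf 1''}(1), and can be made positive definite without loss of generality by absorbing the orthogonal factor of its polar decomposition into the identifiability rotation that Algorithm \ref{alg:first} applies at each step. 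Because only the qualitative rate $o_p(1)$ is asserted, the quantitative bounds $\log p/\sqrt{T}+1/\sqrt{p}$ obtained in Theorem \ref{th1} are more than sufficient, and the growth restriction of Assumption \ref{ass5} enters only through the uniformity in $i$ and $t$.

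The main obstacle is accommodating the spurious directions that the algorithm is forced to populate because of the normalization $\frac{1}{T}\sum_{t=1}^{T}\tilde\bbf_t(r_{\max})\tilde\bbf_t(r_{\max})^{\prime}=\Ib_{r_{\max}}$: since the true factors span only an $r$-dimensional subspace, $r_{\max}-r$ of the estimated factor directions are artifacts of the normalization, and one must show that the loading-step quantile regression returns coefficients on these directions that are $o_p(1)$ uniformly in $i$. This reduces to bounding the supremum over $i$ of the projection of the residual process $I(\epsilon_{it}\le 0)-\tau$ onto the spurious factor directions, which is handled with the same exponential maximal-inequality machinery used in the proof of Lemma 5, fed by Assumption \ref{ass4} (which gives exponential tails for $H_1$ and $H_2$ in either coordinate) and Assumption \ref{ass5} (which keeps $\log p$ and $\log T$ small relative to $\sqrt{T}$ and $p^{1/4}\log p$). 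Once these spurious coefficients are controlled, the first $r$ coordinates of $\tilde\bl_i(r_{\max})$ after applying $(\bW^{(K)}_{r_{\max}})^{-1}$ recover $\bl_i^{0}$ up to $o_p(1)$ while the remaining coordinates vanish uniformly, which is exactly the stated conclusion.
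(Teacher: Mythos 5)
Your overall strategy coincides with the paper's: embed the rank-$r$ truth in an $r_{\max}$-dimensional representation, rerun the Lemma 1/Lemma 2 machinery with $r$ replaced by $r_{\max}$, and control the spurious directions created by the normalization $\frac{1}{T}\sum_{t}\tilde\bbf_t\tilde\bbf_t^{\prime}=\Ib_{r_{\max}}$. Two points, however, are glossed over, and one of them is a genuine gap. First, the padding convention matters. You pad the true loadings with zeros and the factors with an arbitrary extension. With that convention the rotation produced by the first cross-sectional step, which has the form $\{\sum_ih_i\bl_i^{(0)}\bl_i^{(0)\prime}\}^{-1}\sum_ih_i\bl_i^{(0)}\times(\text{augmented true loading})^{\prime}$, has rank $r<r_{\max}$ and is therefore singular, so you cannot carry an invertible $\bW^{(0)}$ (let alone a positive definite $\bW^{(K)}_{r_{\max}}$) through the argument. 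The paper does the opposite: it augments the loading matrix by a full-rank block $\bL^{-r}$ (whose existence is guaranteed by Assumption {\bf 1''}) and pads the \emph{factors} with zeros, so that the augmented true loading matrix is well conditioned and the first-step rotation is invertible. This is fixable within your framework, but it is not a cosmetic choice.

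Second, and more seriously, your induction hypothesis breaks at $k=2$. After the first loading update, the theorem's own conclusion forces $p^{-1}\sum_i\hat\bl_i^{(1)}\hat\bl_i^{(1)\prime}$ to have $r_{\max}-r$ eigenvalues that are $o_p(1)$, so the cross-sectional design for the second factor update is asymptotically rank deficient and the well-conditioned normal equations you invoke from Assumption {\bf 1''} no longer apply to the \emph{estimated} design. One cannot simply ``run the same induction.'' The paper patches this by rescaling the degenerate columns of $\hat\bL^{(1)}$ by $\beta_j^{-1}$ (its matrix $\bB$) so that the eigenvalue conditions are restored, and then observing that this rescaling leaves the normalized factor scores produced at the next step unchanged, so one may work without loss of generality with a well-conditioned surrogate design at every iteration. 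Some argument of this kind is indispensable for extending the conclusion from $K=1$ to all finite $K$, and it is absent from your proposal. Your treatment of the other degeneracy --- the order-one spurious factor directions created by the normalization, handled by uniform control of the projections of $D_{it}$ onto those directions via the exponential bounds in Assumption \ref{ass4} --- does match the paper's resolution in spirit.
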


Theorem \ref{th2} demonstrates that the leading $r$ eigenvalues of $\lambda_j\Big(\tilde \bL(r_{\max})^\prime \tilde\bL(r_{\max})/p\Big)$ are of order 1 while the remaining $(r_{\max}-r)$ eigenvalues are $o_p(1)$. Thus the eigenvalue ratio in (\ref{equ:rer}) is maximized asymptotically only at $j=r$ and consequently we  have the following theorem.
\begin{theorem}\label{coro:th2}
	Under Assumption {\bf 1''} and Assumptions \ref{ass2}-\ref{ass5},  for the estimator $\hat r_{\text{RER}}$, we have
 $$\Pr \big(\hat r_{\text{RER}}=r\big)\rightarrow 1, \ \ \text{as} \ \  p,T\rightarrow \infty.$$
	\end{theorem}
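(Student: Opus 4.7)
The plan is to combine Theorem~\ref{th2} with Weyl's inequality to pin down the eigenvalue structure of $\tilde{\bL}(r_{\max})^{\prime}\tilde{\bL}(r_{\max})/p$, and then verify that the ratio in (\ref{equ:rer}) is uniquely maximized at $j=r$. Write $\bL^{0}_{\dagger}$ for the $p\times r_{\max}$ matrix with $i$-th row $(\bl_i^{0\prime},\zero^{\prime})$ and set $\bW=\bW^{(K)}_{r_{\max}}$. Theorem~\ref{th2} gives $\tilde{\bL}(r_{\max})=\bL^{0}_{\dagger}\bW^{\prime}+\bm{E}$, where the rows $\bm{e}_i^{\prime}$ of $\bm{E}$ satisfy $\max_i\|\bm{e}_i\|=o_p(1)$. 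Because $\max_i\|\bl_i^{0}\|$ is bounded by Assumption~{\bf 1}'', the row-wise decompositions $\bm{E}^{\prime}\bm{E}=\sum_i\bm{e}_i\bm{e}_i^{\prime}$ and $\bm{E}^{\prime}\bL^{0}_{\dagger}=\sum_i\bm{e}_i(\bl_i^{0\prime},\zero^{\prime})$ deliver $\|\bm{E}^{\prime}\bm{E}/p\|_{\mathrm{op}}=o_p(1)$ and $\|\bm{E}^{\prime}\bL^{0}_{\dagger}\bW^{\prime}/p\|_{\mathrm{op}}=o_p(1)$, so in operator norm
\[
\frac{\tilde{\bL}(r_{\max})^{\prime}\tilde{\bL}(r_{\max})}{p}\;=\;\bW\begin{pmatrix}\bL^{0\prime}\bL^{0}/p & \zero\\ \zero & \zero\end{pmatrix}\bW^{\prime}+o_p(1).
\]

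A preparatory step is to check that $\bW^{(K)}_{r_{\max}}$ is well-conditioned---its singular values stay bounded away from $0$ and $\infty$---which follows from Assumption~{\bf 1}'' by arguments parallel to the derivation of $\tilde{\bW}_0\tilde{\bW}_0^{\prime}=\Ib_r$ behind Theorem~\ref{th1}. Combined with Assumption~\ref{ass1}(1) on $\bL^{0\prime}\bL^{0}/p$, the leading matrix on the right-hand side has rank exactly $r$, with its top $r$ eigenvalues bounded above and below by positive constants and its remaining $r_{\max}-r$ eigenvalues equal to $0$. Weyl's inequality then yields, with probability tending to one,
\[
\lambda_j\big(\tilde{\bL}(r_{\max})^{\prime}\tilde{\bL}(r_{\max})/p\big)\ge c>0\ \text{for } j\le r,\qquad \lambda_j\big(\tilde{\bL}(r_{\max})^{\prime}\tilde{\bL}(r_{\max})/p\big)=o_p(1)\ \text{for } j>r.
\]

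Inserting these into the eigenvalue ratio (\ref{equ:rer}) splits the analysis into three regimes. For $j<r$ both $\lambda_j$ and $\lambda_{j+1}$ are bounded above and below by positive constants in probability, so the ratio is $O_p(1)$. For $j=r$ the numerator stays bounded below by $c$ while the denominator is $o_p(1)$, forcing $\lambda_r/\lambda_{r+1}\to\infty$ in probability. For $j>r$ the ratio is of indeterminate form $o_p(1)/o_p(1)$; this is the main obstacle. I would close this case by sharpening the crude $o_p(1)$ upper bound on the spurious eigenvalues to an explicit operator-norm rate $a_{p,T}\to 0$ for the residual in the first display, and then establishing a matching lower bound $\lambda_{r_{\max}}\ge c\,a_{p,T}$ with probability approaching one. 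The latter requires that the residual have full rank on the $(r_{\max}-r)$-dimensional null complement of the signal, which I would verify by exploiting the explicit form of $\bm{E}$ generated by the alternating quantile-regression updates, in the spirit of the non-degeneracy arguments in \cite{Ahn2013Eigenvalue}. With both sides of order $a_{p,T}$ for $j>r$, every spurious ratio stays $O_p(1)$, whereas $\lambda_r/\lambda_{r+1}$ is of order $1/a_{p,T}$ and diverges, so the argmax over $1\le j\le r_{\max}-1$ is uniquely attained at $j=r$ with probability tending to one, giving $\Pr(\hat r_{\text{RER}}=r)\to 1$.
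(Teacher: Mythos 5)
Your opening steps---writing $\tilde{\bL}(r_{\max})=\bL^{0}_{\dagger}\bW^{\prime}+\bm{E}$ with $\max_i\|\bm{e}_i\|=o_p(1)$ from Theorem~\ref{th2}, controlling $\|\bm{E}^{\prime}\bm{E}/p\|$ and the cross term row by row, and invoking Weyl's inequality to conclude that the top $r$ eigenvalues of $\tilde{\bL}(r_{\max})^{\prime}\tilde{\bL}(r_{\max})/p$ are bounded away from zero while the remaining $r_{\max}-r$ are $o_p(1)$---recover exactly the eigenvalue structure the paper's proof asserts, and they are fine (the paper itself states this conclusion in one sentence after proving Theorem~\ref{th2}). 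The divergence is in what happens next. The paper simply declares that the eigenvalue ratio is therefore maximized asymptotically at $j=r$ and stops. You are right to flag that this is not automatic: for $j>r$ the ratio $\lambda_j/\lambda_{j+1}$ is of the form $o_p(1)/o_p(1)$ and could in principle diverge as fast as, or faster than, $\lambda_r/\lambda_{r+1}$, so without a lower bound on the smallest eigenvalue the argmax is not pinned down. This is a known delicacy of eigenvalue-ratio estimators, and \cite{Ahn2013Eigenvalue} resolve it only by bounding the noise eigenvalues from below using explicit distributional structure.

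However, your proposed repair is a plan rather than a proof. You never actually derive the rate $a_{p,T}$ for the residual, nor the matching lower bound $\lambda_{r_{\max}}\geq c\,a_{p,T}$; the latter requires that $\bm{E}$ be non-degenerate on the $(r_{\max}-r)$-dimensional orthogonal complement of the signal space, and nothing in Theorem~\ref{th2} or Assumptions~{\bf 1}''--\ref{ass5} characterizes the lower spectrum of the error produced by the alternating quantile-regression iterations. So the case $j>r$---which you correctly identify as ``the main obstacle''---remains open in your write-up. To be fair, the paper's own proof has the same hole and passes over it silently; your version is more honest about where the difficulty sits, but as a standalone argument it is incomplete at precisely the step that distinguishes this theorem from Theorem~\ref{th2}.
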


}

\section{Numerical experiments}\label{se}
\subsection{Data generating procedure}

In this section, we introduce the general Data Generating Procedures (DGPs), which are similar as those in the simulation studies of \cite{He2020Large}. In detail,
	\begin{align}\label{align:1}
	 &y_{it}=\sum\limits_{j=1}^{r}l_{ij}f_{jt}+\sqrt{\theta}u_{it},\quad u_{it}=\sqrt{\frac{1-\rho^2}{1+2J\beta^2}}e_{it}, \nonumber \\
&e_{it}=\rho e_{i,t-1}+(1-\beta)w_{it}+\sum_{l={\rm max}\{i-J,1\}}^{{\rm min}\{i+J,p\}}\beta w_{lt}, \ \ i=1,\ldots,p, \ \ t=1,\ldots,T,
	\end{align}
	where $\bw_t=(w_{1t},\ldots,w_{pt})^\top$ are  generated from  different distributions, the loadings $l_{ij}$'s are independently drawn from the standard normal distribution. In model (\ref{align:1}), $\rho$ controls the serial correlations of idiosyncratic errors, $\theta$ controls the signal to noise ratio (SNR), and the parameters $\beta$ and $J$ jointly control the cross-sectional correlations.

\subsection{Estimation of loading spaces, factor spaces and common components}\label{sec:3.2}
In this section, we assess the finite sample performances of the Robust Iterative Estimation Procedure (RIP) in terms of estimating loading spaces, factor spaces and common components. We compare the RIP with the Robust Two-Step (RTS) method proposed by \cite{He2020Large} and  the conventional PCA method. It is worth pointing out that the RTS method assumed that the common factors and idiosyncratic errors are jointly elliptically distributed. We consider the following two scenarios.

\vspace{0.5em}

	\textbf {Scenario A} Set $r=3,\theta=1,\rho=\beta=J=0$. We consider three cases on the joint distribution of $(\bbf_t^{\prime},\bw_t^{\prime})^{\prime}$: (i)  multivariate Gaussian distribution $\mathcal{N}(\zero,\Ib_{p+r})$; (ii) multivariate centralized $t$ distributions $t_{\nu}(\zero,\Ib_{p+r})$ with degree $\nu=3$; (iii) $\bbf_t$'s are generated from multivariate Gaussian distribution $\mathcal{N}(\zero,\Ib_{r})$ while all elements of $\bw_t$ are \emph{i.i.d.} samples from symmetric $\alpha$-stable distribution $S_{\alpha}(\beta,\gamma,\delta)$ with skewness parameter $\beta=0$, scale parameter $\gamma=1$ and location parameter $\delta=0$, $\alpha=1,1.5$. The combinations of  $(p,T)$ are set as $\big\{(150,100),(250,100),(250,150),(250,200)\big\}$.

\vspace{0.5em}

	\textbf {Scenario B} Set $r=3,\theta=0.5,\rho=0.2,\beta=0.2,J=3$. The settings on the joint distribution of $(\bbf_t^{\prime}, \bw_t^{\prime})^{\prime}$ are the same as those in Scenario A.

\vspace{0.5em}

In \textbf {Scenario A} (i) and (ii), the settings correspond to simple cases without any serial correlations of idiosyncratic errors and $(\bbf_t^\prime,\bw_t^\prime)^\prime$ are jointly from elliptical distributions.   $\mathcal{N}(\zero,\Ib_{p+r})$ satisfies the condtions for all three methods, while heavy-tailed $t_{3}(\zero,\Ib_{p+r})$ perfectly satisfies the  assumptions for RTS but not for PCA. In \textbf {Scenario A} (iii), the idiosyncratic errors are generated from $\alpha$-stable distributions which violates the conditions for both RTS and PCA. {In Scenario A, $w_{it}$'s are generated from symmetric distributions such that Assumption \ref{ass2} for RIP is satisfied with $\tau=1/2$. We also consider $\tau=0.75$ for RIP method in Scenario A, and in this case the panel observations $\{y_{it}\}$ are adjusted by $Q_{{3}/{4}}$, the third quartile of $u_{it}$. That's to say, the panel observations are  now \{$y_{it}-\sqrt{\theta}Q_{{3}/{4}}$\} such that Assumption \ref{ass2} for RIP is satisfied with $\tau=3/4$.}
In \textbf{Scenario B}, $(\bbf_t^{\prime},\bw_t^{\prime})^{\prime}$ are generated parallel to \textbf{Scenario A}, but the errors are now serially and cross-sectionally correlated by setting $\rho=0.2,\beta=0.2,J=3$. {We only consider the case $\tau=0.5$ for RIP in Scenario B since the theoretical value of $Q_{3/4}$ is not easy to compute for this case.} In all simulations, five initial values for RIP are tried and the minimum loss solution is left.

\begin{figure}[h]
 \centering
 \begin{minipage}[!t]{0.49\linewidth}
    \includegraphics[width=8.3cm, height=6.3cm]{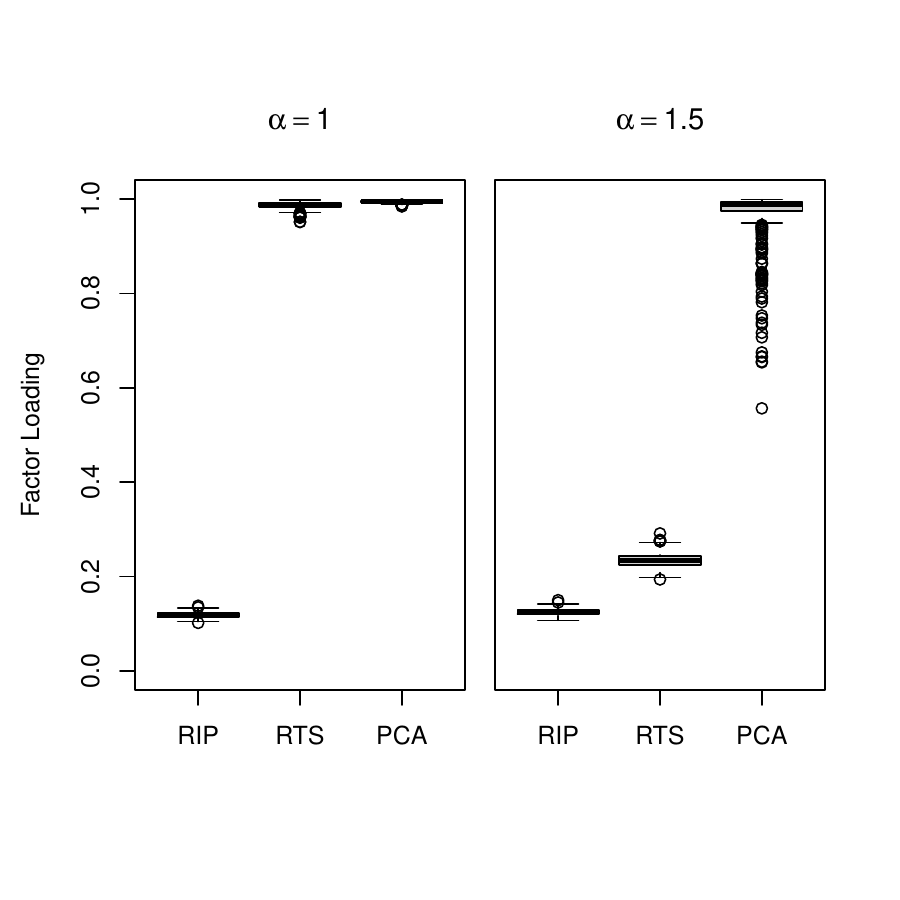}\\
  \end{minipage}
  \begin{minipage}[!t]{0.48\linewidth}
    \includegraphics[width=8.3cm,     height=6.3cm]{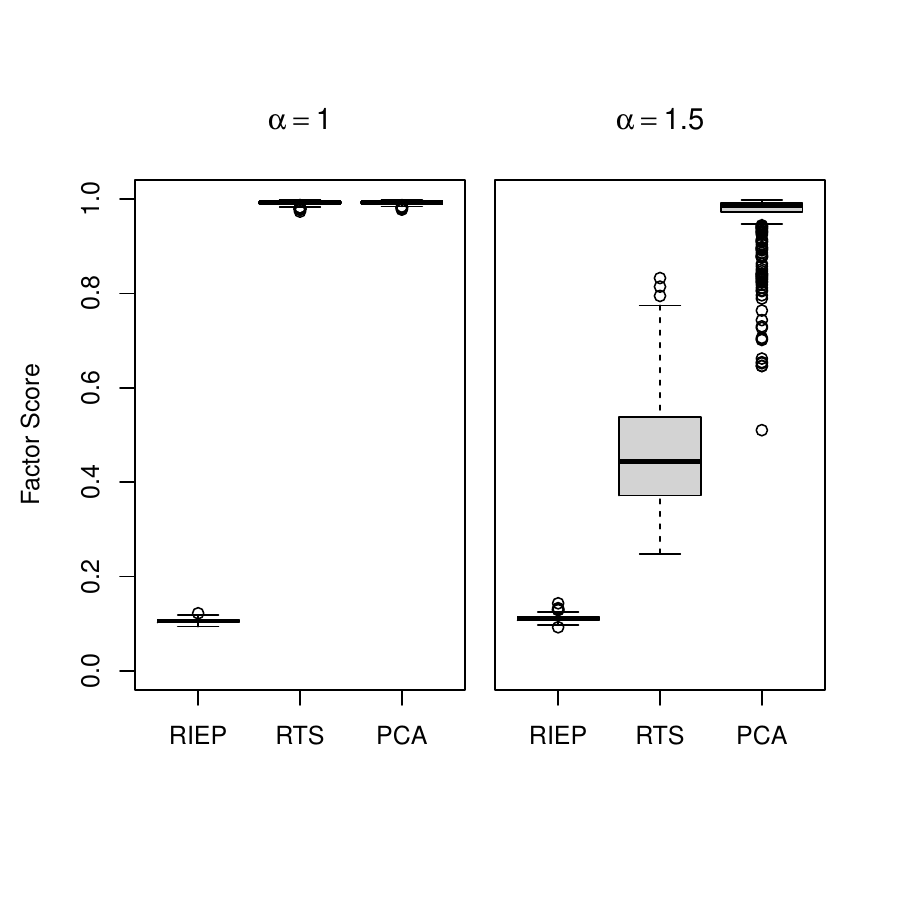}\\
  \end{minipage}
 \caption{Boxplots of the estimation errors of the estimated factor loadings and scores by RIP RTS and PCA methods  under symmetric $\alpha$-Stable distributions in Scenario A (iii) with $\alpha=1,1.5$.   $p=250,T=200$.}\label{fig:1}
 \end{figure}

To evaluate the empirical performances of different methods, we compare the measurement indices in \cite{He2020Large}, that is,  the \textbf{ME}dian of the normalized estimation \textbf{E}rrors for   \textbf{C}ommon \textbf{C}omponents in terms of the matrix Frobenius norm, denoted as MEE-CC; the \textbf{AV}erage estimation \textbf{E}rror for the \textbf{F}actor \textbf{L}oading matrices, denoted as AVE-FL; and the \textbf{AV}erage estimation \textbf{E}rror for the \textbf{F}actor \textbf{S}core matrices, denoted as AVE-FS. In detail,  the AVE-CC, AVE-FL and AVE-FS are defined as
\[
\begin{array}{ccl}
\text{MEE-CC}=\text{median}\left\{\|\hat\bL_m\hat{\bF}_m-\bL^0\bF^{0}\|_F^2/\|\bL^0\bF^{0}\|_F^2,m=1,\ldots,M\right\},\\
\text{AVE-FL}=\sum_{m=1}^M \cD(\hat \bL_m,\bL)/M, \hspace{0.5em} \text{and} \hspace{0.5em}
\text{AVE-FS}=\sum_{m=1}^M \cD(\hat \bF_m,\bF)/M,
\end{array}
\]
where $M$ is the number of replicates, $\hat \bL_m$ and $\hat \bF_m$ are respectively the estimators of the factor loading matrix and factor score matrix from the $m$-th replicate, and for {two column-wise orthogonal matrices} $\bQ_1$ and $\bQ_2$ of sizes $p\times q_1$ and $p\times q_2$,
\[
\cD(\bQ_1,\bQ_2)=\bigg(1-\frac{1}{\max{(q_1,q_2)}}\text{Tr}\Big(\bQ_1\bQ_1^{\prime}\bQ_2\bQ_2^{\prime}\Big)\bigg)^{1/2}.
\]
From the definition of $\cD(\bQ_1,\bQ_2)$, we can easily deduce that it is  a quantity between 0 and 1, which measures the distance between the column spaces of  $\bQ_1$ and $\bQ_2$. $\cD(\bQ_1,\bQ_2)=0$ indicates the column spaces of $\bQ_1$ and $\bQ_2$  are the same, while  $\cD(\bQ_1,\bQ_2)=1$ indicates the column spaces of $\bQ_1$ and $\bQ_2$ are orthogonal. In fact, $\cD(\cdot,\cdot)$  particularly fits to quantify  the accuracy of estimated factor loading/score matrices as they are not separately identifiable. All the simulation results are based on $M=500$ replicates.

\begin{table}[!h]
  \caption{Simulation results for Scenario A, the values in the parentheses are the interquartile ranges for MEE-CC and standard deviations for AVE-FL and AVE-FS.}
  \label{tab:1}
  \renewcommand{\arraystretch}{1}
  \centering
  \selectfont
  \begin{threeparttable}
   \scalebox{0.75}{\begin{tabular*}{19.8cm}{ccccccccccccccccccccccccccccc}
\toprule[2pt]
&\multirow{2}{*}{Type}&\multirow{2}{*}{Method}  &\multicolumn{3}{c}{$(p,T)=(150,100)$}&\multicolumn{3}{c}{$(p,T)=(250,100)$} \cr
\cmidrule(lr){4-6} \cmidrule(lr){7-9}
&&                 &$\text{MEE\_CC}$     &$\text{AVE\_FL}$      &$\text{AVE\_FS}$      &$\text{MEE\_CC}$      &$\text{AVE\_FL}$      &$\text{AVE\_FS}$  \\
\midrule[1pt]
&$\mathcal{N}(\zero,\Ib_{p+m})$   &RIP($\tau=0.5$)    &0.03(0.00)	&0.13(0.01)	&0.10(0.01)	 &0.02(0.00)	 &0.13(0.01)	&0.08(0.01) \\
&                               &RIP($\tau=0.75$)    &0.03(0.01)	    &0.14(0.02)	&0.11(0.01) &0.03(0.00) &0.14(0.01)	 &0.09(0.01)	\\
&                                 &RTS     &0.02(0.00)	&0.11(0.01)	&0.08(0.01)	 &0.01(0.00)	 &0.11(0.01)	&0.06(0.00)  \\
&                                 &PCA     &0.02(0.00)	&0.10(0.01)	&0.08(0.01)	 &0.01(0.00)	 &0.10(0.01)	&0.06(0.00)  \\
\cmidrule(lr){4-9}
&$t_{3}(\zero,\Ib_{p+m})$         &RIP($\tau=0.5$)    &0.03(0.01)	&0.16(0.03)	&0.11(0.02)	 &0.03(0.01)	 &0.16(0.02)	&0.10(0.03) \\
&                               &RIP($\tau=0.75$)    &0.05(0.02)	    &0.18(0.04)	 &0.12(0.03) &0.04(0.01)	 &0.18(0.04)	&0.11(0.01) \\
&                                 &RTS     &0.02(0.00)	&0.12(0.01)	&0.08(0.01)	 &0.02(0.00)	 &0.11(0.01)	&0.07(0.01)  \\
&                                 &PCA     &0.04(0.03)	&0.20(0.06)	&0.10(0.03)	 &0.04(0.03)	 &0.20(0.06)	&0.09(0.05)  \\

\cmidrule(lr){4-9}
&$S_{1}(0,1,0)$                 &RIP($\tau=0.5$)     &0.05(0.01)	    &0.18(0.01)	 &0.14(0.01) &0.04(0.01)	 &0.17(0.01)	&0.11(0.01) \\
&                               &RIP($\tau=0.75$)    &0.12(0.02)	    &0.26(0.02)	 &0.22(0.02) &0.10(0.01)	 &0.26(0.02)	&0.17(0.03) \\
&                               &RTS     &661.3(3510.92)	&0.98(0.01)	 &0.98(0.01) &894.64(3855.27)  	 &0.99(0.01)	&0.98(0.01)  \\
&                               &PCA     &6404.27(43169.89)	&0.99(0.00)	 &0.98(0.01) &11876.97(64758.55)	 &0.99(0.00)	&0.99(0.01)  \\
\cmidrule(lr){4-9}
&$S_{1.5}(0,1,0)$                &RIP($\tau=0.5$)    &0.05(0.01)	 &0.18(0.01)	&0.15(0.01)	 &0.04(0.01)     &0.18(0.01)	&0.11(0.01) \\
&                               &RIP($\tau=0.75$)    &0.08(0.01)	    &0.22(0.01)	 &0.18(0.01) &0.06(0.01)	 &0.21(0.01)	&0.14(0.01) \\
&                                 &RTS   &0.51(0.69)	 &0.32(0.03)	&0.51(0.11)	 &0.43(0.53)   &0.33(0.03)	&0.46(0.13)  \\
&                                 &PCA   &8.21(15.96)	 &0.92(0.08)	&0.91(0.09)	 &9.15(20.44)	&0.94(0.08)	&0.93(0.09)  \\
\midrule[1pt]
&\multirow{2}{*}{Type}&\multirow{2}{*}{Method}  &\multicolumn{3}{c}{$(p,T)=(250,150)$}&\multicolumn{3}{c}{$(p,T)=(250,200)$} \cr
\cmidrule(lr){4-6} \cmidrule(lr){7-9}
&&                   &$\text{MEE\_CC}$     &$\text{AVE\_FL}$      &$\text{AVE\_FS}$      &$\text{MEE\_CC}$      &$\text{AVE\_FL}$      &$\text{AVE\_FS}$  \\
\midrule[1pt]
&$\mathcal{N}(\zero,\Ib_{p+m})$   &RIP($\tau=0.5$)   &0.02(0.00)	&0.10(0.01)	&0.08(0.00)	 &0.01(0.00)	 &0.09(0.00)	&0.08(0.00) \\
&                               &RIP($\tau=0.75$)    &0.02(0.00)	    &0.11(0.01)	 &0.09(0.00) &0.02(0.00)	 &0.10(0.05)	&0.09(0.05) \\
&                                 &RTS     &0.01(0.00)	&0.08(0.00)	&0.06(0.00)	 &0.01(0.00)	 &0.07(0.00)	&0.06(0.00)  \\
&                                 &PCA     &0.01(0.00)	&0.08(0.00)	&0.06(0.00)	 &0.01(0.00)	 &0.07(0.00)	&0.06(0.00)  \\
\cmidrule(lr){4-9}
&$t_{3}(\zero,\Ib_{p+m})$         &RIP($\tau=0.5$)    &0.02(0.00)	&0.13(0.02)	&0.08(0.01)	 &0.02(0.00)	 &0.11(0.01)	&0.08(0.01) \\
&                               &RIP($\tau=0.75$)    &0.03(0.01)	    &0.15(0.03)	 &0.09(0.03) &0.02(0.01)	 &0.13(0.04)	&0.09(0.04) \\
&                                 &RTS     &0.01(0.00)	&0.09(0.00)	&0.06(0.01)	 &0.01(0.00)	 &0.08(0.00)	&0.06(0.01)  \\
&                                 &PCA     &0.03(0.02)	&0.17(0.05)	&0.08(0.03)	 &0.03(0.02)	 &0.16(0.05)	&0.08(0.02)  \\

\cmidrule(lr){4-9}
&$S_{1}(0,1,0)$                 &RIP($\tau=0.5$)    &0.03(0.00)	        &0.14(0.01)	 &0.11(0.01)	 &0.03(0.00)	    &0.12(0.01)	&0.11(0.01)  \\
&                               &RIP($\tau=0.75$)    &0.07(0.01)	    &0.21(0.03)	 &0.17(0.03) &0.06(0.01)	 &0.19(0.03)	&0.17(0.03) \\
&                               &RTS    &767.12(2878.55)	&0.99(0.01)	 &0.99(0.00)     &1058.58(4220.23)	 &0.99(0.01)	&0.99(0.00)  \\
&                               &PCA    &14874.71(75287.90)	&0.99(0.00)	 &0.99(0.00)	 &20822.62(104666.86)	 &0.99(0.00) &0.99(0.00)  \\
\cmidrule(lr){4-9}
&$S_{1.5}(0,1,0)$                &RIP($\tau=0.5$)   &0.03(0.00) &0.14(0.01)	&0.11(0.01)	       &0.03(0.00)  &0.13(0.01)	&0.11(0.01) \\
&                               &RIP($\tau=0.75$)    &0.05(0.01)	    &0.17(0.01)	 &0.13(0.01) &0.04(0.01)	 &0.15(0.01)	&0.13(0.01) \\
&                                &RTS   &0.37(0.45) &0.27(0.02)	&0.46(0.12)  &0.34(0.40) &0.23(0.01)	 &0.46(0.12)  \\
&                                &PCA   &11.31(23.41) &0.95(0.07) &0.95(0.07) 	 &11.75(25.33)	&0.97(0.06)	 &0.96(0.06)  \\
\bottomrule[2pt]
  \end{tabular*}}
  \end{threeparttable}
\end{table}

\begin{table}[!h]
  \caption{Simulation results for Scenario B, the values in the parentheses are the interquartile ranges for MEE-CC and standard deviations for AVE-FL and AVE-FS.}
  \label{tab:2}
  \renewcommand{\arraystretch}{1}
  \centering
  \selectfont
  \begin{threeparttable}
   \scalebox{0.77}{ \begin{tabular*}{18.5cm}{ccccccccccccccccccccccccccccc}
\toprule[2pt]
&\multirow{2}{*}{Type}&\multirow{2}{*}{Method}  &\multicolumn{3}{c}{$(p,T)=(150,100)$}&\multicolumn{3}{c}{$(p,T)=(250,100)$} \cr
\cmidrule(lr){4-6} \cmidrule(lr){7-9}
&&                 &$\text{MEE\_CC}$     &$\text{AVE\_FL}$      &$\text{AVE\_FS}$      &$\text{MEE\_CC}$      &$\text{AVE\_FL}$      &$\text{AVE\_FS}$  \\
\midrule[1pt]
&$\mathcal{N}(\zero,\Ib_{p+m})$   &RIP     &0.01(0.00)	&0.09(0.01)	&0.07(0.01)	 &0.01(0.00)	 &0.09(0.01)	&0.06(0.00)\\
&                                 &RTS      &0.01(0.00)	&0.08(0.01)	&0.06(0.00)	 &0.01(0.00)	 &0.07(0.00)	&0.05(0.00) \\
&                                 &PCA      &0.01(0.00)	&0.07(0.01)	&0.06(0.00)	 &0.01(0.00)	 &0.07(0.00)	&0.05(0.00) \\
\cmidrule(lr){4-9}
&$t_{3}(\zero,\Ib_{p+m})$         &RIP     &0.02(0.00)	&0.11(0.02)	&0.08(0.01)	 &0.01(0.00)	 &0.11(0.02)	&0.06(0.01)   \\
&                                 &RTS      &0.01(0.00)	&0.08(0.01)	&0.06(0.01)	 &0.01(0.00)	 &0.08(0.01)	&0.05(0.01)   \\
&                                 &PCA      &0.02(0.01)	&0.14(0.05)	&0.07(0.03)	 &0.02(0.01)	 &0.14(0.05)	&0.05(0.03)   \\
\cmidrule(lr){4-9}
&$S_{1}(0,1,0)$                   &RIP     &1.45(181.64)	&0.51(0.15)	&0.48(0.17)	 &0.17(0.05)	&0.35(0.09)	&0.26(0.11)   \\
&                                 &RTS      &2293.52(18769.22) 	&0.99(0.01)	&0.98(0.01)	 &4349.65(28488.78) 	&0.99(0.00)	&0.98(0.01)   \\
&                                 &PCA      &3202.13(21579.74)	&0.99(0.00)	&0.98(0.01)	 &5950.76(32357.06) 	&0.99(0.00)	&0.99(0.01)   \\
\cmidrule(lr){4-9}
&$S_{1.5}(0,1,0)$                 &RIP     &0.04(0.01)	&0.16(0.01)	&0.13(0.01)	  &0.03(0.01)	 &0.16(0.01)	&0.10(0.01)\\
&                                 &RTS      &0.34(0.54)	&0.28(0.05)	&0.45(0.13)	 &0.26(0.51)	 &0.28(0.04)	&0.41(0.15)\\
&                                 &PCA      &4.40(8.39)	&0.83(0.14)	&0.82(0.14)	 &4.92(10.51)	 &0.86(0.14)	&0.84(0.15)\\
\midrule[1pt]
&\multirow{2}{*}{Type}&\multirow{2}{*}{Method}  &\multicolumn{3}{c}{$(p,T)=(250,150)$}&\multicolumn{3}{c}{$(p,T)=(250,200)$} \cr
\cmidrule(lr){4-6} \cmidrule(lr){7-9}
&&                   &$\text{MEE\_CC}$     &$\text{AVE\_FL}$      &$\text{AVE\_FS}$      &$\text{MEE\_CC}$      &$\text{AVE\_FL}$      &$\text{AVE\_FS}$  \\
\midrule[1pt]
&$\mathcal{N}(\zero,\Ib_{p+m})$   &RIP     &0.01(0.00)	&0.07(0.00)	&0.06(0.00)	 &0.01(0.00)	 &0.06(0.00)	&0.06(0.00) \\
&                                 &RTS      &0.01(0.00)	&0.06(0.00)	&0.05(0.00)	 &0.00(0.00)	 &0.05(0.00)	&0.05(0.00) \\
&                                 &PCA      &0.01(0.00)	&0.06(0.00)	&0.05(0.00)	 &0.00(0.00)	 &0.05(0.00)	&0.05(0.00) \\
\cmidrule(lr){4-9}
&$t_{3}(\zero,\Ib_{p+m})$         &RIP     &0.01(0.00)	&0.09(0.01)	&0.06(0.01)	 &0.01(0.00)	 &0.08(0.01)	&0.06(0.01)   \\
&                                 &RTS      &0.01(0.00)	&0.07(0.00)	&0.05(0.01)	 &0.01(0.00)	 &0.06(0.00)	&0.05(0.00)   \\
&                                 &PCA      &0.01(0.01)	&0.12(0.03)	&0.05(0.01)	 &0.01(0.01)	 &0.11(0.04)	&0.05(0.01)   \\
\cmidrule(lr){4-9}
&$S_{1}(0,1,0) $                  &RIP     &0.11(0.01)	&0.26(0.05)	&0.22(0.05)	 &0.09(0.02)	 &0.22(0.03)	&0.21(0.03)   \\
&                                 &RTS     &5256.14(31351.96) 	&0.99(0.00)	&0.99(0.00)	 & 6594.49(41148.65)	&0.99(0.00)	&0.99(0.00)   \\
&                                 &PCA     &7429.45(37392.76)	&0.99(0.00)	&0.99(0.00)	 &10413.65(51591.05)	&0.99(0.00)	&0.99(0.00)   \\
\cmidrule(lr){4-9}
&$S_{1.5}(0,1,0)$                 &RIP     	 &0.03(0.00)	 &0.13(0.01)	&0.10(0.01) &0.02(0.00)	 &0.11(0.01)	&0.10(0.01) \\
&                                 &RTS     &0.23(0.37)	&0.22(0.02)	&0.40(0.13) 	 &0.20(0.31)	&0.19(0.02)	&0.39(0.13) \\
&                                 &PCA      &5.99(11.94)	&0.88(0.13)	&0.87(0.14)	 &6.27(12.92)	&0.90(0.12)	&0.90(0.12) \\
\bottomrule[2pt]
  \end{tabular*}}
  \end{threeparttable}
\end{table}

The simulation results for Scenario A and  Scenario B  are reported in Table \ref{tab:1} and Table \ref{tab:2}, respectively. From Table \ref{tab:1}, we see that for multivariate Gaussian case in Scenario A (i), all three methods perform very well while the RIP seems a bit worse. This is expected as in the regression that least absolute regression is less efficient than least square regression when errors are normal. For multivariate $t$ distribution with degree of freedom 3 in Scenario A (ii), RTS performs the best as the elliptical assumption is satisfied. The RIP performs satisfactorily though not as well as RTS. The PCA is the worst, which reflects the effect of the non-existence of the forth moment. The advantages of the proposed RIP are well illustrated in Scenario A (iii), where the errors are from symmetric $\alpha$-stable distribution. Figure \ref{fig:1} shows the boxplots of the estimation errors of the estimated factor loadings and scores by RIP, RTS and PCA methods over 500 replications, with  $\alpha=1$, $\alpha=1.5$ and $p=250,T=200$. From Figure \ref{fig:1}, we see that the RIP performs very well while the RTS and PCA totally lose power. This is expected since neither the elliptical assumption nor the forth moment condition is satisfied.
From Table \ref{tab:1}, it can also be concluded that the performances of the RIP tend to be better as $T$ and/or $p$ increase which is consistent with the theoretical results. In summary, RIP is quite stable, but RTS and PCA become worse substantially as the tail becomes thicker. {At last, we see that the RIP performs comparably for $\tau=0.5$ and $\tau=0.75$}.

Next, we turn to Scenario B when both cross-sectional and serial correlations are present. The superiority of the RIP over the RTS and PCA is clearly illustrated when the idiosyncratic errors are from $\alpha$-stable distribution. For $\alpha=1$, when $T,p$ are small, the RIP does not perform  well, though far much better than the RTS and PCA. As $T, p$ grow large, the performance of RIP boosts, while RTS and PCA still does not work. When $\alpha=1.5$, $(T,p)=(100,150)$ is enough to guarantee the good performance of RIP, while even when $T=200$ and $p=250$ the RTS and PCA still fall far behind. In summary, the proposed RIP method performs robustly in both light-tailed and heavy-tailed settings.

\subsection{Selection of the number of factors}

In this section, we assess the finite sample performance of the proposed ``Robust Eigenvalue-Ratio" method (RER) for factor number selection. We compare our RER method with the ``Eigenvalue-Ratio" (ER) method in \cite{Ahn2013Eigenvalue}, the ``Multivariate-Kendall's tau-Eigenvalue-Ratio" (MKER) method in \cite{yu2019robust} and  the classical ``Information Criteria" (IC) method in \cite{Bai2002Determining}. To evaluate the empirical performance of different methods, we consider the following scenario.

\begin{table}[!h]
\caption{Simulation results in the form $x(y|z)$ for Scenario C, $x$ is the sample mean of the estimated factor numbers based on 200 replications, $y$ and $z$ are the numbers of underestimation and overestimation, respectively.}\label{tab:3}
  \renewcommand{\arraystretch}{1}
  \centering
  \selectfont
  \begin{threeparttable}
   \scalebox{0.85}{ \begin{tabular*}{18cm}{ccccccccccccccccccccccccccccc}
\toprule[2pt]
Type&$p$&$T$&$r$&&RER($\tau=0.5$)&RER($\tau=0.75$)&IC&ER&MKER\\\hline
       	    \multirow{5}*{$\mathcal{N}(\zero,\Ib_{p+m})$}&50&50&3&&3.000(0$|$0)&3.000(0$|$0)&3.000(0$|$0)&3.000(0$|$0)&3.000(0$|$0)\\
       	    &100&100&3&&3.000(0$|$0)&3.000(0$|$0)&3.000(0$|$0)&3.000(0$|$0)&3.000(0$|$0)\\
       	    &150&150&3&&3.000(0$|$0)&3.000(0$|$0)&3.000(0$|$0)&3.000(0$|$0)&3.000(0$|$0)\\
       	    &200&200&3&&3.000(0$|$0)&3.000(0$|$0)&3.000(0$|$0)&3.000(0$|$0)&3.000(0$|$0)\\
       		\hline
       	    \multirow{5}*{$t_{3}(\zero,\Ib_{p+m})$}&50&50&3&&2.800(30$|$8)&2.760(33$|$6)&5.565(0$|$184)&2.770(31$|$7)&3.000(0$|$0)\\
       	    &100&100&3&&3.055(4$|$15)&3.005(7$|$12)&5.860(0$|$192)&3.030(6$|$13)&3.000(0$|$0)\\
       	    &150&150&3&&2.980(8$|$10)&2.935(9$|$3)&6.210(0$|$194)&2.960(9$|$9)&3.000(0$|$0)\\
       	    &200&200&3&&3.020(4$|$11)&2.980(4$|$3)&6.440(0$|$196)&3.005(4$|$8)&3.000(0$|$0)\\
       		\hline
       	    \multirow{5}*{$t_2$}&50&50&3&&2.975(5$|$1)&2.925(26$|$17)&4.935(5$|$159)&2.435(114$|$48)&2.600(60$|$2)\\
       	    &100&100&3&&3.000(0$|$0)&3.000(0$|$0)&5.640(0$|$189)&2.990(93$|$79)&2.980(4$|$1)\\
       	    &150&150&3&&3.000(0$|$0)&3.000(0$|$0)&6.065(0$|$194)&3.415(71$|$100)&3.000(0$|$0)\\
       	    &200&200&3&&3.000(0$|$0)&3.000(0$|$0)&6.555(0$|$197)&3.485(76$|$100)&3.000(0$|$0)\\
       		\hline
            \multirow{5}*{$S_{1}(0,1,0)$}&50&50&3&&2.500(114$|$47)&1.735(169$|$16)&7.790(1$|$198)&2.050(142$|$25)&2.280(137$|$39)\\
       	    &100&100&3&&2.990(2$|$1)&2.885(83$|$90)&8.000(0$|$200)&2.180(144$|$36)&2.325(129$|$36)\\
       	    &150&150&3&&2.990(1$|$0)&3.155(11$|$48)&8.000(0$|$200)&2.130(147$|$32)&2.285(144$|$38)\\
       	    &200&200&3&&3.000(0$|$0)&3.010(3$|$8)&8.000(0$|$200)&2.210(134$|$33)&2.200(151$|$35)\\
       		\hline
       	    \multirow{5}*{$S_{1.5}(0,1,0)$ }&50&50&3&&2.900(16$|$0)&2.725(45$|$10)&6.425(4$|$184)&2.030(155$|$28)&2.365(96$|$10)\\
       	    &100&100&3&&3.000(0$|$0)&3.000(0$|$0)&7.485(0$|$199)&2.085(146$|$29)&2.940(14$|$6)\\
       	    &150&150&3&&3.000(0$|$0)&3.000(0$|$0)&7.745(0$|$200)&1.805(164$|$17)&3.020(0$|$3)\\
       	    &200&200&3&&3.000(0$|$0)&3.000(0$|$0)&7.950(0$|$200)&1.885(158$|$22)&3.000(0$|$0)\\
\bottomrule[2pt]
  \end{tabular*}}
  \end{threeparttable}
\end{table}

\textbf {Scenario C}  Set $r=3,\theta=1,\rho=\beta=J=0$. We consider three cases on the joint distribution of $(\bbf_t^{\prime},\bw_t^{\prime})^{\prime}$: (i)  multivariate Gaussian distribution $\mathcal{N}(\zero,\Ib_{p+r})$; (ii) multivariate centralized $t$ distributions $t_{\nu}(\zero,\Ib_{p+r})$ with degree $\nu=3$; (iii)  $\bbf_t$'s are generated from multivariate Gaussian distribution $\mathcal{N}(\zero,\Ib_{r})$ while all elements of $\bw_t$ are \emph{i.i.d.} samples from $t_2$ distribution. (iv) $\bbf_t$'s are generated from multivariate Gaussian distribution $\mathcal{N}(\zero,\Ib_{r})$ while all elements of $\bw_t$ are \emph{i.i.d.} samples from symmetric $\alpha$-Stable distribution $S_{\alpha}(\beta,\gamma,\delta)$ with skewness parameter $\beta=0$, scale parameter $\gamma=1$ and location parameter $\delta=0$, $\alpha=1$ and $\alpha=1.5$. $(p,T)=\big\{(50,50),(100,100),(150,150),(200,200)\big\}$.
\vspace{0.5em}

{In Scenario C, we consider the cases $\tau=0.5$ and $\tau=0.75$ for RER method. For the case $\tau=0.75$, the panel observations are similarly adjusted as we did in Section \ref{sec:3.2}.}
In Table \ref{tab:3}, we show the simulation results  in the form $x(y|z)$,  where  $x$ is the sample mean of the estimated factor numbers over 200 replications, while $y$ and $z$ are the numbers of underestimation and overestimation, respectively. {Firstly, we can see that the RER method performs comparably for the cases $\tau=0.5$ and $\tau=0.75$, and thus we simply refer to RER ($\tau=0.5$) and RER($\tau=0.75$) as RER hereafter.}
For the light-tailed Gaussian case in Scenario C (i), all methods perform very well and $T=p=50$ is sufficient for guaranteeing a satisfactory performance. For the heavy-tailed cases in Scenario C (ii), (iii) and (iv), the classical IC method always overestimates the factor number by a large margin. The performances of ER method is barely satisfactory for $\alpha$-stable idiosyncratic errors, and it underestimates the factor numbers by a margin even $p=T=200$. It seems that the MKER and RER methods  perform the best in the heavy-tailed cases. For Scenario C (ii), the factors and the idiosyncratic errors are jointly $t_3$ distributed, thus MKER performs the best as it's specifically designed for this setting.
For Scenario C  (iii) and (iv), the factors are from Gaussian and the idiosyncratic errors are either from $t_2$ distribution or $\alpha$-stable distribution, the RER performs satisfactorily. It can be seen that as $p,T$ grow, the estimates by RER converge to the true factor numbers. Noticeably, the RER method performs much better than MKER for Scenario C (iv), especially when $\alpha=1$.

\section{An empirical study}\label{em}

We collected  the weekly share returns of the Standard $\&$ Poor 100 companies during the period between January 1st, 2018 and December 31st, 2019. The data set is available at \texttt{https://github.com/heyongstat/RIP}. By preliminary time series analysis techniques such as the Augmented Dickey-Fuller tests and sample auto-correlation functions, we found that all weekly return series are stationary. The kurtosis of a portion of series are much larger than 9, the theoretical kurtosis of $t_5$ distribution.
We centralized the log returns for further analysis.


We first consider the out-of-sample performance of the PCA, RTS and RIP, which are motivated by \cite{Kelly2019Characteristics}. At the end of each month $t$, the latest 52 weekly returns on and before $t$ are selected to train the factor model. The number of factors is recursively estimated in each rolling manipulation. For the PCA, RTS and RIP,  we estimate the number of factors by ``ER", ``MKER" and ``RER", respectively. With the estimated number of factors, we separately use the PCA, RTS and RIP to estimate the factor loading matrix and obtain $\hat{\bL}_t=(\hat{\bl}_{1,t},\ldots,\hat{\bl}_{p,t})^T$.
 To estimate the factors at time point $t+1$, we consider  the following cross-section regression model,
 \[
 y_{j,t+1}=\hat{\bl}_{j,t}^T\bbf_{t+1}+\eta_{j,t+1},\quad j=1,\ldots,100, \ t=1,..., 105,
 \]
 where $y_{j,t}$ is the centralized log return of company $j$ at week $t$, and $\eta_{jt}$ is the random error term. For the PCA and RTS, $\hat{\bbf}_{t+1}$ is obtained by the least square estimation, while for the RIP, it is obtained by least absolute regression. The ``Square Total $R^2$" is defined as
\[
\text{Square Total} \  R^2=1-\sum_{j=1}^{100}\sum_{t=52}^{104}(y_{j,t+1}-\hat{\bl}_{j,t}^T\hat{\bbf}_{t+1})^2(\sum_{j=1}^{100}\sum_{t=52}^{104}y_{j,t+1}^2)^{-1},
\]
and the "Absolute Total $R^2$" is defined as
\[
\text{Absolute Total} \  R^2=1-\sum_{j=1}^{100}\sum_{t=52}^{104}|y_{j,t+1}-\hat{\bl}_{j,t}^T\hat{\bbf}_{t+1}|(\sum_{j=1}^{100}\sum_{t=52}^{104}|y_{j,t+1}|)^{-1}.
\]

Our computation shows that the ``Square Total $R^2$"s for the PCA, RTS and RIP are $0.251$,  $0.255$ and  $0.243$, respectively. They are comparable.
The ``Absolute Total $R^2$"s for the PCA, RTS and RIP are 0.190,  0.192 and  0.197 respectively, which are also more or less the same. This concludes that as a safe replacement, the RIP achieves similar prediction power as the PCA and RTS.

We next compare the annual return of the year 2019 by constructing risk-minimization portfolios. In detail, under the framework of elliptical distribution,
denote the true scatter matrix of the returns by $\bSigma$, then the optimal risk-minimization portfolio weights are $\bm{\omega}_{opt}=\bSigma^{-1}\bm{1}/(\bm{1}^T\bSigma^{-1}\bm{1})$, where $\bm{1}$ is a vector of ones, see for example \cite{Chamberlain1983} and \cite{Owen2012on}. At week $t$, the  data of past 52 weeks are used to train the AFM. We denote the estimated common components and
idiosyncratic errors  as $\widehat{\mathbf{\mathcal{X}}}_t$ and $\widehat{\mathbf{\mathcal{E}}}_t$, respectively. We empirically estimate the scatter matrix as
\[
\widehat{\bSigma}_t=\frac{1}{52}\widehat{\mathbf{\mathcal{X}}}_t^T\widehat{\mathbf{\mathcal{X}}}_t+\text{HarTh}(\frac{1}{52}\widehat{\mathbf{\mathcal{E}}}_t^T\widehat{\mathbf{\mathcal{E}}}_t),
\]
where $\text{HarTh}(\cdot)$ is the hardthresholding operator defined in \cite{Beckel2008Cov} simply to guarantee the invertibility of $\widehat{\bSigma}_t$.

The portfolio weights are thus specified as {$\widehat{\bm{\omega}}_t=\widehat{\bSigma}^{-1}_t\bm{1}/(\bm{1}^T\widehat{\bSigma}^{-1}_t\bm{1})$}
and the return of the risk-minimization portfolio strategy at week $t$ is  $\widehat{\bm{\omega}}^T_t {\bm x}_t$, where ${\bm x}_t$  is composed of the corresponding returns at week $t$. In the left panel of Figure \ref{fig:combine}, we display the net value curves of the risk-minimization portfolios during the year 2019. It shows that the empirical RIP portfolio leads to the highest annual return, and that RTS takes the second place while PCA lies at the bottom.

\begin{figure}[h]
 \centering
 \begin{minipage}[!t]{0.48\linewidth}
    \includegraphics[width=1\textwidth]{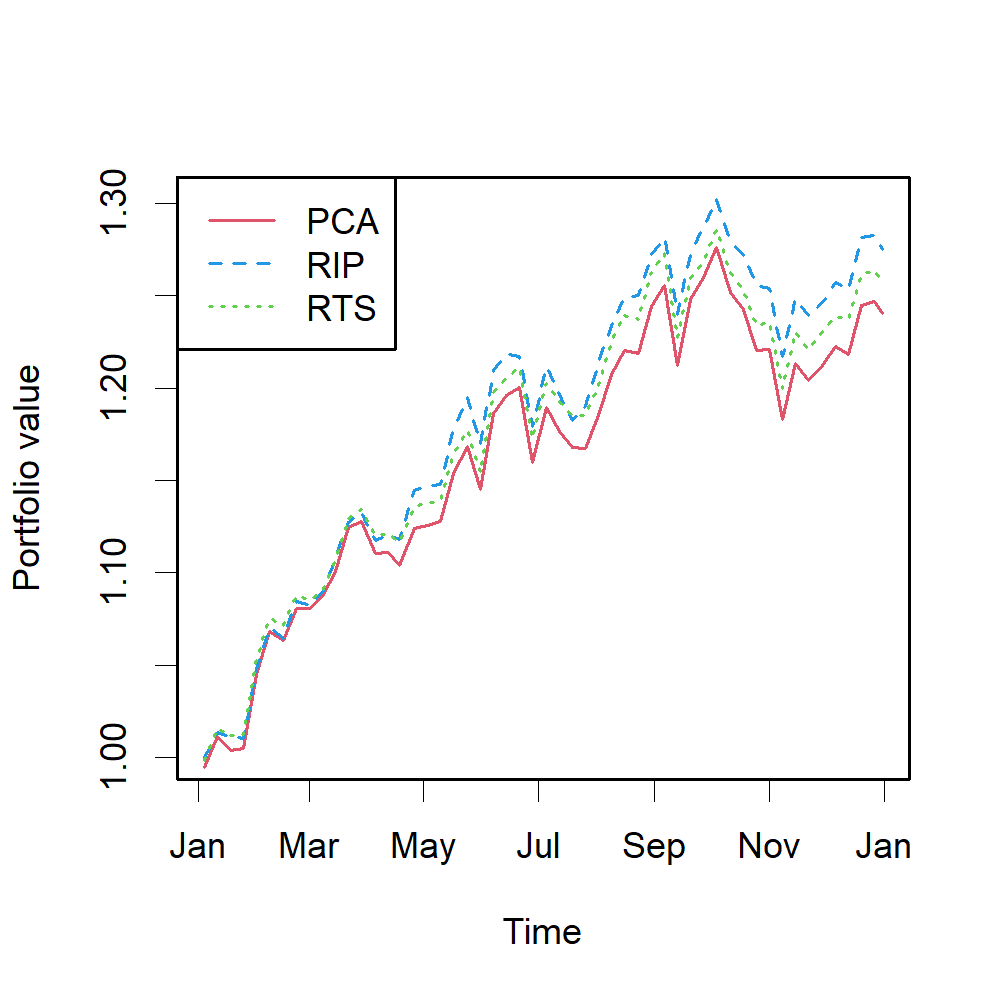}\\
  \end{minipage}
  \begin{minipage}[!t]{0.45\linewidth}
    \includegraphics[width=1\textwidth]{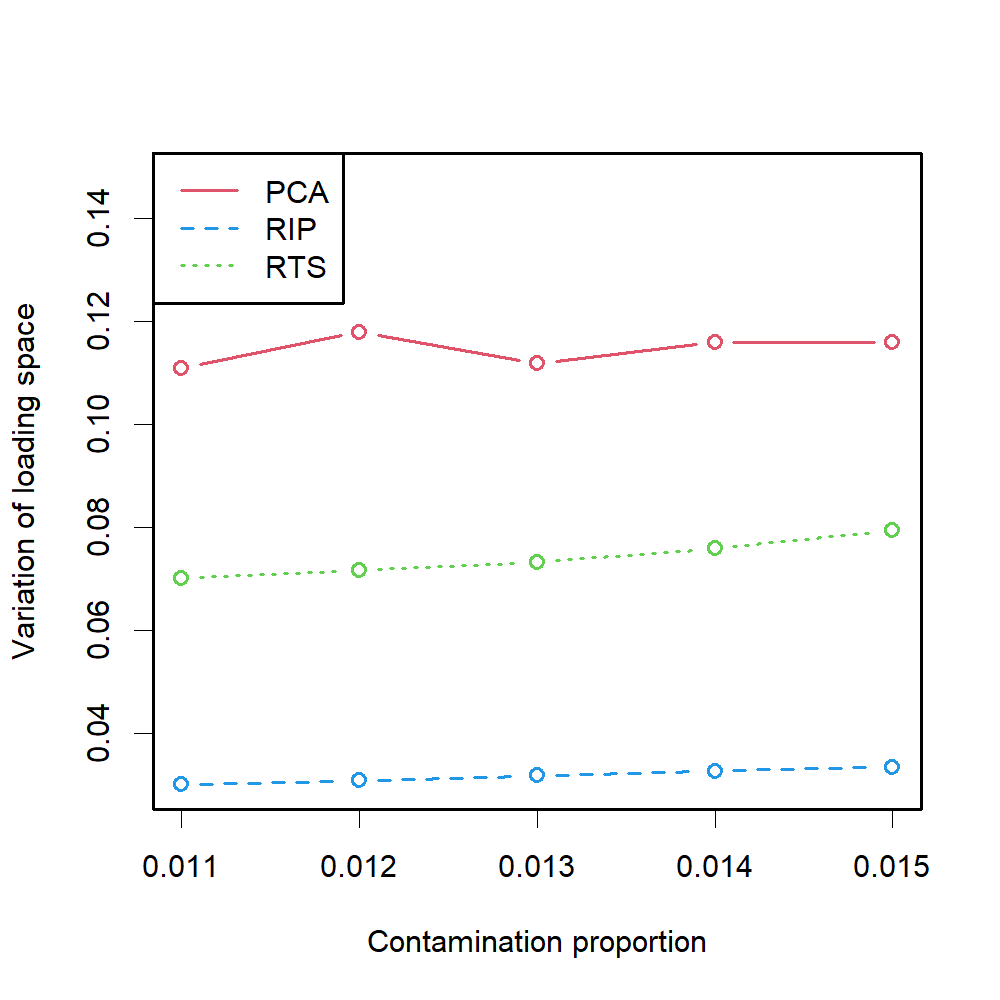}\\
  \end{minipage}
 \caption{The left panel shows net value curves of empirical portfolios. The right panel shows the average variation of estimated loading space in $100$ replications with growing proportion of outliers.}\label{fig:combine}
 \end{figure}

To investigate the robustness of various methods, we assess their sensitivity to outliers. All methods estimate $r=1$ for the whole sample. We randomly select a proportion of the demeaned log returns and multiply them by 5, and evaluate the sensitivity  by the variation of the estimated loading space compared with the original estimated space for each method.  The random contamination procedure above were repeated  $100$ times. We report the mean variation for a variety of contamination proportions in the right panel of Figure \ref{fig:combine}. It is clear that the RIP always has smallest variation. One can also tell that the RTS is more robust compared with PCA.

\section{Conclusion and Discussion}\label{condis}

In this paper, we presented a way to do robust factor analysis without any moment constraint. The method relies on alternating the quantile regressions in the factors cross-sectionally and in the loadings temporally. We show that after several iterations, the terminated solution can not only identify the common components but also estimate the factors and scores consistently up to some orthogonal transformations. This provides at least a safe replacement of the PCA-based factor analysis when there are heavy-tailed idiosyncratic errors. There are still some problems that are eager to be solved in the future research. First, is there theoretical guarantee that the $\tilde{\bl}_i$'s and $\tilde{\bbf}_t$'s will converge to $\hat{\bl}_i$'s and $\hat{\bbf}_t$'s as $K\rightarrow\infty$? Thus the computing error for $C^{(K)}_{ij}$ can be theoretically controlled. This is so far difficult to achieve or prove, and we leave it to our future research work. Second, one can extend the current work to a general class of loss functions beyond the weighted absolute deviation loss.
\section{Acknowledgement}
He's work is supported by  National Science Foundation (NSF) of  China (12171282,11801316), National Statistical Scientific Research Key Project (2021LZ09), Young Scholars Program of Shandong University, Project funded by
China Postdoctoral Science Foundation (2021M701997) and the Fundamental Research Funds of Shandong University. Kong's work is partially supported by NSF China (71971118 and 11831008) and the WRJH-QNBJ Project and Qinglan Project of Jiangsu Province.  The authors would like to thank professor Xinsheng Zhang at Fudan University  and professor Xuanhe Wang at Dongbei University of Finance and Economics for insightful comments and/or discussions in an earlier version of the manuscript.

\section{Supplementary Material}
The technical proofs of the main theorems are put into the supplementary material.

\bibliographystyle{Chicago}
\bibliography{paper-ref}
\clearpage

	\section*{Supplementary Material for ``Factor Analysis without Moment Constraint" }

The supplementary material contains all the technical proofs for the main theorems in ``Robust Factor Analysis without Moment Constraint". In Section \ref{sec:A}, we provide some useful lemmas and corollaries and the detailed proofs of main theorems are given in Section \ref{sec:B}.

\begin{appendices}
	\section{Useful Lemmas and Corollaries}\label{sec:A}

Let $C$ be a generic constant that will be used in deriving upper bounds, and it may take different values in different places. $E_f$ stands for the expectation conditional on $\bbf_t^0$'s. Define
$$
\bW_0=\Big\{\sum^p_{i=1}h_i(0)\bl_i\bl_i^{\prime}\Big\}^{-1}\sum^p_{i=1}h_i(0)\bl_i\bl_i^{0\prime}.
$$
Assumption 1 and Assumption 3 imply that $\|\bW_0\|\leq C$. Reparameterize $\bl_i$'s and $\bbf_t$'s with
$u_i=\bW_0^{\prime}\bl_i-\bl_i^0$ and $v_t=\bbf_t-\bW_0\bbf_t^0$. One easily deduces the decomposition as follows,
$$
\tau_{it}=:\bl_i^{\prime}\bbf_t-\bl_i^{0\prime}\bbf_t^0=\bl_i^{\prime}v_t+u_i^{\prime}\bbf_t^0.
$$
Notice here that $\bl_i$ is still related to $u_i$ and $u_i$ simply serves as a measure of distance from $\bl_i$ to $\bl_i^0$. By the mathematical expression of the static approximate factor model,
\[
(\bY_t)_{p\times 1}=\bL_{p\times r}(\bbf_t)_{r\times 1}+(\bepsilon_t)_{p\times 1}, \ t=1,\ldots, T,
\]
we have that
\begin{eqnarray*}
\sum^p_{i=1}\sum^T_{t=1}\rho_{\tau}(y_{it}-\bl_i^{\prime}\bbf_t)=\sum^p_{i=1}\sum^T_{t=1}\rho_{\tau}\Big(\epsilon_{it}-\big(\bl_i^{\prime}\bbf_t-\bl_i^{0\prime}\bbf_t^0\big)\Big).
\end{eqnarray*}
Notice that, by Assumption {\bf 1}',
\begin{eqnarray}\label{equivalence1}
(\hat{\bL}, \hat{\bF})&=&\bargmin_{\bL,\bF}\sum^p_{i=1}\sum^T_{t=1}\rho_{\tau}(\epsilon_{it}-\tau_{it})\nonumber\\
&=&\bargmin_{\bL,\bF}\sum^p_{i=1}\sum^T_{t=1}\Big(\rho_{\tau}(\epsilon_{it}-(\bl_i^{\prime}\bbf_t-\bl_i^{0\prime}\bbf_t^0))-\rho_{\tau}(\epsilon_{it})\Big)\nonumber\\
&=&\bargmin_{\bL, \bF}\sum^p_{i=1}\sum^T_{t=1}\Big(\rho_{\tau}\big(\epsilon_{it}-(\bl_i^{\prime}v_t-u_i^{\prime}\bbf_t^0)\big)-\rho_{\tau}(\epsilon_{it})\Big)\nonumber\\
&=:&\bargmin_{\bL,\bF}\sum^p_{i=1}\sum^T_{t=1}g_l(u_i, v_t)\nonumber\\
&=&\bargmin_{\bL, \bF}\sum^p_{i=1}\sum^T_{t=1}\Big(\rho_{\tau}\big(\epsilon_{it}-(u_i^{\prime}\bW_0^{-1}\bbf_t+\bl_i^{0\prime}\bW_0^{-1}v_t)\big)-\rho_{\tau}(\epsilon_{it})\Big)\nonumber\\
&=:&\bargmin_{\bL, \bF}\sum^p_{i=1}\sum^T_{t=1}g_f(u_i, v_t),
\end{eqnarray}
where $\bL=(\bl_1,\ldots,\bl_p)^{\prime}$, $\bF=(\bbf_1,\ldots,\bbf_T)$.

One easily deduces the equality that
\begin{equation}\label{equality}
\rho_{\tau}(x-u)-\rho_{\tau}(x)=u\big(I(x\leq 0)-\tau\big)+(u-x)\big(I(x\leq u)-I(x\leq 0)\big).
\end{equation}
This shows that
\begin{equation}\label{error}
g_l(u_i,v_t)=:\rho_{\tau}(\epsilon_{it}-\tau_{it})-\rho_{\tau}(\epsilon_{it})=\tau_{it}D_{it}+R_{it},
\end{equation}
where $D_{it}=I(\epsilon_{it}\leq 0)-\tau$ and
$
R_{it}=(\tau_{it}-\epsilon_{it})\{I(\epsilon_{it}\leq \tau_{it})-I(\epsilon_{it}\leq 0)\}.
$
$g_l(u_i, v_t)$ can also be expressed as
\begin{equation}\label{error}
g_l(u_i,v_t)=\tilde{\tau}_{it}\tilde{D}_{it}+\tilde{R}_{it}+\rho_{\tau}(\epsilon_{it}-u_i^{\prime}\bbf_t^0)-\rho_{\tau}(\epsilon_{it}),
\end{equation}
where $\tilde{D}_{it}=I(\tilde{\epsilon}_{it}\leq 0)-\tau$ with $\tilde{\epsilon}_{it}=\epsilon_{it}-u_i^{\prime}\bbf_t^0$, $\tilde{\tau}_{it}=\bl_i^{\prime}v_t=:p^{-1/2}\bl_i^{\prime}\tilde{v}_t$ and
$
\tilde{R}_{it}=(\tilde{\tau}_{it}-\tilde{\epsilon}_{it})\{I(\tilde{\epsilon}_{it}\leq \tilde{\tau}_{it})-I(\tilde{\epsilon}_{it}\leq 0)\}
$.
There are some facts on $\tilde{R}_{it}$. It is nonnegative, piecewise linear, monotone in $\tilde{\tau}_{it}$, and bounded when $|\tilde{\tau}_{it}|\leq C$ due to Assumption {\bf 1}'.

Since $\tau\in (0, 1)$ is fixed, without loss of generality, we set $\tau=1/2$ in the sequel of the proof. Now we are going to prove that for arbitrarily fixed $\bl^{(0)}_i$'s (initial guess of $\bl_i$'s in the alternating iterative algorithm) lying in the parameter space under Assumption {\bf 1}' and for given $\bbf_t^0$, $\hat{v}^{(1)}_t$ satisfies $\|\hat{v}^{(1)}_t\|=:\|\hat{\bbf}^{(1)}_t-\bW^{(0)}\bbf_t^0\|\leq Cp^{-1/2}$ for some $r\times r$ matrix $\bW^{(0)}$ dependent only on $\bl_i^{(0)}$ and $\bbf_t^0$, where $\hat{\bbf}_t^{(1)}$ is the optimal solution of $\bbf_t$ to minimizing $\sum_{i=1}^pg_l(u^{(0)}_i, v_t)$ where $u_i^{(0)}=:\bW^{(0)\prime}\bl_i^{(0)}-\bl_i^0$ is fixed. Let $\tilde{\tau}_{it}^{(0)}$ and $\tilde{R}_{it}^{(0)}$ be similarly defined as $\tilde{\tau}_{it}$ and $\tilde{R}_{it}$ except for replacing $\bl_i$ and $\bW_0$ by $\bl_i^{(0)}$ and $\bW^{(0)}$ defined below (\ref{eI}), respectively. That being said, our first lemma gives the theoretical property of the optimal solution to the cross-sectional regression in the distance from $\bbf_t$ to $\bbf_t^0$, for initially given design matrix $\bL^{(0)}=(\bl_1^{(0)},\ldots,\bl_p^{(0)})^{\prime}$.

\begin{lemma}\label{lemma1}
For fixed $\bl^{(0)}_i$'s and given $\bbf_t^0$, under Assumptions 1-3,
$$
P\left(\|\sqrt{p}\hat{v}^{(1)}_t-\overline{v}^{(1)}_t\|> \delta\right)\rightarrow 0,
$$
for any $\delta>0$, where
$$
\overline{v}^{(1)}_t=-\frac{1}{2}\Big\{\frac{1}{p}\sum^p_{i=1}h_i\big(u_i^{(0)\prime}\bbf_t^0\big)\bl^{(0)}_i\bl_i^{(0)\prime}\Big\}^{-1}\sum^p_{i=1}\big(\tilde{D}^{(0)}_{it}-E_f\tilde{D}^{(0)}_{it}\big)\frac{\bl_i^{(0)}}{\sqrt{p}},
$$
where $\tilde{D}_{it}^{(0)}$ is similarly defined as $\tilde{D}_{it}$ except for replacing $u_i$ by its initial value  $u_i^{(0)}$.

Moreover, if Assumptions 1-5 are satisfied, then we further have
$$
P\left(\max_{t\leq T}\|\sqrt{p}\hat{v}^{(1)}_t-\overline{v}^{(1)}_t\|> \delta\right)\rightarrow 0.
$$
\end{lemma}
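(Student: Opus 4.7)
The plan is to cast the cross-sectional minimization of $\sum_{i=1}^{p} g_l(u_i^{(0)}, v_t)$ over $v_t$ as a quantile regression in which the design $\{\bl_i^{(0)}\}$ is fixed and the "errors" are $\tilde\epsilon_{it}^{(0)}=\epsilon_{it}-u_i^{(0)\prime}\bbf_t^0$. Using the identity (\ref{equality}), I would rewrite the part of the objective depending only on $v_t$ as
\[
M_t(\tilde v_t) \;=\; \sum_{i=1}^{p}\Bigl[\tilde\tau_{it}^{(0)}\tilde D_{it}^{(0)} + \tilde R_{it}^{(0)}\Bigr],\qquad \tilde\tau_{it}^{(0)}=p^{-1/2}\bl_i^{(0)\prime}\tilde v_t,\ \tilde v_t=\sqrt p\,v_t,
\]
which is convex in $\tilde v_t$. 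This reparametrization makes the relevant scale $\tilde v_t = O_P(1)$ and lets me apply a standard convex-minimization argument as in \cite{Pollard1991Asymptotics} and \cite{Knight1998Asymptotics}.

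Next I would carry out a Bahadur-type expansion of $M_t$. The conditional expectation of $\sum_i \tilde R_{it}^{(0)}$ is computed by a Taylor expansion of the conditional CDF of $\tilde\epsilon_{it}^{(0)}$ around $0$, which under Assumption~\ref{ass3}(1) yields, to leading order, the quadratic form $\tfrac12 \tilde v_t^{\prime} A_t \tilde v_t$ with $A_t=p^{-1}\sum_i h_i(u_i^{(0)\prime}\bbf_t^0)\bl_i^{(0)}\bl_i^{(0)\prime}$; Assumption~\textbf{1}' ensures $A_t$ has eigenvalues bounded away from zero and infinity. The linear part, after subtracting its $E_f$-value (which is absorbed into the definition of $\bW^{(0)}$, making the unconditional deterministic linear term vanish), reduces to $p^{-1/2}\tilde v_t^{\prime}\sum_i \bl_i^{(0)}(\tilde D_{it}^{(0)}-E_f\tilde D_{it}^{(0)})$. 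Thus $M_t(\tilde v_t)$ is approximated by a quadratic whose minimizer is precisely $\overline v_t^{(1)}$; an argmin-consistency argument (via the convex-lemma style of \cite{Pollard1991Asymptotics}) then delivers $\|\sqrt p\,\hat v_t^{(1)} - \overline v_t^{(1)}\|=o_P(1)$. The stochastic remainder control—the step that I expect to be the main obstacle—amounts to showing that, uniformly in $\tilde v_t$ over a bounded ball,
\[
\sum_i \bl_i^{(0)}\bigl\{I(\tilde\epsilon_{it}^{(0)}\le \tilde\tau_{it}^{(0)})-I(\tilde\epsilon_{it}^{(0)}\le 0) - E_f[\,\cdot\,]\bigr\}=o_P(\sqrt p),
\]
which I would establish via an empirical-process chaining argument using the second-moment bound in Assumption~\ref{ass3}(2) on $H_1$ and the one-sided monotonicity of $\tilde R_{it}^{(0)}$ in $\tilde\tau_{it}^{(0)}$.

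For the second (uniform) conclusion, I would apply a union bound over $t=1,\dots,T$ and replace the second-moment bounds in the previous step by the exponential-moment versions in Assumption~\ref{ass4}. This gives sub-exponential tail probabilities on the event $\{\|\sqrt p\,\hat v_t^{(1)} - \overline v_t^{(1)}\|>\delta\}$ for each $t$, and Assumption~\ref{ass5}, specifically $\log T/(p^{1/4}\log p)=o(1)$, is exactly what is required to ensure the union bound still tends to zero. The principal technical difficulty throughout is that because $u_i^{(0)\prime}\bbf_t^0$ is $O(1)$, the Bahadur expansion must be anchored at the shifted conditional quantile point—so the density factors evaluated at $u_i^{(0)\prime}\bbf_t^0$ rather than at $0$—and the empirical process being controlled is indexed by a non-trivial multidimensional parameter without smoothness of the check function to exploit.
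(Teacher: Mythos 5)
Your proposal follows essentially the same route as the paper's proof: the Knight-identity decomposition of the objective, the vanishing of the conditional-mean linear term by the choice of $\bW^{(0)}$, the quadratic approximation of $\sum_i E_f\tilde R_{it}^{(0)}$ with density weights $h_i(u_i^{(0)\prime}\bbf_t^0)$, the discretization-plus-monotonicity control of the centered remainder via Assumption \ref{ass3}(2), the convexity argument to localize the argmin on a bounded ball, and the union bound with the exponential moments of Assumption \ref{ass4} together with the rate condition of Assumption \ref{ass5} for uniformity in $t$. The only detail to reconcile is the constant in the quadratic term: with your normalization $\tfrac12\tilde v_t^{\prime}A_t\tilde v_t$ the minimizer of the approximating quadratic would be $-A_t^{-1}b_t$ rather than the $-\tfrac12 A_t^{-1}b_t$ appearing in the stated $\overline v_t^{(1)}$, so the factor must be tracked consistently with the paper's expansion in (\ref{eR}) and (\ref{exp1}).
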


\begin{proof}
First we give an expansion of $\sum^p_{i=1}g_l(u_i^{(0)}, v_t)$ in $\{v_t; \|\sqrt{p}v_t\|\leq M\}$ for fixed $\bl_i^{(0)}$'s and given $\bbf_t^0$ satisfying Assumption 1. (\ref{error}) shows that
\begin{eqnarray}\label{exp}
\sum^p_{i=1}g_l(u_i^{(0)}, v_t)&=&\sum^p_{i=1}\tilde{D}^{(0)}_{it}\bl_i^{(0)\prime}p^{-1/2}\tilde{v}_t+\sum^{p}_{i=1}E_f\tilde{R}^{(0)}_{it}+\sum^{p}_{i=1}\big(\tilde{R}^{(0)}_{it}-E_f\tilde{R}^{(0)}_{it}\big)\nonumber\\
&&+\sum^p_{i=1}\big(|\epsilon_{it}-u^{\prime}_i\bbf_t^0|-|\epsilon_{it}|\big)/2,
\end{eqnarray}
where $\tilde{v}_t=\sqrt{p}v_t$ and $\|\tilde{v}_t\|\leq M$. Because $\sum^{p}_{i=1}(|\epsilon_{it}-u_i^{\prime}\bbf_t^0|-|\epsilon_{it}|)/2$ is irrelevant to optimization in $v_t$, we ignore this term below. Now we analyze (\ref{exp}) term by term. For the first term of (\ref{exp}),
\begin{eqnarray}\label{eI}
&&\sum^p_{i=1}E_f\tilde{D}^{(0)}_{it}\bl_i^{(0)\prime}p^{-1/2}\tilde{v}_t=\sum^p_{i=1}\bl_i^{(0)\prime}p^{-1/2}\tilde{v}_t\left(P_f(\epsilon_{it}< u_i^{(0)\prime}\bbf_t^0)-1/2\right)\nonumber\\
&=&\sum^p_{i=1}\bl_i^{(0)\prime}p^{-1/2}\tilde{v}_tP_f\left(0<\epsilon_{it}< u_i^{(0)\prime}\bbf_t^0\right)=\sum^p_{i=1}\bl_i^{(0)\prime}p^{-1/2}\tilde{v}_t\int^{u_i^{(0)\prime}\bbf_t^0}_0h_i(x)dx\nonumber\\
&=& p^{-1/2}\tilde{v}_t^{\prime}\sum^p_{i=1}h_i(\xi^{(0)}_{it})\bl_i^{(0)}(\bl_i^{(0)\prime}\bW^{(0)}-\bl_i^{0\prime})\bbf_t^0=0,
\end{eqnarray}
where $\xi_{it}^{(0)}$ is some variable in $(0, u_i^{(0)\prime}\bbf_t^0)$ and in the last equality we have used the definition that
$$
\bW^{(0)}=:\Big\{\sum^p_{i=1}h_i(\xi_{it}^{(0)})\bl_i^{(0)}\bl_i^{(0)\prime}\Big\}^{-1}\sum^{p}_{i=1}h_i(\xi_{it}^{(0)})\bl_i^{(0)}\bl_i^{0\prime}.
$$
Notice that $\bW^{(0)}$ depends on $t$, but for simplicity of notation and easy comparing with $\bW_0$, we suppress the subscript $t$ and simply write $\bW^{(0)}_t=\bW^{(0)}$. Assumption {\bf 1}' and Assumption 3 and the restriction $\{\|\tilde{v}_t\|\leq M\}$ guarantee that $P(\|\bW^{(0)}\|\leq C)\rightarrow 1$ as $p, T\rightarrow \infty$. In the sequel, we restrict that $\|\bW^{(0)}\|\leq C$.

For the second term of (\ref{exp}),
\begin{eqnarray}\label{eR}
&&\sum^p_{i=1}\Big\{E_f\tilde{R}^{(0)}_{it}-h_i\big((\bl_i^{(0)\prime}\bW^{(0)}-\bl_i^{0\prime})\bbf_t^0\big)\big(\bl_i^{(0)\prime}p^{-1/2}\tilde{v}_t\big)^2\Big\}\nonumber\\
&\leq& C\sum^p_{i=1}\int_0^{\bl_i^{(0)\prime}\tilde{v}_tp^{-1/2}}\big(\bl_i^{(0)\prime}\tilde{v}_tp^{-1/2}-x\big)xdx\leq C\sum^p_{i=1}\big(\bl_i^{(0)\prime}\tilde{v}_tp^{-1/2}\big)^3=o_p(1),
\end{eqnarray}
by Assumption {\bf 1}'. Here $o_p(1)$ holds uniformly in $t\leq T$ for $\|\tilde{v}_t\|\leq M$. For the third term of (\ref{exp}), we are going to prove that \begin{equation}\label{R-eR}
\left|\sum^p_{i=1}(\tilde{R}_{it}^{(0)}-E_f\tilde{R}^{(0)}_{it})\right|=o_p(1),
\end{equation}
where $o_p(1)$ holds uniformly in $\|\tilde{v}_t\|\leq M$ for fixed $\{\bl_i^{(0)}\}$'s. Since $v_t$ is of fixed dimension, with out of loss of generality and for simplicity of notation we assume here $r=1$ in proving (\ref{R-eR}). To this end, we split the range of $\tilde{v}_t$, $(-M, M]$, into non-overlapping intervals $(C_{k}, C_{k+1}]$ so that $C_{k+1}-C_k=\delta^{\prime}/M$. Then the number of subintervals is $2M^2/\delta^{\prime}$. For convenience, we rewrite $\tilde{R}_{it}^{(0)}$ as $R(u_i^{(0)}, \tilde{v}_t, \epsilon_{it})$. Then
\begin{eqnarray}\label{supR}
&&\sup_{\{\|\tilde{v}_t\|\leq  C\}}\left|\sum^p_{i=1}(\tilde{R}_{it}^{(0)}-E_f\tilde{R}_{it}^{(0)})\right|\nonumber\\
&=& \max_{\{k\}}\sup_{\{\tilde{v}_t\in (C_{k}, C_{k+1}]\}}\left|\sum^p_{i=1}\Big\{R(u_i^{(0)}, \tilde{v}_t, \epsilon_{it})-E_fR(u_i^{(0)}, \tilde{v}_t, \epsilon_{it})\Big\}\right|.
\end{eqnarray}
Notice that $R(u, \tilde{v}_t, \epsilon)$ is monotone in $\tilde{v}_t$ when $u$ is fixed,
$$
0<R(u, C_k, \epsilon)\wedge R(u, C_{k+1}, \epsilon)\leq \max_{\tilde{v}_t\in (C_k,C_{k+1}]}R(u, \tilde{v}_t, \epsilon)\leq R(u, C_{k}, \epsilon)\vee R(u, C_{k+1}, \epsilon),
$$
where the results of the operators $\wedge$ and $\vee$ depend only on $sign(\tilde{\epsilon}_{it})$.
Let $$
G_{bk}(u,\epsilon)=R(u, C_k, \epsilon)\vee R(u, C_{k+1}, \epsilon)$$
and
$$
 G_{sk}(u, \epsilon)=R(u, C_k, \epsilon)\wedge R(u, C_{k+1}, \epsilon),
$$
which are bivariate functions bounded by $C$ for $\|\overline{v}_t\|\leq M$. $G_{bk}(u_i^{(0)},\epsilon_{it})$ and $G_{sk}(u_i^{(0)}, \epsilon_{it})$ are simply the values of $G_{bk}(u, \epsilon)$ and $G_{sk}(u, \epsilon)$ realized at $(u,\epsilon)=(u_i^{(0)},\epsilon_{it})$. Then the right hand side of (\ref{supR}) is less than
\begin{eqnarray}\label{II}
&&\max_{k}\left|\sum^p_{i=1}\Big(G_{bk}(u_i^{(0)},\epsilon_{it})-E_fG_{sk}(u_i^{(0)},\epsilon_{it})\Big)\right|+\max_k\left|\sum^p_{i=1}\Big(G_{sk}(u_i^{(0)},\epsilon_{it})-E_fG_{bk}(u_i^{(0)},\epsilon_{it})\Big)\right|\nonumber\\
&\leq &\max_{k}\left|\sum^p_{i=1}\Big(G_{bk}(u_i^{(0)},\epsilon_{it})-E_fG_{bk}(u_i^{(0)},\epsilon_{it})\Big)\right|
+\max_{k}\left|\sum^p_{i=1}\Big(E_fG_{bk}(u_i^{(0)},\epsilon_{it})-E_fG_{sk}(u_i^{(0)},\epsilon_{it})\Big)\right|\nonumber\\
&&+\max_k\left|\sum^p_{i=1}\Big(E_fG_{bk}(u_i^{(0)},\epsilon_{it})-E_fG_{sk}(u_i^{(0)},\epsilon_{it})\Big)\right|
+\max_{k}\left|\sum^p_{i=1}\Big(E_fG_{sk}(u_i^{(0)},\epsilon_{it})-G_{sk}(u_i^{(0)},\epsilon_{it})\Big)\right|\nonumber\\
&=:&I_{1t}+I_{2t}+II_{1t}+II_{2t}.
\end{eqnarray}
Assumption {\bf 1}' and (\ref{eR}) show that
\begin{eqnarray}\label{Imean}
I_{2t}+II_{1t}&\leq & C\delta^{\prime}\left\|p^{-1}\sum^p_{i=1}h_i(u_i^{(0)\prime}\bbf_t^0)\bl_i^{(0)}\bl_i^{(0)\prime}\right\|\nonumber\\
&=&C\delta^{\prime}\left\|p^{-1}\sum^p_{i=1}h_i\left(\bl_i^{(0)\prime}\bbf_t-\bl_i^{0\prime}\bbf_t^0+O\Big(\frac{M}{\sqrt{p}}\Big)\right)\bl_i^{(0)}\bl_i^{(0)\prime}\right\|.\label{I}
\end{eqnarray}
where $O({M}/{\sqrt{p}})$ holds uniformly in $i\leq p$. Assumptions {\bf 1}' and 3 yield
\begin{eqnarray}\label{Ivar}
I_{1t}+II_{2t}&=&O_p\Big(\frac{2M^2}{\delta^{\prime}}\Big)^r\max_i\Big\|\frac{\bl_i^{(0)}}{\sqrt{p}}\Big\|^{\frac{1}{2}}E^{\frac{1}{2}}\left\|\sum^p_{i=1}\frac{h_i(u_i^{(0)\prime}\bbf_t^0)\bl_i^{(0)}\bl_i^{(0)\prime}}{p}\right\|=o_p(1).\label{II}
\end{eqnarray}
This proves (\ref{R-eR}) by letting $p\rightarrow \infty$ first and then $\delta^{\prime}\rightarrow 0$. Summarizing the results for all three terms of (\ref{exp}), we have, by ignoring $\sum^p_{i=1}\big(|\epsilon_{it}-u_i^{\prime}\bbf_t^0|-|\epsilon_{it}|\big)/2$,
\begin{eqnarray}\label{exp1}
&&\sum^p_{i=1}g_l(u_i^{(0)}, v_t)=: \sum^p_{i=1}G_l(u_i^{(0)}, \tilde{v}_t)\nonumber\\
&=&\sum^p_{i=1}\big(\tilde{D}^{(0)}_{it}-E_f\tilde{D}^{(0)}_{it}\big)\frac{\bl_i^{(0)\prime}}{\sqrt{p}}\tilde{v}_t+\sum^{p}_{i=1}h_i\Big(\big(\bW^{(0)\prime}\bl_i^{(0)}-\bl_i^{0}\big)^{\prime}\bbf_t^0\Big)\Big(\frac{\bl_i^{(0)\prime}\tilde{v}_t}{\sqrt{p}}\Big)^2+o_p(1)\nonumber\\
&=& (\tilde{v}_t-\overline{v}^{(1)}_t)^{\prime}\sum^p_{i=1}h_i(u_i^{(0)\prime}\bbf_t^0)\frac{\bl_i^{(0)}\bl_i^{(0)\prime}}{p}(\tilde{v}_t-\overline{v}^{(1)}_t)-\frac{1}{4}\Big\{\sum^p_{i=1}(\tilde{D}^{(0)}_{it}-E_f\tilde{D}^{(0)}_{it})\frac{\bl_i^{(0)}}{\sqrt{p}}\Big\}^{\prime}\nonumber\\
&&\times\Big\{\sum^p_{i=1}h_i(u_i^{(0)\prime}\bbf_t^0)\frac{\bl_i^{(0)}\bl_i^{(0)\prime}}{p}\Big\}^{-1}\Big\{\sum^p_{i=1}(\tilde{D}^{(0)}_{it}-E_f\tilde{D}^{(0)}_{it})\frac{\bl_i^{(0)}}{\sqrt{p}}\Big\}+o_p(1),
\end{eqnarray}
where $o_p(1)$ holds uniformly in $\|\tilde{v}_t\|\leq M$. (\ref{exp1}) demonstrates that $\tilde{v}_t$ achieves the minimum $\overline{v}^{(1)}_t$ asymptotically whenever $\|\tilde{v}_t\|\leq M$. Let $\sqrt{p}\hat{v}^{(1)}_t$ be the minimizer of $\sum^p_{i=1}G_l(u_i^{(0)}, \tilde{v}_t)$ over $\{\tilde{v}_t\in R^r\}$ for fixed $u_i^{(0)}$ and $\bl_i^{(0)}$ and given $\bbf_t^0$. Let $\breve{v}_t=\overline{v}^{(1)}_t+\beta_te_t$ where $e_t$ is a vector of unit length, and let $v_t^*=\overline{v}^{(1)}_t+\delta e_t$ so that $v_t^*$ lies in the line segment from $\overline{v}^{(1)}_t$ to $\breve{v}_t$. Notice that $\sum^p_{i=1}G_l(u_i^{(0)}, \tilde{v}_t)$ is a convex function in $\tilde{v}_t$ given $\bl_i^{(0)}$'s and $\bbf_t^0$. For $\beta_t>\delta$, by convexity, (\ref{eR}), (\ref{R-eR}) and (\ref{exp1}), and restricted on  $\{\|\overline{v}^{(1)}_t\|\leq M-\delta\}$,
\begin{eqnarray}\label{convex1}
&&\sum^{p}_{i=1}\Big[G_l(u_i^{(0)}, \breve{v}_t)-G_l(u_i^{(0)}, \overline{v}^{(1)}_t)\Big]\nonumber\\
&> & \frac{\beta_t}{\delta}\sum^{p}_{i=1}\Big[G_l(u_i^{(0)}, v^*_t)-G_l(u_i^{(0)}, \overline{v}^{(1)}_t)\Big]\nonumber\\
&> &\delta \beta_t e_t^{\prime}\Big\{\frac{1}{p}\sum^p_{i=1}h_i(u_i^{(0)\prime}\bbf_t^0)\bl^{(0)}_i\bl_i^{(0)\prime}\Big\}e_t-\delta^{-1}|o_p(1)|,
\end{eqnarray}
as $p, T\rightarrow \infty$, where $o_p(1)$ holds uniformly in $\{\|\overline{v}^{(1)}_t\|\leq M-\delta\}$. Then by Assumption {\bf 1}' and Assumption 3(2) and (\ref{convex1}), for any $\delta>0$,
\begin{eqnarray}\label{Chebyshev}
P\Big\{\|\sqrt{p}\hat{v}^{(1)}_t-\overline{v}^{(1)}_t\|>\delta\Big\}&\leq & P\Big\{\sum^p_{i=1}\big[G_l(u_i^{(0)},\breve{v}_t)-G_l(u_i^{(0)},\overline{v}^{(1)}_t)\big]<0, \|\overline{v}^{(1)}_t\|\leq M-\delta\Big\} \nonumber\\
&& + P\Big\{\|\overline{v}^{(1)}_t\|> M-\delta\Big\} \leq \epsilon,
\end{eqnarray}
for arbitrarily small $\epsilon>0$, where we have used the Chebyshev inequality
\begin{equation}\label{cheby1}
P\Big\{\|\overline{v}^{(1)}_t\|> M-\delta\Big\}\leq (M-\delta)^{-2}E(\overline{v}^{(1)}_t)^2\leq \epsilon/2,
\end{equation}
by choosing $M$ large enough and Assumption 3(2).

Next, we prove the uniform result in $t\leq T$. Taking $M=C\log{T}$ for $C$ large enough, by the Markov inequality,
\begin{eqnarray}
P\Big\{\max_{t\leq T}\|\overline{v}^{(1)}_t\|>C\log{T}\Big\}\leq Te^{-C\log{T}}\max_tE\Big\{E_f\exp\{\overline{v}^{(1)}_t\}\Big\}=o(1),
\end{eqnarray}
due to Assumption 4. Hence, in the sequel, we restrict on the set $\{\max_t\|\overline{v}^{(1)}_t\|\leq C\log{T}\}$. Repeating the steps of for the non-uniform results, we find that (\ref{I}) still holds uniformly in $t\leq T$, i.e.,
\begin{equation}\label{uI}
\max_{t\leq T} (I_{2t}+II_{1t})\leq C\delta^{\prime}.
\end{equation}
Parallel to (\ref{II}), the Markov inequality and Assumptions 4 and 5 show that
\begin{eqnarray}\label{uII}
P\Big\{\max_{t\leq T} (I_{1t}+II_{2t})>\epsilon\Big\}\leq CT(\log{T})^{2r}e^{-\epsilon p^{1/4}}=o(1).
\end{eqnarray}
This proves that under the more stringent condition on $p$ and $T$, (\ref{R-eR}) holds uniformly in $t\leq T$, and hence the $o_p(1)$ and $\epsilon$ terms in (\ref{exp1}) and (\ref{Chebyshev}) hold uniformly in $t\leq T$. Then paralleling to (\ref{Chebyshev}) proves the uniform (in $t$) results.

\end{proof}

Now, we alternate to fix $\hat{\bbf}_t^{(1)}$, $\bW^{(0)}$, $\bbf_t^0$, and thus $\hat{v}_t^{(1)}$, i.e., the optimal solution to the cross-sectional regression in $\tilde{v}_t$ done in Lemma \ref{lemma1}, and run time series regression in $u_i^{(1)}=:\bW^{(0)\prime}\bl_i-\bl_i^0$. We write $g_f(u_i^{(1)}, v_t^{(1)})=:\rho_{\tau}\Big(\epsilon_{it}-\big(u_i^{(1)\prime}(\bW^{(0)})^{-1}\bbf_t^{(1)}+\bl_i^{0\prime}(\bW^{(0)})^{-1}v^{(1)}_t\big)\Big)-\rho_{\tau}(\epsilon_{it})$ and set  $\tau^{(1)}_{it}=u_i^{(1)\prime}(\bW^{(0)})^{-1}\bbf^{(1)}_t+\bl_i^{0\prime}(\bW^{(0)})^{-1}v^{(1)}_t$. Let $R^{(1)}_{it}$ be similarly defined as $R_{it}$ except for replacing $\tau_{it}$ by $\tau^{(1)}_{it}$. Let $\hat{u}_i^{(1)}$ (and correspondingly $\hat{\bl}_i^{(1)}$) be the optimal solution to minimizing $\sum^T_{t=1}g_f(u_i^{(1)}, \hat{v}_t^{(1)})$ in $u_i^{(1)}$.

\begin{lemma}\label{lemma2}
Given $\bbf_t^0$, $\bW^{(0)}$ and $\hat{v}_t^{(1)}$'s, under Assumptions 1-5,
$$
\max_i\|\hat{u}^{(1)}_i\|=O_p\Big(\frac{\log{p}}{\sqrt{T}}\Big)+o_p\Big(\frac{1}{\sqrt{p}}\Big).
$$
Moreover, if further $\frac{T}{p}[(\log{p})^2(\log{T})^2+\frac{(\log{T})^3}{\sqrt{p}}]+\frac{(\log{p})^5}{\sqrt{T}}=o(1)$,
$$
P\left(\max_{i\leq p}\|\sqrt{T}\hat{u}^{(1)}_i-\overline{u}^{(1)}_i\|>\delta\right)\rightarrow 0,
$$
for any constant $\delta>0$, where
$$
\overline{u}^{(1)}_i=-\frac{1}{2h_i(0)}\Big(\sum^T_{t=1}\frac{\bbf_t^0\bbf_t^{0\prime}}{T}\Big)^{-1}\sum^T_{t=1}D_{it}\frac{\bbf_t^0}{\sqrt{T}}.
$$
\end{lemma}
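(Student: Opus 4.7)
The plan is to mirror the proof of Lemma \ref{lemma1} with time and cross-section interchanged, but with two additional difficulties: (a) the ``design'' entering the time-series regression is now the random quantity $\hat{\bbf}_t^{(1)}=\bW^{(0)}\bbf_t^0+\hat{v}_t^{(1)}/\sqrt{p}$ rather than the deterministic $\bl_i^{(0)}$, and (b) uniformity must hold over all $i\leq p$ rather than a single value, which forces us to use exponential tail bounds (Assumption \ref{ass4}) in place of second-moment Chebyshev inequalities.

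First, I would rewrite the parameter $u_i^{(1)}=\bW^{(0)\prime}\bl_i-\bl_i^0$ and set $\tilde u_i=\sqrt{T}u_i^{(1)}$. Using equality (\ref{equality}), decompose
\[
\sum_{t=1}^T g_f(u_i^{(1)},\hat{v}_t^{(1)})=\sum_{t=1}^T \tau_{it}^{(1)} D_{it}+\sum_{t=1}^T R_{it}^{(1)},
\]
and substitute $\hat{\bbf}_t^{(1)}=\bW^{(0)}\bbf_t^0+\hat{v}_t^{(1)}/\sqrt{p}$. The linear-in-$u_i$ piece then splits into a main term $T^{-1/2}\sum_t D_{it}\bbf_t^{0\prime}\tilde u_i$ and a remainder involving $u_i^{(1)\prime}(\bW^{(0)})^{-1}\hat v_t^{(1)}/\sqrt{p}$, which by Lemma \ref{lemma1} is $O_p(1/\sqrt{p})\cdot\|\tilde u_i\|$ uniformly in $t$. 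The $\bl_i^{0\prime}(\bW^{(0)})^{-1}\hat v_t^{(1)}/\sqrt{p}$ term is irrelevant for the $u_i$-optimization.

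Next I would expand $E_f R_{it}^{(1)}$ via a Taylor expansion of $h_i$ near $0$, exactly as in (\ref{eR}), to get the asymptotic quadratic form
\[
\sum_{t=1}^T E_f R_{it}^{(1)}=h_i(0)\,\tilde u_i^{\prime}\Bigl(\frac{1}{T}\sum_{t=1}^T\bbf_t^0\bbf_t^{0\prime}\Bigr)\tilde u_i+o_p(1),
\]
on $\{\|\tilde u_i\|\leq M\}$, where the $o_p(1)$ absorbs the Lemma \ref{lemma1} error. For the centered part $\sum_t (R_{it}^{(1)}-E_f R_{it}^{(1)})$, I would repeat the bracketing/monotonicity argument below (\ref{R-eR}): partition $\{\|\tilde u_i\|\leq M\}$ into small cubes of diameter $\delta'/M$, sandwich $R_{it}^{(1)}$ between its values at the cube vertices, and bound the maximum of partial sums using Assumption \ref{ass3}(2) for the non-uniform claim and Assumption \ref{ass4} (exponential moments of $H_1,H_2$ along the time axis) for the uniform-in-$i$ claim. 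Taking $M=C\log p$ and combining with the $2(M/\delta')^{r}$ cubes via a union bound produces factors of the form $p(\log p)^{r}\exp(-\epsilon\sqrt T/\log p)$, which vanish under Assumption \ref{ass5} and the strengthened rate condition ${T}/{p}\bigl[(\log p)^2(\log T)^2+(\log T)^3/\sqrt p\bigr]+(\log p)^5/\sqrt T=o(1)$.

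Putting these expansions together yields, uniformly on $\{\|\tilde u_i\|\leq M\}$ and in $i$,
\[
\sum_{t=1}^T g_f(u_i^{(1)},\hat v_t^{(1)})=h_i(0)(\tilde u_i-\overline u_i^{(1)})^{\prime}\Bigl(\frac{1}{T}\sum_{t=1}^T\bbf_t^0\bbf_t^{0\prime}\Bigr)(\tilde u_i-\overline u_i^{(1)})-\text{(const)}+o_p(1).
\]
Finally, I would invoke convexity of $u_i\mapsto \sum_t g_f(u_i^{(1)},\hat v_t^{(1)})$ exactly as in (\ref{convex1})--(\ref{Chebyshev}): if $\|\sqrt T\hat u_i^{(1)}-\overline u_i^{(1)}\|>\delta$, convexity propagates the local quadratic lower bound to the ray through $\hat u_i^{(1)}$, contradicting optimality with probability tending to one. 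Assumption \ref{ass4} applied to $\overline u_i^{(1)}$ gives $P(\max_i\|\overline u_i^{(1)}\|>C\log p)\to 0$, which yields the first bound $\max_i\|\hat u_i^{(1)}\|=O_p(\log p/\sqrt T)+o_p(1/\sqrt p)$, the $o_p(1/\sqrt p)$ term tracking the residual from $\hat v_t^{(1)}$. Under the stronger rate condition, the same argument upgrades to the uniform Bahadur representation.

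The main obstacle is the uniform-in-$i$ control of $\sum_t(R_{it}^{(1)}-E_fR_{it}^{(1)})$: the bracketing must be done carefully so that the resulting approximation error in (\ref{uI}) scales as $\delta'$ (independent of $p$) and the stochastic deviation in (\ref{uII}) is still controllable after a $p\cdot(\log p)^r$ union bound. This is precisely why the more restrictive rate condition on $(p,T)$ appears only in the second (Bahadur) part of the conclusion.
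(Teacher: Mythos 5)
Your proposal follows essentially the same route as the paper's proof: the decomposition $\sum_t g_f = \sum_t D_{it}\tau_{it}^{(1)} + \sum_t E_fR_{it}^{(1)} + \sum_t(R_{it}^{(1)}-E_fR_{it}^{(1)})$, the Taylor expansion of $E_fR_{it}^{(1)}$ into a quadratic form in $\tilde u_i$, the bracketing of the centered remainder over a $\delta'/M$-grid combined with the exponential moment bounds of Assumption 4 and a union bound over $i$ and the grid, and the convexity argument propagating the local quadratic lower bound. The only place you are lighter than the paper is the control of the supremum over the random configurations $\{\hat v_t^{(1)}\}$ (which are not independent of the $\epsilon_{it}$'s), where the paper exploits piecewise linearity and monotonicity of $R$ in $v_t^{(1)}$ to reduce to endpoint configurations as in (\ref{uniforminv}), and the paper first centers at a version $\tilde u_i^{(1)}$ built from $\hat\bbf_t^{(1)}$ before passing to $\overline u_i^{(1)}$ via (\ref{t-bar})--(\ref{t-bar-3}); these are bookkeeping refinements of the same idea.
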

\begin{proof}
Now, $g_f(u_i^{(1)}, v^{(1)}_t)$ can be rewritten as
$$
g_f(u_i^{(1)},v^{(1)}_t)=D_{it}\tau_{it}^{(1)}+E_fR^{(1)}_{it}+R_{it}^{(1)}-E_fR_{it}^{(1)}=:G_f(\tilde{u}_i^{(1)}, v^{(1)}_t),
$$
where $\tilde{u}_i^{(1)}=\sqrt{T}u_i^{(1)}$. Parallel to the proof of Lemma \ref{lemma1} and restricted on $\{\max_i\|\tilde{u}_i^{(1)}\|\leq M\}$,
\begin{equation}\label{dec2}
\sum^T_{t=1}G_f(\tilde{u}_i^{(1)}, v^{(1)}_t)=\sum^T_{t=1}D_{it}\tau^{(1)}_{it}+\sum^T_{t=1}E_fR_{it}^{(1)}+\sum^T_{t=1}(R_{it}^{(1)}-E_fR^{(1)}_{it}).
\end{equation}
For the second term in the right hand side of (\ref{dec2}), similar to (\ref{eR}) and by Lemma \ref{lemma1}, Assumption {\bf 1}' and Assumption 4,
\begin{eqnarray}\label{term2}
&&E\Big\{\sup_{\max_{t\leq T}\|v_t^{(1)}\|\leq C\log{T}/\sqrt{p}}\max_i
\Big|\sum^T_{t=1}E_fR^{(1)}_{it}-\sum^T_{t=1}h_i(0)\Big(\frac{\tilde{u}_i^{(1)\prime}(\bW^{(0)})^{-1}\bbf^{(1)}_t}{\sqrt{T}}+\bl_i^{0\prime}(\bW^{(0)})^{-1}v^{(1)}_t\Big)^2\Big|\Big\}\nonumber\\
&\leq & C  E\Big\{\sup_{\max_{t\leq T}\|v_t^{(1)}\|\leq C\log{T}/\sqrt{p}}\sum^T_{t=1}\max_i \Big|\frac{\tilde{u}_i^{(1)\prime}\bbf_t^{(1)}}{\sqrt{T}}+\bl_i^{0\prime}(\bW^{(0)})^{-1}v_t^{(1)}\Big|^3\Big\} \nonumber\\
&\leq &  C\Big(\frac{M^3}{\sqrt{T}}+\frac{T(\log{T})^3}{p^{3/2}}\Big).
\end{eqnarray}
For the third term of (\ref{dec2}), as in the proof of Lemma \ref{lemma1}, we assume $r=1$ and split the range of $\tilde{u}_i^{(1)}$, $(-M, M]$, into $2M^2/\delta^{\prime}$ non-overlapping subintervals $(C_k, C_{k+1}]$ with $C_{k+1}-C_{k}=\delta^{\prime}/M$. Rewrite $R^{(1)}_{it}=R(\tilde{u}_i^{(1)}, v^{(1)}_t, \epsilon_{it})$. By the monotonicity of $R(\tilde{u}_i^{(1)}, v_t^{(1)}, \epsilon_{it})$ in $\tilde{u}_i^{(1)}$, we have
\begin{equation}\label{doublebound}
0< R(C_k, v, \epsilon)\wedge R(C_{k+1}, v, \epsilon)\leq \sup_{\tilde{u}_i^{(1)}\in (C_k, C_{k+1}]}R(\tilde{u}_i^{(1)}, v, \epsilon)\leq R(C_k, v, \epsilon)\vee R(C_{k+1}, v, \epsilon),
\end{equation}
where the lower and upper bounds are irrelevant to $\tilde{u}_i^{(1)}$ and the results of the operators $\wedge$ and $\vee$ depend only on $sign(\epsilon_{it})$. Let
$$
G_{bk}(v, \epsilon)=R(C_k, v, \epsilon)\vee R(C_{k+1}, v, \epsilon)
$$
and
$$
G_{sk}(v, \epsilon)=R(C_k, v, \epsilon)\wedge R(C_{k+1}, v, \epsilon),
$$
which are two bivariate functions bounded by $C$ when $\|\tilde{u}_i^{(1)}\|\leq M$ and $\max_t\|v^{(1)}_t\|\leq C\log{T}/\sqrt{p}$. $G_{bk}(v^{(1)}_t, \epsilon_{it})$ and $G_{sk}(v_t^{(1)},\epsilon_{it})$ are simply the values of $G_{bk}(v, \epsilon)$ and $G_{sk}(v, \epsilon)$ at $(v, \epsilon)=(v^{(1)}_t, \epsilon_{it})$. Then (\ref{doublebound}) implies that
\begin{eqnarray}\label{doublebound2}
&&\sup_{\tilde{u}_i\in(-M,M]}\Big|\sum^T_{t=1}(R^{(1)}_{it}-E_f R^{(1)}_{it})\Big|\nonumber\\
&\leq& \max_k\Big|\sum^T_{t=1} \{G_{bk}(v^{(1)}_t, \epsilon_{it})-E_fG_{sk}(v^{(1)}_t, \epsilon_{it})\}\Big|
+\max_k\Big|\sum^T_{t=1}\{E_fG_{bk}(v^{(1)}_t, \epsilon_{it})-G_{sk}(v^{(1)}_t, \epsilon_{it})\}\Big|\nonumber\\
&\leq&\max_k\Big|\sum^T_{t=1} \{G_{bk}(v_t^{(1)},\epsilon_{it})-E_fG_{bk}(v_t^{(1)}, \epsilon_{it})\}\Big|
+\max_k\Big|\sum^T_{t=1} \{E_fG_{bk}(v_t^{(1)},\epsilon_{it})-E_fG_{sk}(v_t^{(1)}, \epsilon_{it})\}\Big|\nonumber\\
&&+\max_k\Big|\sum^T_{t=1} \{E_fG_{sk}(v_t^{(1)},\epsilon_{it})-G_{sk}(v_t^{(1)}, \epsilon_{it})\}\Big|
+\max_k\Big|\sum^T_{t=1} \{E_fG_{sk}(v_t^{(1)},\epsilon_{it})-E_fG_{bk}(v_t^{(1)}, \epsilon_{it})\}\Big|\nonumber\\
&:=&I_{1i}(\{v_t^{(1)}\},\{\epsilon_{it}\})+I_{2i}(\{v_t^{(1)}\},\{\epsilon_{it}\})+II_{1i}(\{v_t^{(1)}\},\{\epsilon_{it}\})+II_{2i}(\{v_t^{(1)}\},\{\epsilon_{it}\}).
\end{eqnarray}
A closer look at $R(C_k, v^{(1)}_t, \epsilon_{it})$ shows that $R(C_k, v^{(1)}_t, \epsilon_{it})$ is a piecewise linear monotone function in $v^{(1)}_t$ with turning points $\{v^{(1)}_t; \tau_{it}^{(1)}= \pm\epsilon_{it}\}$, and the principal term of $E_fR(C_k, v^{(1)}_t, \epsilon_{it})$ by (\ref{term2}) (i.e. $\sum^T_{t=1}h_i(0)(\bl_i^{\prime}v^{(1)}_t+C_k^{\prime}\bbf_t^0/\sqrt{T})^2$) is a quadratic function in $v^{(1)}_t$. Therefore
\begin{eqnarray}\label{uniforminv}
&&\Big|\sup_{\max_{t\leq T}\|v_t^{(1)}\|\leq C\log{T}/\sqrt{p}}V(\{v_t^{(1)}\}, \{\epsilon_{it}\})\Big|\nonumber\\
&\leq &\Big|V(\{\underline{v}^{(1)}\}, \{\epsilon_{it}\})I(\tau_{it}^{(1)}\not= \pm\epsilon_{it})\Big|+\Big|V(\{v_t^{(1)}\}, \{\epsilon_{it}\})I(\{\tau_{it}^{(1)}\}= \{\pm\epsilon_{it}\})\Big|\nonumber\\
&&+O_p\Big(\frac{M^3}{\sqrt{T}}+\frac{T(\log{T})^3}{p^{3/2}}\Big),
\end{eqnarray}
where the $O_p$ term holds uniformly in $\max_{t\leq T}\|v^{(1)}_t\|\leq C\log{T}/\sqrt{p}$, $V(\{v_t^{(1)}\}, \{\epsilon_{it}\})=\sum^T_{t=1}(R(C_k,v^{(1)}_t,\epsilon_{it})-E_fR(C_k,v^{(1)}_t,\epsilon_{it}))$ and $\underline{v}^{(1)}$ is an end point of $v_t^{(1)}$ whose coordinate components equal to $\pm sign(\epsilon_{it})\frac{\log{T}}{\sqrt{p}}$.  Because $\epsilon_{it}$ has probability density function $h_i(x)$, $E|V(\{v_t^{(1)}\}, \{\epsilon_{it}\})I(\{\tau_{it}^{(1)}\}= \{\pm\epsilon_{it}\})|=0$. Then it suffices to consider $|V(\{\underline{v}^{(1)}\}, \{\epsilon_{it}\})|$.

By Assumption 4 with $\mu_{it}=0$ and the Markov inequality,
\begin{eqnarray}\label{ub1}
P\Big\{\max_i\Big|I_{1i}(\{\underline{v}^{(1)}\}, \{\epsilon_{it}\})+II_{1i}(\{\underline{v}^{(1)}\},\{\epsilon_{it}\})\Big|>\epsilon_{p,T}\Big\}\leq CpM^{2r}e^{-\epsilon_{p,T}/\sigma_T},
\end{eqnarray}
where $\sigma_T=(\frac{M^3}{T^{1/2}}+\frac{T(\log{T})^3}{p^{3/2}})^{1/2}$.
By (\ref{term2}), Lemma \ref{lemma1} and Assumption 3,
\begin{equation}\label{ub2}
E\left\{\sup_{\max_t\|v^{(1)}_t\|\leq C\log{T}/\sqrt{p}}\max_i\Big(I_{2i}(\{v_t^{(1)}\}, \{\epsilon_{it}\})+II_{2i}(\{v_t^{(1)}\}, \{\epsilon_{it}\})\Big)\right\}\leq C\delta^{\prime}\Big(1+\frac{\sqrt{T}\log{T}}{\sqrt{p}M}\Big).
\end{equation}


Let
$$
\tilde{u}^{(1)}_i=-\frac{1}{2h_i(0)}\Big(\sum^T_{t=1}\frac{(\bW^{(0)})^{-1}\hat{\bbf}_t^{(1)}\hat{\bbf}_t^{(1)\prime}(\bW^{(0)\prime})^{-1}}{T}\Big)^{-1}\sum^T_{t=1}D_{it}\frac{(\bW^{(0)})^{-1}\hat{\bbf}_t^{(1)}}{\sqrt{T}}.
$$
(\ref{dec2})-(\ref{ub2}) show that
\begin{eqnarray}\label{bh}
&&\sum^T_{t=1}G_f(\tilde{u}_i, v^{(1)}_t)\nonumber\\
&=&\sum^T_{t=1}D_{it}\tau^{(1)}_{it}+\sum^T_{t=1}h_i(0)\Big(\tilde{u}^{\prime}_i(\bW^{(0)})^{-1}\hat{\bbf}^{(1)}_t/\sqrt{T}+\bl_i^{0\prime}(\bW^{(0)})^{-1}v^{(1)}_t\Big)^2\nonumber\\
&&+O_p\Big(\epsilon_{p,T}+\frac{M^3}{\sqrt{T}}+\frac{T(\log{T})^3}{p^{3/2}}\Big)+o_p\Big(1+\frac{\sqrt{T}\log{T}}{\sqrt{p}M}\Big)\nonumber\\
&=& (\tilde{u}_i-\tilde{u}^{(1)}_i)^{\prime}\Big(\sum^T_{t=1}h_i(0)\frac{\hat{\bbf}^{(1)}_t\hat{\bbf}_t^{(1)\prime}}{T}\Big)(\tilde{u}_i-\tilde{u}^{(1)}_i)\nonumber\\
&&+\sum^T_{t=1}h_i(0)\Big(\bl_i^{0\prime}(\bW^{(0)})^{-1}v^{(1)}_t\Big)^2+\sum^T_{t=1}D_{it}\bl_i^{0\prime}(\bW^{(0)})^{-1}v^{(1)}_t+O_p(\epsilon_{p,T})+o_p\Big(1+\sqrt{\frac{T}{p}}M\log{T}\Big)\nonumber\\
&&+O_p\Big(\frac{M^3}{\sqrt{T}}+\frac{T(\log{T})^3}{p^{3/2}}\Big)+O_p\Big(\frac{M}{\sqrt{p}}\Big),
\end{eqnarray}
where the $o_p$ and $O_p$ terms hold uniformly in $\{\max_i\|\tilde{u}_i|\leq M, \max_t\|v_t^{(1)}\|\leq C\log{T}/\sqrt{p}\}$. Without affecting the asymptotics below, we restrict that $|o_p(1+\sqrt{T/p}M\log{T})|\leq \epsilon(1+\sqrt{T/p}M\log{T})$ for some arbitrarily small $\epsilon>0$.
Let
$$
\sigma_{p, T}=:\frac{M^3}{\sqrt{T}}+\frac{T(\log{T})^3}{p^{3/2}}+\frac{M}{\sqrt{p}}.
$$
Next, we show that $\tilde{u}_i$ is around $\tilde{u}^{(1)}_i$. We first restrict our study on the set $\{\max_i\|\tilde{u}^{(1)}_i\|\leq M\}$. Let $\tilde{u}_i=\beta_i e_i+\tilde{u}^{(1)}_i$ for $\beta_i>\delta$ and $u_i^*=\delta e_i+\tilde{u}^{(1)}_i$. By Lemma \ref{lemma3}, (\ref{bh}), the convexity of $G_f(\tilde{u}_i, \hat{v}^{(1)}_t)$ in $\tilde{u}_i$ for fixed $\hat{v}^{(1)}_t$, $\bbf_t^0$ and $\bW^{(0)}$,
\begin{eqnarray}\label{convex}
&&\sum^T_{t=1}\Big(G_f(\tilde{u}_i, \hat{v}_t^{(1)})-G_f(\tilde{u}^{(1)}_i, \hat{v}^{(1)}_t)\Big)> \frac{\beta_i}{\delta}\sum^T_{t=1}\Big(G_f(u_i^*, \hat{v}_t^{(1)})-G_f(\tilde{u}^{(1)}_i, \hat{v}_t^{(1)})\Big)\nonumber\\
&=&\beta_i\delta e_i^{\prime}\Big(\sum^T_{t=1}h_i(0)\frac{\hat{\bbf}^{(1)}_t\hat{\bbf}_t^{(1)\prime}}{T}\Big)e_i+\Big\{O_p(\epsilon_{p,T})+O_p(\sigma_{p, T})+o_p\Big(1+\sqrt{\frac{T}{p}}M\log{T}\Big)\Big\}\frac{\beta_i}{\delta},
\end{eqnarray}
where the $o_p$ and $O_p$ terms hold uniformly in $\{\max_i\|\tilde{u}_i|\leq M\}$. Now, we prove the first equation by setting $M=C\{\log{p}+({\epsilon T}/{p})^{1/2}\}\log{T}$, $\delta=C_1\{\log{p}+({\epsilon T}/{p})^{1/2}\}\log{T}$ and $\epsilon_{p,T}=\delta^2/M^*$ for large enough $M^*$, $C$ and $C_1$. By the Chebyshev inequality,
\begin{eqnarray}\label{Chebyshev1}
&&P\Big\{\max_{1\leq i\leq p}\|\sqrt{T}\hat{u}^{(1)}_i-\tilde{u}^{(1)}_i\|>\delta\Big\}\nonumber\\
&=&P\Big\{\max_{1\leq i\leq p}\|\sqrt{T}\hat{u}^{(1)}_i-\tilde{u}^{(1)}_i\|>\delta, \max_i\|\tilde{u}^{(1)}_i\|\leq M, \max_t\|\hat{v}_t^{(1)}\|\leq C\log{T}/\sqrt{p}\Big\}\nonumber\\
&&+P\Big\{\max_i\|\tilde{u}^{(1)}_i\|> M\Big\}+P\Big\{\max_t\|\hat{v}_t^{(1)}\|> C\log{T}/\sqrt{p}\Big\}.
\end{eqnarray}
Assumption 4 and the Markov inequality show that
\begin{equation}\label{barui}
P\Big\{\max_i\|\tilde{u}^{(1)}_i\|> M\Big\}\leq Cpe^{-M}=o(1).
\end{equation}
Lemma \ref{lemma1}, Assumption 4-5, the Bonferroni inequality and the Markov inequality prove that
\begin{equation}\label{hatvt}
P\Big\{\max_t\|\hat{v}^{(1)}_t\|> C\log{T}/\sqrt{p}\Big\}=o(1).
\end{equation}
Assumptions 4-5 and (\ref{convex}) yield
\begin{eqnarray}\label{barui1}
&&P\Big\{\max_{1\leq i\leq p}\|\sqrt{T}\hat{u}^{(1)}_i-\tilde{u}^{(1)}_i\|>\delta, \max_i\|\tilde{u}^{(1)}_i\|\leq M, \max_t\|\hat{v}_t^{(1)}\|\leq C\log{T}/\sqrt{p}\Big\}\nonumber\\
&\leq & P\Big\{\beta_i\delta e_i^{\prime}(\sum^T_{t=1}h_i(0)\frac{\hat{\bbf}^{(1)}_t\hat{\bbf}_t^{(1)\prime}}{T})e_i+\Big[O_p(\epsilon_{p,T})+O_p(\sigma_{p, T})-\epsilon(1+\sqrt{T/p})\Big]\frac{\beta_i}{\delta}<0\Big\}\nonumber\\
&=&P\Big\{ e_i^{\prime}(\sum^T_{t=1}h_i(0)\frac{\hat{\bbf}^{(1)}_t\hat{\bbf}_t^{(1)\prime}}{T})e_i+\delta^{-2}\Big[O_p(\epsilon_{p,T})+O_p(\sigma_{p, T})-\epsilon(1+\sqrt{T/p}M\log{T})\Big]<0\Big\}\nonumber\\
&=&P\Big\{ e_i^{\prime}(\sum^T_{t=1}h_i(0)\frac{\hat{\bbf}^{(1)}_t\hat{\bbf}_t^{(1)\prime}}{T})e_i-\epsilon<0\Big\}=o(1),
\end{eqnarray}
by letting $p, T\rightarrow\infty$ first and then $\epsilon\rightarrow 0$, where the last equality is due to Assumption 3 and the identifiability condition

\begin{equation}\label{equ:identifiability}
\bL^{\prime}\bL/p \ \text{is  diagonal  and } \frac{1}{T}\sum^T_{t=1}\bbf_t\bbf_t^{\prime}=\Ib_r.
\end{equation}

(\ref{Chebyshev1})-(\ref{barui1}) and Lemma \ref{lemma1} prove that
$$
\hat{u}^{(1)}_i=\frac{1}{\sqrt{T}}\tilde{u}^{(1)}_i+O_{p}\Big(\frac{\log{p}+(\sqrt{\epsilon T/p})}{\sqrt{T}}\Big)=O_p\Big(\frac{\log{p}}{\sqrt{T}}\Big)+o_p\Big(\frac{1}{\sqrt{p}}\Big),
$$
where the $O_p$ and $o_p$ terms hold uniformly in $i\leq p$, and the last equality is due to the condition on $D_{it}$ in Assumption 4. This proves the first equation of Lemma \ref{lemma2}.

To prove the Barhadur representation for $\hat{u}_{i}^{(1)}$ in Lemma \ref{lemma2}, let $\delta$ and $\epsilon_{p, T}$ be two arbitrarily small constants, and $M=C\log{p}$. Because we further have the condition that $((\log{p})^2(\log{T})^2+(\log{T})^3/\sqrt{p}){T}/{p}+{\log^5{p}}/{\sqrt{T}}=o(1)$ and the condition on $\epsilon_{it}$'s in Assumption 4, the probability in (\ref{ub1}) is $o(1)$ and (\ref{convex})-(\ref{barui1}) are still true, which proves
$$
P\Big\{\max_{i\leq p}\|\sqrt{T}\hat{u}_i^{(1)}-\tilde{u}_i^{(1)}\|>\delta\Big\}\rightarrow 0.
$$
To complete the proof of the second equation of Lemma \ref{lemma2}, it suffices to prove
\begin{equation}\label{t-bar}
P\Big\{\max_i\|\tilde{u}_i^{(1)}-\overline{u}_i^{(1)}\|>\delta\Big\}\rightarrow 0.
\end{equation}
By Lemma \ref{lemma3} and the boundedness of $D_{it}$,
\begin{equation}\label{t-bar-1}
\max_i\Big\{\frac{1}{\sqrt{T}}\sum^T_{t=1}D_{it}(\bW^{(0)})^{-1}(\hat{\bbf}_t^{(1)}-\bW^{(0)}\bbf_t^0)\Big\}=O_p\Big(\sqrt{T}\log{T}/\sqrt{p}\Big)=o_p(1).
\end{equation}
\begin{eqnarray}\label{t-bar-3}
&&\frac{1}{T}\sum^T_{t=1}\Big\{(\bW^{(0)})^{-1}\hat{\bbf}_t^{(1)}-\bbf_t^0\Big\}\hat{\bbf}_t^{(1)\prime}(\bW^{(0)\prime})^{-1}\nonumber\\
&\leq& C\max_t\Big\|\hat{\bbf}_t^{(1)}-\bW^{(0)}\bbf_t^0\Big\|\frac{1}{T}\sum^T_{t=1}\|\hat{\bbf}_t^{(1)\prime}\|\|(\bW^{(0)\prime})^{-1}\|=O_p\Big(\frac{\log{T}}{\sqrt{p}}\Big),\nonumber\\
&&\frac{1}{T}\sum^T_{t=1}\bbf_t^0\Big\{\hat{\bbf}_t^{(1)\prime}(\bW^{(0)\prime})^{-1}-\bbf_t^{0\prime}\Big\}\nonumber\\
&\leq & C\max_t\Big\|\hat{\bbf}_t^{(1)}-\bW^{(0)}\bbf_t^0\Big\|\frac{1}{T}\sum^T_{t=1}\|\bbf_t^{0\prime}\|\|(\bW^{(0)\prime})^{-1}\|=O_p\Big(\frac{\log{T}}{\sqrt{p}}\Big).
\end{eqnarray}
Assumption {\bf 1'} and (\ref{t-bar-1})-(\ref{t-bar-3}) prove (\ref{t-bar}).

\end{proof}

Next, we turn to the $k$-th ($k\geq 2$) update of $\bbf_t$ and its corresponding distance to $\bW^{(k-1)}\bbf_t^0$ with updated $\bl_i^{(k-1)}$'s. Similar to $\bW^{(0)}$ defined in the proof of Lemma \ref{lemma1}, here we define
$$
\bW^{(k)}=:\Big\{\sum^p_{i=1}h_i(\xi_{it}^{(k)})\hat{\bl}_i^{(k)}\hat{\bl}_i^{(k)\prime}\Big\}^{-1}\sum^{p}_{i=1}h_i(\xi_{it}^{(k)})\hat{\bl}_i^{(k)}\bl_i^{0\prime},
$$
where for some $\theta^{(k)}_{it}\in [0, 1]$, $\xi_{it}^{(k)}=\theta^{(k)}_{it}(\hat{\bl}_i^{(k)\prime}\bW^{(k)}-\bl_i^{0\prime})\bbf_t^0$  is some variable between $0$ and  $(\hat{\bl}_i^{(k)\prime}\bW^{(k)}-\bl_i^{0\prime})\bbf_t^0$. Notice that $\bW^{(k)}$ depends only on $\hat{\bl}_i^{(k)}$'s, $\bl_i^0$'s and $\bbf_t^0$.
From now on, we define $\hat{v}^{(k)}_t=\hat{\bbf}_t^{(k)}-\bW^{(k-1)}\bbf_t^0$,  and $v^{(k)}_t=\bbf_t^{(k)}-\bW^{(k-1)}\bbf_t^0$, and $\hat{u}^{(k)}_i=\bW^{(k-1)\prime}\hat{\bl}_i^{(k)}-\bl_i^0$ and $u^{(k)}_i=\bW^{(k-1)\prime}\bl_i^{(k)}-\bl_i^0$.
Let $A_k$ and $B_k$ be the sets of samples so that $\max_t|\hat{v}_t^{(k)}|\leq C\frac{\log{T}}{\sqrt{p}}$ and $\max_i|\hat{u}_i^{(k)}|\leq C({\log{p}}/{\sqrt{T}}+{\epsilon}/{\sqrt{p}})$, respectively.

\begin{lemma}\label{lemma3}
For fixed $\hat{\bl}^{(k-1)}_i$'s and given $\bbf_t^0$, under Assumptions 1-5,
$$
P\left(\max_t\Big\|\sqrt{p}\hat{v}^{(k)}_t-\overline{v}^{(k)}_t\Big\|> \delta\right)\rightarrow 0,
$$
for any $\delta>0$, where
$$
\overline{v}^{(k)}_t=-\frac{1}{2}\Big\{\frac{1}{p}\sum^p_{i=1}h_i(0)\hat{\bl}^{(k-1)}_i\hat{\bl}_i^{(k-1)\prime}\Big\}^{-1}\sum^p_{i=1}D_{it}\frac{\hat{\bl}_i^{(k-1)}}{\sqrt{p}},
$$
\end{lemma}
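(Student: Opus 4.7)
\medskip
\noindent\textbf{Proof plan for Lemma 3.}

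My plan is to mimic the proof of Lemma 1, but with two essential modifications that exploit the control on $\hat{\bl}_i^{(k-1)}$ provided by (the previous iteration's analog of) Lemma 2. First, since Lemma 2 (and its inductive extension) gives $\max_i \|\hat u_i^{(k-1)}\|=\max_i\|\bW^{(k-2)\prime}\hat{\bl}_i^{(k-1)}-\bl_i^0\|=o_p(1)$, the quantity $\xi_{it}^{(k-1)}=\theta^{(k-1)}_{it}\hat u_i^{(k-1)\prime}\bbf_t^0$ in the definition of $\bW^{(k-1)}$ is $o_p(1)$ uniformly in $i,t$ (after handling $\max_t\|\bbf_t^0\|$ by Assumption 1(2) and the Markov inequality). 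Continuity of $h_i$ together with Assumption 1' then permits the replacement $h_i(\xi_{it}^{(k-1)})\leadsto h_i(0)$ inside the normalizing matrix in $\overline v_t^{(k)}$, which is why the limit has $h_i(0)$ rather than $h_i(\hat u_i^{(k-1)\prime}\bbf_t^0)$. Second, I will show that the score direction collapses from $\sum_i(\tilde D_{it}^{(k-1)}-E_f\tilde D_{it}^{(k-1)})\hat{\bl}_i^{(k-1)}/\sqrt p$ to $\sum_i D_{it}\hat{\bl}_i^{(k-1)}/\sqrt p$ plus $o_p(1)$ uniformly in $t$.

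Concretely, I expand $\sum_{i=1}^p g_l(\hat u_i^{(k-1)}, v_t)=:\sum_i G_l(\hat u_i^{(k-1)},\tilde v_t)$ exactly as in display (\ref{exp}) with $\tilde v_t=\sqrt p\,v_t$ restricted to $\|\tilde v_t\|\le M=C\log T$. The conditional-mean term vanishes by the definition of $\bW^{(k-1)}$ as in (\ref{eI}). The Taylor expansion of $\sum_i E_f\tilde R_{it}^{(k-1)}$ reproduces (\ref{eR}), yielding the quadratic form $\tilde v_t^{\prime}\{p^{-1}\sum_i h_i(\xi_{it}^{(k-1)})\hat{\bl}_i^{(k-1)}\hat{\bl}_i^{(k-1)\prime}\}\tilde v_t$; the $h_i(\xi_{it}^{(k-1)})\to h_i(0)$ substitution introduces only $o_p(1)$ error uniformly in $\|\tilde v_t\|\le M$. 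The fluctuation $\sum_i(\tilde R_{it}^{(k-1)}-E_f\tilde R_{it}^{(k-1)})$ is handled by the monotone envelope/$\delta'$-grid chaining in (\ref{supR})--(\ref{II}), combined with Assumption 4 and the Markov inequality for the uniform-in-$t$ tail, precisely as in the passage $(\ref{uI})$--$(\ref{uII})$. Likewise, to swap $\tilde D_{it}^{(k-1)}-E_f\tilde D_{it}^{(k-1)}$ for $D_{it}$, I bound $\sum_i[I(\epsilon_{it}\le \hat u_i^{(k-1)\prime}\bbf_t^0)-I(\epsilon_{it}\le 0)-H_i(\hat u_i^{(k-1)\prime}\bbf_t^0)+H_i(0)]\hat{\bl}_i^{(k-1)}/\sqrt p$: its conditional mean is zero and its variance is controlled by Assumption 3(2) (the $H_2$ bound with vanishing thresholds), so Assumption 4 yields a uniform-in-$t$ exponential tail under Assumption 5.

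Putting these pieces together, I obtain, uniformly in $\|\tilde v_t\|\le C\log T$,
\[
\sum_{i=1}^p G_l(\hat u_i^{(k-1)},\tilde v_t)=(\tilde v_t-\overline v_t^{(k)})^{\prime}\Big\{\tfrac{1}{p}\sum_{i=1}^p h_i(0)\hat{\bl}_i^{(k-1)}\hat{\bl}_i^{(k-1)\prime}\Big\}(\tilde v_t-\overline v_t^{(k)})+\text{(const in }\tilde v_t\text{)}+o_p(1),
\]
the analog of (\ref{exp1}). Convexity of $G_l$ in $\tilde v_t$ then lets me conclude, via the argument in (\ref{convex1})--(\ref{cheby1}), that for any $\delta>0$ and each $t$, $\|\sqrt p\,\hat v_t^{(k)}-\overline v_t^{(k)}\|\le\delta$ with probability tending to one. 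Finally, to upgrade to the uniform statement, I repeat the Markov/Bonferroni bound of the form $P\{\max_t\|\overline v_t^{(k)}\|>C\log T\}=o(1)$ using Assumption 4, and I invoke the uniform-in-$t$ envelope bounds (\ref{uI})--(\ref{uII}), which are exactly what Assumption 5 is tailored for.

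The main obstacle is the uniform-in-$t$ substitution of the \emph{random} $\hat{\bl}_i^{(k-1)}$ for a deterministic design: unlike in Lemma 1 where $\bl_i^{(0)}$ was fixed, here the envelope functions $G_{bk},G_{sk}$ and the score vector both depend on $\hat{\bl}_i^{(k-1)}$. I will handle this by conditioning on a high-probability event (from Lemma 2) of the form $\{\max_i\|\hat u_i^{(k-1)}\|\le C(\log p/\sqrt T)\}\cap\{\max_i\|\hat{\bl}_i^{(k-1)}\|\le C\}$, on which the bounded-coefficient hypotheses of Assumptions 3(2) and 4 apply verbatim with $c_{it}=(\hat{\bl}_i^{(k-1)})_\ell/\|\hat{\bl}_i^{(k-1)}\|$, and the deterministic-design arguments go through conditionally.
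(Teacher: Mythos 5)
Your overall strategy is the paper's: rerun the Lemma 1 expansion with $(\bl_i^{(0)},\bW^{(0)})$ replaced by $(\hat{\bl}_i^{(k-1)},\bW^{(k-1)})$, replace $h_i(\xi_{it}^{(k-1)})$ by $h_i(0)$ using the smallness of $\hat{u}_i^{(k-1)}$, swap $\tilde{D}_{it}^{(k-1)}-E_f\tilde{D}_{it}^{(k-1)}$ for $D_{it}$, and close with the convexity/Chebyshev argument. You also correctly identify the one genuinely new obstacle: the design $\hat{\bl}_i^{(k-1)}$ is now random. But your resolution of that obstacle has a gap.

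Conditioning on the high-probability event $\{\max_i\|\hat{u}_i^{(k-1)}\|\le C(\log p/\sqrt{T}+\epsilon/\sqrt{p})\}$ does not let you apply Assumptions 3(2) and 4 ``verbatim'' with $c_{it}$ built from $\hat{\bl}_i^{(k-1)}$: those assumptions require the coefficients $c_{it}$ (and the $\sigma_{it}$, $\mu_{it}$ in $H_1$) to be \emph{irrelevant to the $\epsilon_{it}$'s}, whereas $\hat{\bl}_i^{(k-1)}$ and $\hat{u}_i^{(k-1)}$ are functions of the entire array $\{\epsilon_{is}\}$ through the earlier iterations. Restricting to an event does not decouple them, so ``the conditional mean is zero'' and the $H_2$-variance bound do not follow for the plugged-in random design; the same objection applies to your treatment of the fluctuation term $\sum_i(\tilde{R}_{it}^{(k-1)}-E_f\tilde{R}_{it}^{(k-1)})$, whose envelopes $G_{bk},G_{sk}$ now depend on the data. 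Note also that a $\delta'$-grid cannot rescue this directly: the uniformity needed is over the $p\,r$-dimensional ball $B_{k-1}$ of possible $\{u_i^{(k-1)}\}$ configurations, which is not coverable by polynomially many grid points. The paper's device, which your proposal is missing, is to prove the concentration as a \emph{supremum over all deterministic} $\{u_i^{(k-1)}\}$ in $B_{k-1}$, and to tame that supremum by exploiting that $R(u,\cdot,\epsilon)$ is piecewise linear and monotone in $u$ while its conditional mean is (to leading order) quadratic: the supremum of the centered sum is therefore attained either at turning points $\{\tilde{\tau}_{it}^{(k-1)}=\pm\tilde{\epsilon}_{it}^{(k-1)}\}$ (a null event since $\epsilon_{it}$ has a density) or at boundary configurations $\underline{u}^{(k-1)}$ that are deterministic and common across $i$, to which Assumptions 3--4 do apply. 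Only after this uniform bound is established does one substitute the random $\hat{u}_i^{(k-1)}$, using $P(B_{k-1})\to 1$ from Lemma 2 (this is also how Lemma 4 chains the iterations). Your proof needs this reduction, or an equivalent decoupling argument, to be complete.
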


\begin{proof}
The proof of Lemma \ref{lemma3} is similar to that of Lemma \ref{lemma1} except for updating $\bl_i^{(0)}$ and $\bW^{(0)}$ by $\hat{\bl}_i^{(k-1)}$ and $\bW^{(k-1)}$, respectively, and noting that $\hat{u}_i^{(k-1)}$'s are in $B_{k-1}$. Let $\tilde{\epsilon}^{(k)}_{it}=\epsilon_{it}-u_i^{(k)\prime}\bbf_t^0$ and $\tilde{\tau}_{it}^{(k)}=p^{-1/2}\bl_i^{\prime}\tilde{v}_t^{(k)}$. Indeed, we show that the expansion of $\sum^p_{i=1}g_l(u_i^{(k-1)}, v_t^{(k)})$ in (\ref{exp1}) with $(u_i^{(0)}, v_t)$ there replaced by $(u_i^{(k-1)}, v_t^{(k)})$ holds uniformly in $\{\max_t\|v_t^{(k)}\|\leq C\log{T}/\sqrt{p}, \max_i\|u_i^{(k-1)}\|\leq C(\log{p}/\sqrt{T}+\epsilon/\sqrt{p})\}$.
First, (\ref{eR}) with $\tilde{v}_t$ there replaced by $\tilde{v}_t^{(k)}=\sqrt{p}v_t^{(k)}$ holds uniformly in $\{\max_i\|u_i^{(k-1)}\|\leq C(\log{p}/\sqrt{T}+\epsilon/\sqrt{p})\}$. Then it suffices to prove that
\begin{equation}\label{R-eR-k}
\sup_{\max_i\|u_i^{(k-1)}\|\leq C(\log{p}/\sqrt{T}+\epsilon/\sqrt{p})}\sum^p_{i=1}\Big(\tilde{R}_{it}^{(k-1)}-E_f\tilde{R}_{it}^{(k-1)}\Big)=o_p(1),
\end{equation}
with $\tilde{R}_{it}^{(0)}$ in (\ref{R-eR}) replaced by $\tilde{R}_{it}^{(k-1)}$ which is similarly defined as $\tilde{R}_{it}^{(0)}$ except for replacing $\bl_i^{(0)}$ and $\bW^{(0)}$ by $\bl_i^{(k-1)}$ and $\bW^{(k-1)}$, respectively. To this end, replace $(u_i^{(0)}, \bl_i^{(0)})$ in (\ref{Imean}) by $(u_i^{(k-1)}, \bl_i^{(k-1)})$, one easily shows that
\begin{equation}\label{Imeank}
\sup_{\max_i\|u_i^{(k-1)}\|\leq C(\log{p}/\sqrt{T}+\epsilon/\sqrt{p})}\Big[I_{2t}+II_{1t}\Big]\leq C\delta^{\prime}.
\end{equation}
Define
$$
U\Big(\{u_i^{(k-1)}\}, \{\epsilon_{it}\}\Big)=\sum^p_{i=1}\Big(R\Big(u_i^{(k-1)},C_k, \epsilon_{it}\Big)-E_fR\Big(u_i^{(k-1)},C_k, \epsilon_{it}\Big)\Big).
$$
We see that
$$
\sum^p_{i=1}E_fR\Big(u_i^{(k-1)},C_k, \epsilon_{it}\Big)=\sum^p_{i=1}h_i(0)\Big(\tilde{\tau}_{it}^{(k-1)}\Big)^2+o_p(1),
$$
where $o_p(1)$ holds uniformly in $\{\max_t\|v_t^{(k)}\|\leq C\log{T}/\sqrt{p}, \max_i\|u_i^{(k-1)}\|\leq C(\log{p}/\sqrt{T}+\epsilon/\sqrt{p})\}$. Notice that $R(u_i^{(k-1)},C_k, \epsilon_{it})$ is a piecewise linear function in $\tilde{\tau}_{it}^{(k-1)}$ with turning points $\pm\tilde{\epsilon}_{it}^{(k-1)}$ while the principal term of $E_fR(u_i^{(k-1)},C_k, \epsilon_{it})$ is a quadratic function in $\tilde{\tau}_{it}^{(k-1)}$. Then $\sup_{\max_i\|u_i^{(k)}\|\leq C(\log{p}/\sqrt{T}+\epsilon/\sqrt{p})}U(\{u_i^{(k-1)}\}, \{\epsilon_{it}\})$ is achieved when $\tilde{\tau}_{it}^{(k-1)}=\tilde{\epsilon}_{it}^{(k-1)}$ or $\tilde{\tau}_{it}^{(k-1)}$ equals an end point. Because the probability density function exists for $\epsilon_{it}$, $E_f|U(\{u_i^{(k-1)}\}, \{\epsilon_{it}\})|I\{\tilde{\tau}_{it}^{(k-1)}=\tilde{\epsilon}_{it}^{(k-1)}\}=0$. When $\tilde{\tau}_{it}^{(k-1)}$, $R(u_i^{(k-1)}, C_k, \epsilon_{it})$ is a piecewise linear function in $u_i^{(k-1)}$, and then $$\sup_{\max_i\|u_i^{(k-1)}\|\leq C(\log{p}/\sqrt{T}+\epsilon/\sqrt{p})}U\Big(\{u_i^{(k-1)}\}, \{\epsilon_{it}\}\Big)
$$
is achieved when $u_i^{(k-1)}=\pm \epsilon_{it}-C\log{T}/\sqrt{p}$, $u_i^{(k-1)}=\pm \epsilon_{it}$, or $\|u_i^{(k-1)}\|=C(\log{p}/\sqrt{T}+\epsilon/\sqrt{p})$ whose solution is denoted by $\underline{u}^{(k-1)}$ which is independent of $i$. For the first the two cases, the probability of the two events are zero, hence it is enough to consider $U(\{\underline{u}^{(k-1)}\}, \{\epsilon_{it}\})$ which, similar to (\ref{Ivar}), is $o_p(1)$ due to Assumptions {\bf 1}' and 3. This completes the proof of (\ref{R-eR-k}) and hence the expansion of $\sum^p_{i=1}g_l(u_i^{(k-1)}, v_t^{(k)})$ in (\ref{exp1}) with $(u_i^{(0)}, v_t, \tilde{D}_{it}^{(0)}, \bW^{(0)}, \bl_i^{(0)})$ there replaced by $(u_i^{(k-1)}, v_t^{(k)}, \tilde{D}_{it}^{(k)}, \bW^{(k)}, \bl_i^{(k)})$ holds uniformly in $\{\max_t\|v_t^{(k)}\|\leq C\log{T}/\sqrt{p}, \max_i\|u_i^{(k-1)}\|\leq C(\log{p}/\sqrt{T}+\epsilon/\sqrt{p})\}$.

Following exactly the same lines as in the remaining proof of Lemma \ref{lemma1} (the lines below (\ref{exp1})), we have
\begin{eqnarray}\label{Chebyshevk}
&&P\Big\{\max_t\|\sqrt{p}\hat{v}^{(k)}_t-\overline{v}^{(k)}_t\|>\delta\Big\}\nonumber\\
&\leq & P\Big\{\sum^p_{i=1}\Big[G_l(\hat{u}_i^{(k-1)},\breve{v}^{(k)}_t)-G_l(\hat{u}_i^{(k-1)},\overline{v}^{(k)}_t)\Big]<0, \max_t\|\overline{v}^{(k)}_t\|\leq C\log{T}-\delta,\nonumber\\
&& \max_i\|\hat{u}_i^{(k-1)}\|\leq C(\log{p}/\sqrt{T}+\epsilon/\sqrt{p})\Big\} + P\Big\{\max_t\|\overline{v}^{(k)}_t\|> C\log{T}-\delta\Big\}\nonumber\\
&& +P\Big\{\max_i\|\hat{u}_i^{(k-1)}\|> C\Big(\frac{\log{p}}{\sqrt{T}}+\frac{\epsilon}{\sqrt{p}}\Big)\Big\} \leq C\epsilon.
\end{eqnarray}
Notice that $\max_{i}|\tilde{D}_{it}^{(k)}-D_{it}|=o_p(1)$ due to the restriction that
$$
\max_i\|\hat{u}_i^{(k-1)}\|\leq C\Big(\log{p}/\sqrt{T}+\epsilon/\sqrt{p}\Big).
$$
This shows that $\tilde{D}_{it}^{(k)}-E_f\tilde{D}_{it}^{(k)}$ can be replaced by $D_{it}$ in the definition of $\overline{v}_t^{(1)}$. This completes the proof of the lemma.

\end{proof}

\begin{lemma}\label{lemma4}
Under Assumptions 1-5,
$
A_k\subseteq B_k\subseteq A_{k+1}
$
with probability approaching one.
\end{lemma}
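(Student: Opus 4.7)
The plan is to treat the two inclusions separately, as each one is a direct iterative repackaging of one of the previous lemmas, with the only work being to check that the hypotheses of those lemmas hold on the event we are conditioning on.

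For $A_k \subseteq B_k$: this is the $k$-th iterate analog of Lemma~\ref{lemma2}. Restricting to the event $A_k = \{\max_t\|\hat v_t^{(k)}\|\leq C\log T/\sqrt p\}$ gives us exactly the uniform bound on the factor estimates that was used in the display above (\ref{term2}) to control $E_f R_{it}^{(1)}$ by $O(M^3/\sqrt T + T(\log T)^3/p^{3/2})$, so the same decomposition (\ref{dec2}) applies after replacing the superscript $(1)$ by $(k)$. The monotone bracketing argument (\ref{doublebound})--(\ref{doublebound2}), together with Assumption~\ref{ass4} and the Markov/Bonferroni inequality, then gives the uniform-in-$i$ concentration of $\sum_t(R_{it}^{(k)}-E_f R_{it}^{(k)})$. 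Convexity of $G_f$ in $\tilde u_i$ for fixed $\hat v_t^{(k)}$, combined with Assumption~\ref{ass3}(1) providing positive-definite scaling, then yields $\max_i\|\sqrt T\hat u_i^{(k)}-\tilde u_i^{(k)}\|=o_p(\log p)$. Finally Assumption~\ref{ass4} controlling $D_{it}$ and the identifiability (\ref{equ:identifiability}) give $\max_i\|\tilde u_i^{(k)}\|=O_p(\log p)$, so $\max_i\|\hat u_i^{(k)}\|\leq C(\log p/\sqrt T+\epsilon/\sqrt p)$ with probability tending to one, which is precisely $B_k$.

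For $B_k\subseteq A_{k+1}$: this is Lemma~\ref{lemma3} applied with $(k-1)$ upgraded to $k$. On the event $B_k$ we have $\max_i\|\hat u_i^{(k)}\|\leq C(\log p/\sqrt T+\epsilon/\sqrt p)$, which is exactly the restriction needed to carry out the uniform bracketing step (\ref{R-eR-k}) for the expansion of $\sum_i g_l(\hat u_i^{(k)}, v_t^{(k+1)})$. Applying Lemma~\ref{lemma3} then yields, uniformly in $t$, $\sqrt p\,\hat v_t^{(k+1)}=\overline v_t^{(k+1)}+o_p(1)$. Since $\overline v_t^{(k+1)}$ is a bounded linear functional of $\{D_{it}\}$ with the normalized loading $\{\hat\bl_i^{(k)}\}$, Assumption~\ref{ass4} on $D_{it}$, a Markov--Bonferroni bound, and Assumption~\ref{ass5} together give $\max_t\|\overline v_t^{(k+1)}\|=O_p(\log T)$, so $\max_t\|\hat v_t^{(k+1)}\|\leq C\log T/\sqrt p$ with probability tending to one, which is $A_{k+1}$.

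The main obstacle is bookkeeping: ensuring that the cross-iteration error term $o_p(1+\sqrt{T/p}\,M\log T)$ that appears in the analog of (\ref{bh}) remains negligible after replacing the fixed $\bl_i^{(0)}$ used in Lemmas~\ref{lemma1}--\ref{lemma2} by the random iterate $\hat\bl_i^{(k-1)}$. This is why the event $B_{k-1}$ (hence the hypothesis on $k$-dependence) is needed as an input — it provides the deterministic bound on $\max_i\|\hat u_i^{(k-1)}\|$ that allows the sup over the bracketing grid to be absorbed, and it is here that the rate condition $(\log^2 p\log^2 T+\log^3 T/\sqrt p)T/p+\log^5 p/\sqrt T=o(1)$ from Assumption~\ref{ass5} is consumed to guarantee that the exponential tail estimate coming from Assumption~\ref{ass4} beats the union bound over $p$ (for the loading step) and over $T$ (for the factor step).
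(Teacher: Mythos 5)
Your proposal is correct and follows essentially the same route as the paper, which likewise proves $A_k\subseteq B_k$ by re-running the proof of Lemma~\ref{lemma2} with the iterate $\hat v_t^{(k)}$ in place of $\hat v_t^{(1)}$, proves $B_k\subseteq A_{k+1}$ by invoking the proof of Lemma~\ref{lemma3}, and combines the two events by a union bound. The only quibble is that you attribute the stronger rate condition $(\log^2p\log^2T+\log^3T/\sqrt p)T/p+\log^5p/\sqrt T=o(1)$ to Assumption~\ref{ass5}, whereas that condition is only needed for the Bahadur representation in Lemma~\ref{lemma2}; the rate bound defining $B_k$ requires only Assumptions 1--5 as stated.
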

\begin{proof}
The proof of Lemma \ref{lemma3} shows that once $\hat{u}_i^{(k)}$'s enter $B_k$ and satisfy Assumptions 1-5, $\hat{v}_t^{(k+1)}$'s will satisfy the condition of $A_{k+1}$ on a subsample space $S_1$ with probability larger than $1-\epsilon/2$ for some arbitrarily small $\epsilon>0$. The proof of Lemma \ref{lemma2} shows that once $\hat{v}_t^{(k)}$'s enter $A_k$ and satisfy Assumptions 1-5, $\hat{u}_i^{(k)}$'s will satisfy the condition of $B_{k}$ on a subsample space $S_2$ with probability larger than $1-\epsilon/2$. Then $P(S_1\cap S_2)\geq 1-\epsilon$.

\end{proof}

Next, we show that $\bW^{(k)}$ is close to $\bW^{(k-1)}$ when $k\geq 1$ and hence as implied by Lemma \ref{lemma2}, $\hat{\bl}_i^{(k)\prime}\bW^{(k-1)}-\bl_i^{0\prime}$ is close to zero, which further shows that $\bW^{(k)}$ and $\tilde{\bW}_0$ are close enough.

\begin{lemma}\label{lemma5}
Under Assumptions 1-5,
\begin{eqnarray}
\Big\|\bW^{(k)}-\bW^{(k-1)}\Big\|&=& O_p(\frac{\log{p}}{\sqrt{T}})+o_p\Big(\frac{1}{\sqrt{p}}\Big),\label{w1-w0} \\
\max_i\Big\|\bW^{(k)\prime}\hat{\bl}_i^{(k)}-\bl_i^0\Big\|&=&O_p\Big(\frac{\log{p}}{\sqrt{T}}\Big)+o_p\Big(\frac{1}{\sqrt{p}}\Big),\label{l-l0} \\
\Big\|\bW^{(k)}-\tilde{\bW}_0\Big\|&= & O_p\Big(\frac{\log{p}}{\sqrt{T}}\Big)+o_p\Big(\frac{1}{\sqrt{p}}\Big).\label{w1-w00}
\end{eqnarray}

If further $[(\log{p})^2(\log{T})^2+(\log{T})^3/\sqrt{p}]\frac{T}{p}+\frac{\log^5{p}}{\sqrt{T}}=o(1)$,
\begin{eqnarray}
\Big\|\bW^{(k)}-\bW^{(k-1)}\Big\|&=& o_p\Big(\frac{1}{\sqrt{T}}\Big),\label{1-w1-w0} \\
\max_i\Big\|\bW^{(k)\prime}\hat{\bl}_i^{(k)}-\bl_i^0-\frac{1}{\sqrt{T}}\overline{u}_i^{(k)}\Big\|&=&o_p\Big(\frac{1}{\sqrt{T}}\Big),\label{1-l-l0} \\
\Big\|\bW^{(k)}-\tilde{\bW}_0\Big\|&= & O_p\Big(\frac{1}{\sqrt{T}}\Big).\label{1-w1-w00}
\end{eqnarray}
\end{lemma}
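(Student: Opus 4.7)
The proof hinges on an algebraic identity for $\bW^{(k)}-\bW^{(k-1)}$. Writing $A_k := \frac{1}{p}\sum_{i=1}^p h_i(\xi_{it}^{(k)})\hat{\bl}_i^{(k)}\hat{\bl}_i^{(k)\prime}$ and using the defining equation $A_k\bW^{(k)} = \frac{1}{p}\sum_i h_i(\xi_{it}^{(k)})\hat{\bl}_i^{(k)}\bl_i^{0\prime}$, subtract $A_k\bW^{(k-1)}$ from both sides and substitute $\hat{\bl}_i^{(k)\prime}\bW^{(k-1)} - \bl_i^{0\prime} = \hat{u}_i^{(k)\prime}$ to obtain
\[
\bW^{(k)} - \bW^{(k-1)} = -A_k^{-1}\cdot\frac{1}{p}\sum_{i=1}^p h_i(\xi_{it}^{(k)})\,\hat{\bl}_i^{(k)}\hat{u}_i^{(k)\prime}.
\]
Assumption {\bf 1}' together with Assumption \ref{ass3} ensures that $\|A_k^{-1}\|$ and $\frac{1}{p}\sum_i h_i(\xi_{it}^{(k)})\|\hat{\bl}_i^{(k)}\|$ are $O_p(1)$, so $\|\bW^{(k)}-\bW^{(k-1)}\|\leq C\cdot\max_i\|\hat{u}_i^{(k)}\|$.

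For the first set of bounds, I argue by induction on $k$. Lemma \ref{lemma4} guarantees that once $\hat{v}_t^{(k-1)}$ lies in $A_{k-1}$, the next round stays in $B_k\subseteq A_{k+1}$, so Lemma \ref{lemma2} (applied at each stage $k$, not just $k=1$) yields $\max_i\|\hat{u}_i^{(k)}\|=O_p(\log p/\sqrt{T})+o_p(1/\sqrt{p})$, which combined with the identity above gives (\ref{w1-w0}). The bound (\ref{l-l0}) follows from the decomposition $\bW^{(k)\prime}\hat{\bl}_i^{(k)} - \bl_i^0 = \hat{u}_i^{(k)} + (\bW^{(k)}-\bW^{(k-1)})^\prime\hat{\bl}_i^{(k)}$ and $\|\hat{\bl}_i^{(k)}\|\leq C$. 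For (\ref{w1-w00}) I telescope $\bW^{(K)} - \bW^{(k)} = \sum_{j=k+1}^K(\bW^{(j)}-\bW^{(j-1)})$ (valid as $K$ is finite), then compare $\bW^{(K)}$ with $\tilde\bW_0$: the two share $\hat{\bl}_i^{(K)}$ and differ only by replacing $h_i(\xi_{it}^{(K)})$ with $h_i(0)$, so a first-order Taylor expansion of $h_i$ together with $|\xi_{it}^{(K)}|\leq C\|\bbf_t^0\|\cdot\|\bW^{(K)\prime}\hat{\bl}_i^{(K)}-\bl_i^0\|$ and (\ref{l-l0}) gives the stated order, with $E\|\bbf_t^0\|^4\leq C$ controlling the $\bbf_t^0$ factor.

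The refined bounds (\ref{1-w1-w0})--(\ref{1-w1-w00}) exploit the Bahadur representation $\hat{u}_i^{(k)} = \overline u_i^{(k)}/\sqrt{T}+o_p(1/\sqrt{T})$ from the second half of Lemma \ref{lemma2}. Substituting into the key identity yields
\[
\bW^{(k)}-\bW^{(k-1)} = \frac{1}{2pT}A_k^{-1}\sum_{i=1}^p\sum_{s=1}^T\frac{h_i(\xi_{it}^{(k)})}{h_i(0)}\hat{\bl}_i^{(k)}D_{is}\bbf_s^{0\prime}\,\bigl(T^{-1}\textstyle\sum_t \bbf_t^0\bbf_t^{0\prime}\bigr)^{-1} + o_p(1/\sqrt{T}),
\]
and Assumption \ref{ass3}(2), applied conditionally on $\{\bbf_t^0\}$ with bounded coefficients $c_{is}$ absorbing the factors $\hat{\bl}_i^{(k)}/h_i(0)$ (which are uniformly bounded by Assumption {\bf 1}'(2)), bounds the double sum by $O_p(\sqrt{pT})$. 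Hence the leading term is $O_p(1/\sqrt{pT}) = o_p(1/\sqrt{T})$ since $p\to\infty$, establishing (\ref{1-w1-w0}). Then (\ref{1-l-l0}) follows from the same decomposition as in (\ref{l-l0}), and (\ref{1-w1-w00}) follows by telescoping from $k$ to $K$ and applying the Bahadur-substitution argument to the expansion $h_i(\xi_{it}^{(K)}) - h_i(0) = \dot h_i(0)\overline u_i^{(K)\prime}\bbf_t^0/\sqrt{T} + o_p(1/\sqrt{T})$.

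The main obstacle is the rigorous treatment of the cross-sectional cancellation in the refined bounds: the loadings $\hat{\bl}_i^{(k)}$ inside the weighted sum depend on the entire panel through the iterative algorithm, so applying Assumption \ref{ass3}(2) with $c_{is}$ depending on $\hat{\bl}_i^{(k)}$ requires either a decoupling device or exploiting the uniform convergence of $\hat{\bl}_i^{(k)}$ to $\bW^{(k)-\prime}\bl_i^0$ (so that $c_{is}$ can effectively be replaced by its limiting, panel-independent analogue). A secondary subtlety is the implicit $t$-dependence of $\bW^{(k)}$ through $\xi_{it}^{(k)}$, which must be absorbed uniformly in $t$ when comparing with $\tilde\bW_0$.
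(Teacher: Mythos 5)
Your proposal is correct and follows essentially the same route as the paper's proof: control $\bW^{(k)}-\bW^{(k-1)}$ through $\max_i\|\hat{u}_i^{(k)}\|$ via Lemmas 2 and 4, deduce (\ref{l-l0}) from the decomposition $\bW^{(k)\prime}\hat{\bl}_i^{(k)}-\bl_i^0=\hat{u}_i^{(k)}+(\bW^{(k)}-\bW^{(k-1)})^{\prime}\hat{\bl}_i^{(k)}$, reduce (\ref{w1-w00}) to the discrepancy between $h_i(\xi_{it}^{(k)})$ and $h_i(0)$, and obtain the refined $o_p(1/\sqrt{T})$ rates by substituting the Bahadur representation and exploiting the cross-sectional averaging of $\sum_t D_{it}\bbf_t^0$ under Assumption 3(2). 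Your exact identity $\bW^{(k)}-\bW^{(k-1)}=-A_k^{-1}p^{-1}\sum_i h_i(\xi_{it}^{(k)})\hat{\bl}_i^{(k)}\hat{u}_i^{(k)\prime}$ is a cleaner packaging of the paper's two separate approximations, and the two caveats you flag (the dependence of the weights $c_{is}$ on the whole panel, and the $t$-dependence of $\bW^{(k)}$ through $\xi_{it}^{(k)}$) are real subtleties that the paper's own proof also leaves implicit.
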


\begin{proof}

Lemma \ref{lemma4} shows that
\begin{equation}\label{w1-w0-1}
\frac{1}{p}\sum_i\hat{\bl}_i^{(k)}\hat{\bl}_i^{(k)\prime}h_i(\xi_{it}^{(k)})=\Big(\bW^{(k-1)\prime}\Big)^{-1}\frac{1}{p}\sum_i\bl_i^{0}\bl_i^{0\prime}h_i(\xi_{it}^{(k)})\Big(\bW^{(k-1)}\Big)^{-1}+O_p\Big(\frac{\log{p}}{\sqrt{T}}\Big)+o_p\Big(\frac{1}{\sqrt{p}}\Big),
\end{equation}
and
\begin{equation}\label{w1-w0-2}
\frac{1}{p}\sum_i\hat{\bl}_i^{(k)}\bl_i^{0\prime}h_i(\xi_{it}^{(k)})=\Big(\bW^{(k-1)\prime}\Big)^{-1}\frac{1}{p}\sum_i\bl_i^{0}\bl_i^{0\prime}h_i(\xi_{it}^{(k)})+O_p\Big(\frac{\log{p}}{\sqrt{T}}\Big)+o_p\Big(\frac{1}{\sqrt{p}}\Big).
\end{equation}
Combining (\ref{w1-w0-1}) and (\ref{w1-w0-2}) proves (\ref{w1-w0}). (\ref{w1-w0}) and Lemma \ref{lemma4} prove that
\begin{eqnarray}\label{p-w1-w0}
&&\bW^{(k)\prime}\hat{\bl}_i^{(k)}-\bl_i^0\nonumber\\
&=& \bW^{(k-1)\prime}\hat{\bl}_i^{(k)}-\bl_i^0+\Big\{\bW^{(k)\prime}-\bW^{(k-1)\prime}\Big\}\Big(\hat{\bl}_i^{(k)}-(\bW^{(k-1)\prime})^{-1}\bl_i^0+(\bW^{(k-1)\prime})^{-1}\bl_i^0\Big)\\
&=&O_p\Big(\frac{\log{p}}{\sqrt{T}}\Big)+o_p\Big(\frac{1}{\sqrt{p}}\Big),\nonumber
\end{eqnarray}
where the $O_p$ and $o_p$ terms hold uniformly in $i\leq p$. The only difference between $\bW^{(k)}$ and $\bW_0$ is the difference between $h_i(\xi_{it}^{(k)})$ and $h_i(0)$, then (\ref{w1-w00}) is a straightforward result of (\ref{l-l0}) and the property $\max_i\|\xi_{it}^{(k)}\|\leq \max_i\|\bW^{(k)\prime}\hat{\bl}_i^{(k)}-\bl_i^0\|=O_p({\log{p}}/{\sqrt{T}})+o_p({1}/{\sqrt{p}})$.

If further $[(\log{p})^2(\log{T})^2+(\log{T})^3/\sqrt{p}]\frac{T}{p}+\frac{\log^5{p}}{\sqrt{T}}=o(1)$, the proof of Lemma \ref{lemma2} and Lemma \ref{lemma4} show that
\begin{equation}\label{star}
\hat{u}_i^{(k)}=\frac{1}{\sqrt{T}}\overline{u}_i^{(k)}+o_p\Big(\frac{1}{\sqrt{T}}\Big),
\end{equation}
where the $o_p$ term holds uniformly in $i\leq p$. Equation (\ref{star}) together with the temporal and cross-section weak dependence condition on $D_{it}$'s in Assumption 4 proves that
\begin{eqnarray}\label{hui}
&&\frac{1}{p}\sum^p_{i=1}h_i(\xi_{it}^{(k)})\Big\{\hat{\bl}_i^{(k)}-(\bW^{(k-1)\prime})^{-1}\bl_i^0\Big\}\bl_i^{0\prime}\nonumber\\
&=&\Big(\bW^{(k-1)\prime}\Big)^{-1}\frac{1}{p\sqrt{T}}\sum^p_{i=1}h_i(\xi_{it}^{(k)})\overline{u}_i^{(k)}\bl_i^{0\prime}+o_p\Big(\frac{1}{\sqrt{T}}\Big)=O_p\Big(\frac{1}{\sqrt{pT}}\Big)+o_p\Big(\frac{1}{\sqrt{T}}\Big),
\end{eqnarray}
\begin{eqnarray}\label{hui1}
&&\frac{1}{p}\sum^p_{i=1}h_i(\xi_{it}^{(k)})\Big\{\hat{\bl}_i^{(k)}-(\bW^{(k-1)\prime})^{-1}\bl_i^0\Big\}\hat{\bl}_i^{(k)\prime}\nonumber\\
&=&\Big(\bW^{(k-1)\prime}\Big)^{-1}\frac{1}{p\sqrt{T}}\sum^p_{i=1}h_i(\xi_{it}^{(k)})\overline{u}_i^{(k)}\hat{\bl}_i^{(k)\prime}+o_p\Big(\frac{1}{\sqrt{T}}\Big)=O_p\Big(\frac{1}{\sqrt{pT}}\Big)+o_p\Big(\frac{1}{\sqrt{T}}\Big),
\end{eqnarray}
and
\begin{eqnarray}\label{hui2}
&&\frac{1}{p}\sum^p_{i=1}h_i(\xi_{it}^{(k)})\hat{\bl}_i^{(k)}\Big\{\hat{\bl}_i^{(k)}-(\bW^{(k-1)\prime})^{-1}\bl_i^0\Big\}^{\prime}\nonumber\\
&=&\Big(\bW^{(k-1)\prime}\Big)^{-1}\frac{1}{p\sqrt{T}}\sum^p_{i=1}h_i(\xi_{it})\hat{\bl}_i^{(k)}\overline{u}_i^{(k)\prime}+o_p\Big(\frac{1}{\sqrt{T}}\Big)=O_p\Big(\frac{1}{\sqrt{pT}}\Big)+o_p\Big(\frac{1}{\sqrt{T}}\Big),
\end{eqnarray}
where the $O_p({1}/{\sqrt{pT}})$ term is due to $\overline{u}_i^{(k)}/\sqrt{T}$ and Assumption 3, and $o_p({1}/{\sqrt{T}})$ is due to (\ref{star}).
(\ref{hui})-(\ref{hui2}) prove (\ref{1-w1-w0}). (\ref{p-w1-w0}) and (\ref{1-w1-w0}) prove (\ref{1-l-l0}). For $\tilde{\theta}^{(k)}_{it}\in [0, 1]$,
\begin{equation}\label{starstar}
h_i(\xi_{it}^{(k)})-h_i(0)=\dot{h}_i(\tilde{\theta}^{(k)}_{it}\xi_{it}^{(k)})\theta_{it}^{(k)}(\hat{\bl}_i^{(k)\prime}\bW^{(k)}-\bl_i^{0\prime})\bbf_t^0.
\end{equation}
This together with (\ref{1-l-l0}), the boundedness of $\dot{h}_i(x)$, and the temporal and cross-section weak dependence condition on $D_{it}$'s in Assumption 3 proves that
\[
\frac{1}{p}\sum^p_{i=1}\{h_i(\xi_{it}^{(k)})-h_i(0)\}\hat{\bl}_i^{(k)}\bl_i^{0\prime}=O_p(1/\sqrt{T}), \hspace{1em}
\frac{1}{p}\sum^p_{i=1}\{h_i(\xi_{it}^{(k)})-h_i(0)\}\hat{\bl}_i^{(k)}\hat{\bl}_i^{(k)\prime}=O_p(1/\sqrt{T}).
\]

\end{proof}

\begin{corollary}\label{cor2}
Under the conditions in Lemma \ref{lemma3},
$$
\max_i\|\tilde{\bW}_0^{\prime}\hat{\bl}_i^{(k)}-\bl_i^0\|=O_p\Big(\frac{\log{p}}{\sqrt{T}}+o_p\Big(\frac{1}{\sqrt{p}}\Big)\Big).
$$
Moreover, if further $[(\log{p})^2(\log{T})^2+(\log{T})^3/\sqrt{p}]\frac{T}{p}+\frac{\log^5{p}}{\sqrt{T}}=o(1)$,
$$
\max_i\Big\|\bW^{(k)\prime}\hat{\bl}_i^{(k)}-\bl_i^0-\frac{1}{\sqrt{T}}\overline{u}_i^{(k)}\Big\|=o_p\Big(\frac{1}{\sqrt{T}}\Big).
$$
\end{corollary}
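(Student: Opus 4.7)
The plan is to read off both statements from Lemma 5 via the triangle inequality. For the first claim, I would decompose
\[
\tilde{\bW}_0^{\prime}\hat{\bl}_i^{(k)}-\bl_i^0 = \big(\tilde{\bW}_0-\bW^{(k)}\big)^{\prime}\hat{\bl}_i^{(k)} + \big(\bW^{(k)\prime}\hat{\bl}_i^{(k)}-\bl_i^0\big),
\]
and bound each summand separately, uniformly in $i$. The second summand is exactly equation (\ref{l-l0}) of Lemma 5 and is therefore $O_p(\log p/\sqrt{T})+o_p(1/\sqrt{p})$. For the first summand, I would use the operator bound $\|(\tilde{\bW}_0-\bW^{(k)})^{\prime}\hat{\bl}_i^{(k)}\|\leq \|\tilde{\bW}_0-\bW^{(k)}\|\,\|\hat{\bl}_i^{(k)}\|$ together with equation (\ref{w1-w00}).

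The one step that needs a brief justification is $\max_i\|\hat{\bl}_i^{(k)}\|=O_p(1)$. From (\ref{l-l0}) and Assumption~{\bf 1}'(2), $\max_i\|\bW^{(k)\prime}\hat{\bl}_i^{(k)}\|\leq \max_i\|\bl_i^0\|+o_p(1)\leq C+o_p(1)$. Because $\bW^{(k)}=\{p^{-1}\sum_i h_i(\xi_{it}^{(k)})\hat{\bl}_i^{(k)}\hat{\bl}_i^{(k)\prime}\}^{-1}\{p^{-1}\sum_i h_i(\xi_{it}^{(k)})\hat{\bl}_i^{(k)}\bl_i^{0\prime}\}$, Assumption~{\bf 1}' (strong-factor bounds on $\bL^{\prime}\bH\bL/p$ and $\bL^{\prime}\bH\bL^0/p$) and the positivity of $h_i$ near zero from Assumption~\ref{ass3}(1) guarantee that the smallest singular value of $\bW^{(k)}$ is bounded below with probability approaching one. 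Consequently $\|(\bW^{(k)})^{-1}\|=O_p(1)$ and $\max_i\|\hat{\bl}_i^{(k)}\|\leq \|(\bW^{(k)})^{-1}\|\max_i\|\bW^{(k)\prime}\hat{\bl}_i^{(k)}\|=O_p(1)$. Plugging this into the decomposition and combining with (\ref{w1-w00}) yields the first claim.

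The second claim of the corollary is literally equation (\ref{1-l-l0}) of Lemma 5, which is already established under the strengthened rate condition $\big[(\log p)^2(\log T)^2+(\log T)^3/\sqrt{p}\big]T/p+\log^5 p/\sqrt{T}=o(1)$. Hence no further work is needed for this part, aside from noting that all $o_p$ and $O_p$ bounds inherited from Lemma 5 hold uniformly in $i\leq p$. Thus the corollary is a direct repackaging of Lemma 5 via a single triangle-inequality step, and the only place any thought is required is in verifying the uniform boundedness of $\|\hat{\bl}_i^{(k)}\|$, which is the main (mild) obstacle.
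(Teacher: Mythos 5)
Your proposal is correct and follows essentially the same route as the paper: the paper likewise obtains the corollary directly from Lemma 5 by replacing $\bW^{(k)}$ with $\tilde{\bW}_0$ via (\ref{w1-w00}) and (\ref{1-w1-w00}), with the second display being literally (\ref{1-l-l0}). Your added verification that $\max_i\|\hat{\bl}_i^{(k)}\|=O_p(1)$ is a sensible piece of bookkeeping that the paper leaves implicit, but it does not change the argument.
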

\begin{proof}
Corollary \ref{cor2} is a direct result of (\ref{w1-w00}) and (\ref{1-w1-w00}) in Lemma \ref{lemma5} by simply replacing $\bW^{(k)}$ in (\ref{l-l0}) and (\ref{1-l-l0}) by $\tilde{\bW}_0$.
\end{proof}

\begin{corollary}\label{cor3}
Under Assumptions 1-5,
$$
\hat{\bbf}_t^{(k+1)}-\tilde{\bW}_0\bbf_t^0=O_p\Big(\frac{\log{p}}{\sqrt{T}}\Big)+O_p\Big({\frac{1}{\sqrt{p}}}\Big).
$$
Moreover, if further ${p\log^2{p}}/{T}=o(1)$, then $$\hat{\bbf}_t^{(k+1)}-\tilde{\bW}_0\bbf_t^0=\frac{1}{\sqrt{p}}\overline{v}_t^{(k+1)}+o_p\Big(\frac{1}{\sqrt{p}}\Big)$$.
\end{corollary}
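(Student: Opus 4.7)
The plan is to decompose the target difference and invoke Lemma \ref{lemma3} and Lemma \ref{lemma5} directly. Write
\[
\hat{\bbf}_t^{(k+1)}-\tilde{\bW}_0\bbf_t^0 \;=\; \underbrace{\hat{\bbf}_t^{(k+1)}-\bW^{(k)}\bbf_t^0}_{=\,\hat v_t^{(k+1)}} \;+\; \bigl(\bW^{(k)}-\tilde{\bW}_0\bigr)\bbf_t^0,
\]
so the whole task reduces to controlling the two summands.

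For the first summand, Lemma \ref{lemma3} says $\max_t\|\sqrt{p}\,\hat v_t^{(k+1)}-\overline v_t^{(k+1)}\|=o_p(1)$. Since the Bahadur-type term $\overline v_t^{(k+1)}$ is bounded in probability (its second moment is $O(1)$ by Assumption \ref{ass3}(2) together with the inverse bound for $p^{-1}\sum_i h_i(0)\hat\bl_i^{(k)}\hat\bl_i^{(k)\prime}$ inherited from Assumption {\bf 1}' and Corollary \ref{cor2}), this yields $\hat v_t^{(k+1)}=O_p(1/\sqrt{p})$, and under the stronger rate $p\log^2 p/T=o(1)$ it refines to $\hat v_t^{(k+1)}=\overline v_t^{(k+1)}/\sqrt p+o_p(1/\sqrt p)$.

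For the second summand, Lemma \ref{lemma5} supplies $\|\bW^{(k)}-\tilde{\bW}_0\|=O_p(\log p/\sqrt T)+o_p(1/\sqrt p)$, and Assumption \ref{ass1}(2) gives $\|\bbf_t^0\|=O_p(1)$. Therefore $(\bW^{(k)}-\tilde{\bW}_0)\bbf_t^0=O_p(\log p/\sqrt T)+o_p(1/\sqrt p)$, which combined with the bound on $\hat v_t^{(k+1)}$ gives the first claim $\hat{\bbf}_t^{(k+1)}-\tilde{\bW}_0\bbf_t^0=O_p(\log p/\sqrt T)+O_p(1/\sqrt p)$.

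For the Bahadur representation, the key arithmetic observation is that $p\log^2 p/T=o(1)$ is exactly the condition that makes $\log p/\sqrt T=o(1/\sqrt p)$, because $\sqrt{p}\cdot(\log p/\sqrt T)=\sqrt{p\log^2 p/T}\to 0$. Consequently $(\bW^{(k)}-\tilde{\bW}_0)\bbf_t^0=o_p(1/\sqrt p)$ and the refined expansion from Lemma \ref{lemma3} yields
\[
\hat{\bbf}_t^{(k+1)}-\tilde{\bW}_0\bbf_t^0=\frac{1}{\sqrt p}\overline v_t^{(k+1)}+o_p\!\Big(\frac{1}{\sqrt p}\Big).
\]
The only real point that needs care (rather than being fully routine) is making sure the $o_p(1/\sqrt p)$ remainder in Lemma \ref{lemma5} remains $o_p(1/\sqrt p)$ after multiplication by $\bbf_t^0$; this is immediate pointwise in $t$ from $\|\bbf_t^0\|=O_p(1)$, so no uniform-in-$t$ strengthening is required here. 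Thus the corollary follows with no additional probabilistic machinery beyond the preceding lemmas.
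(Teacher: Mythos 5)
Your proposal is correct and follows essentially the same route as the paper: the identical decomposition $\hat{\bbf}_t^{(k+1)}-\tilde{\bW}_0\bbf_t^0=\hat v_t^{(k+1)}+(\bW^{(k)}-\tilde{\bW}_0)\bbf_t^0$, with Lemma 3 controlling the first term and Lemma 5 the second, and the same observation that $p\log^2 p/T=o(1)$ absorbs the $\log p/\sqrt{T}$ term into $o_p(1/\sqrt{p})$ for the Bahadur representation. Your added remarks on why $\overline v_t^{(k+1)}=O_p(1)$ and on the pointwise-in-$t$ treatment of $\bbf_t^0$ are correct and slightly more explicit than the paper's own two-line argument.
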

\begin{proof}
By Lemmas \ref{lemma3}-\ref{lemma5},
\begin{eqnarray*}
\hat{\bbf}_t^{(k+1)}-\tilde{\bW}_0\bbf_t^0=\hat{\bbf}_t^{(k+1)}-\bW^{(k)}\bbf_t^0+(\bW^{(k)}-\tilde{\bW}_0)\bbf_t^0=O_p\Big(\frac{1}{\sqrt{p}}\Big)+O_p\Big(\frac{\log{p}}{\sqrt{T}}\Big)+o_p\Big(\frac{1}{\sqrt{p}}\Big).
\end{eqnarray*}
If ${p\log^2{p}}/{T}=o(1)$,
the above equation demonstrates that
\begin{eqnarray*}
\hat{\bbf}_t^{(k+1)}-\tilde{\bW}_0\bbf_t^0=\hat{\bbf}_t^{(k+1)}-\bW^{(k)}\bbf_t^0+o_p\Big(\frac{1}{\sqrt{p}}\Big),
\end{eqnarray*}
and hence by Lemma \ref{lemma4}, $\hat{\bbf}_t^{(k+1)}-\tilde{\bW}_0\bbf_t^0=\overline{v}_t^{(k+1)}/\sqrt{p}+o_p({1}/{\sqrt{p}})$.
\end{proof}

\section{Proof of Main Theorems}\label{sec:B}

\noindent{\bf Proof of Theorem 1} Theorem 1 is a direct consequence of Corollaries 1 and 2. The identifiability condition (\ref{equ:identifiability}) and the first equation of Theorem 1 show that
\begin{eqnarray*}
\Ib_r=\frac{1}{T}\sum^T_{t=1}\tilde{\bbf}_t\tilde{\bbf}_t^{\prime}=\frac{1}{T}\sum^T_{t=1}\tilde{\bW}_0\bbf_t^0\bbf_t^{0\prime}\tilde{\bW}_0^{\prime}+o_p(1)=\tilde{\bW}_0\tilde{\bW}_0^{\prime}+o_p(1).
\end{eqnarray*}
This together with (\ref{1-w1-w00}) proves $\bW^{(k)}\bW^{(k)\prime}=\Ib_r+o_p(1)$.

\vspace{2em}

{
\noindent{\bf Proof of Theorem 2 and Theorem 3}
We start with the first iterative step, which can be divided into the two following parts: 1) given the initial $\bl_i^{(0)}$, estimate the factor scores and get $\hat\bbf_t^{(1)}$; 2) given $\hat\bbf_t^{(1)}$, estimate the loadings and get $\hat\bl_i^{(1)}$. Without loss of generality, we let $\tau=0.5$ in the following analysis.
\vspace{1em}

\noindent\textbf{Part 1)}:
By Assumption 1''(1), there always exists a $p\times (r_{\max}-r)$ matrix, denoted as $\bL^{-r}$, such that the eigenvalues of $p^{-1}{\bL^{(0)\prime}(r_{\max})}(\bL^0,\bL^{-r})$ and $p^{-1}(\bL^0,\bL^{-r})^\prime(\bL^0,\bL^{-r})$ are bounded away from zero and infinity. Hence, we can rewrite the model as
\[
\by_t={\bL^{0}}\bbf_t^0+{\bL^{-r}}\times {\zero_{r_{\max}-r}}+\bepsilon_t,
\]
where $\zero_{r_{\max}-r}$ is a vector with all entries 0.
That is, the new factor loading matrix is $(\bL^0,\bL^{-r})$ while the factor scores are $(\bbf_t^{0\prime},{\bf 0}^\prime)$. Denote the $r_{\max}\times r_{\max}$ rotation matrix as $\Wb^{(0)}$, and it's still positive definite though $r_{\max}>r$. Then analogous to the proof of Lemma \ref{lemma1},  after the first iterative step, we have $\max_t\hat\bv_t^{(1)}=O_p(\log T/\sqrt{p})$, i.e.,
\begin{equation}\label{equ:fn1}
\max_t\Big\|\hat\bbf_t^{(1)}-\Wb^{(0)}(\bbf_t^{0\prime},{\bf 0}^\prime)\Big\|=O_p\bigg(\frac{\log T}{\sqrt{p}}\bigg).
\end{equation}
\vspace{1em}
\noindent\textbf{Part 2)}: In this part, given $\hat\bbf_t^{(1)}$,  we investigate the properties of $\hat\bl_i^{(1)}$.
Denote the spectral decomposition  $$T^{-1}\sum_{t=1}^T\hat\bbf_t^{(1)}{\hat\bbf_t^{(1)\prime}}=\bGamma\bLambda\bGamma^\prime,$$ where $\bLambda=\text{diag}(\lambda_1,\ldots,\lambda_{r_{\max}})$. By equation (\ref{equ:fn1}), $\lambda_j$ are of order 1 for $j\le r$ and converge to 0 for $j>r$.  Thus if we directly input the factor scores $\hat\bbf_t^{(1)}$, the $(r+1)$-th to $r_{\max}$ coordinates of the estimated loadings $\hat\bl_i^{(1)}$ will go to infinity. To overcome this issue, note that before moving to the next iteration, we always normalize the factor scores such that $\hat\bF\hat\bF^\prime/T=\Ib_{r_{\max}}$. To ease the notation and further analysis, we assume that $T^{-1}\sum\hat\bbf_t^{(1)}{\hat\bbf_t^{(1)\prime}}$ is diagonal as orthogonal rotation has no effect on $\hat\bL\hat\bF$. Suppress the superscript $(1)$ and let $\breve\bbf_t$ be the normalized factor scores, i.e.,  $\breve\bbf_t=\bLambda^{-1/2}\hat\bbf_t^{(1)} $. Further denote
\[
\bLambda_1=\text{diag}(\lambda_1,\ldots,\lambda_r),\quad\breve\bbf_t=\left(\begin{aligned}
&\breve\bbf_{1t}\\
&\breve\bbf_{2t}
\end{aligned}\right),\quad \Wb^{(0)}=\left(\begin{aligned}
&\Wb_{11}&\Wb_{12}\\
&\Wb_{21}&\Wb_{22}
\end{aligned}\right),\quad \bbf_{2t}^{0}=\Wb_{22}^{-1}\breve\bbf_{2t}.
\]
Then by (\ref{equ:fn1}), we further have that
\[
\breve\bbf_{1t}=\bLambda_1^{-1}\Wb_{11}\bbf_t^0+O_p\bigg(\frac{\log T}{\sqrt{p}}\bigg), \quad \breve\bbf_{2t}=\Wb_{22}\bbf_{2t}^0.
\]
Denote
\[
\bA:=\left(\begin{aligned}
&\bLambda_1&{\bf 0}\\
&{\bf 0}&\Ib_{r_{\max}-r}
\end{aligned}\right)^{-1}\Wb^{(0)}, \quad
\bv_t^{(1)}=\breve\bbf_t-\bA\left(\begin{aligned}
&\bbf_t^{0}\\
&\bbf_{2t}^{0}
\end{aligned}\right),\quad\bu_i^{(1)}=\bl_i-(\bA^{-1})^\prime\left(\begin{aligned}
&\bl_i^{0}\\
&{\bf 0}
\end{aligned}\right),
\]
then the diagonal entries of $\bA$ are of order 1, $\max_t\|\bv_t^{(1)}\|=O_P(\log T/\sqrt{p})$, and
\[
\bl_i^\prime\breve\bbf_t-\bl_i^{0\top}\bbf_t^{0}={\bu_i^{(1)\prime}}\breve\bbf_t+({\bl_i^{0\prime}},{\bf 0}^\prime)\bA^{-1}\bv_t^{(1)}.
\]
	Then, by the proof of Lemma \ref{lemma2}, we have
	\[
	\max_i\|\hat\bu_i^{(1)}\|=o_p(1),
	\]
	which is the desired conclusion for the first iterative step.
	
	Now we introduce the way to extend the conclusion to the iterative steps for $K\ge 2$.
	Denote the eigenvalues of $p^{-1}\sum\hat\bl_i^{(1)}{\hat\bl_i^{(1)\prime}}$ as $\beta_j$, $1\le j\le r_{\max}$. Because the diagonal entries of $\bA$ are of order 1, it's easy to verify that $\beta_j$ are of order 1 for $j\le r$ while $\beta_j=o_p(1)$ for $r<j\le r_{\max}$.  Hence,  the eigenvalue condition in Assumption 1'' (1) is not satisfied if taking $\hat\bL^{(1)}$ as the initial input. It turns out that this is not a critical problem in the following iterative steps. To illuminate this,  we can always scale the $j$-th column of $\hat\bL^{(1)}$ by a factor $\beta_j^{-1}$ for $r<j\le r_{\max}$ so that the eigenvalue conditions in Assumption 1'' (1) are satisfied. Denote the scaled loadings as $\breve\bL^{(1)}$, then
	\[
	 \breve\bL^{(1)}=\hat\bL^{(1)}\times\text{diag}(1,\cdots,1,\beta_{r+1}^{-1},\cdots,\beta_{r_{\max}}^{-1}):=\hat\bL^{(1)}\bB.
	\]
	Taking $\breve\bL^{(1)}$ as the input, then the optimized factor scores $\breve\bbf_t^{(2)}$ will converge to $(\bbf_t^{0\prime},{\bf 0}^\prime)$ by similar arguments in part 1).
	Note that the scaling of $\hat\bL^{(1)}$ will lead to a shrinkage of factor $\beta_j$ on the $j$-th column of the estimated factor score  matrix, i.e.,
	\[
	\breve\bF^{(2)}=\bB^{-1}\hat\bF^{(2)}.
	\]
	However, the normalized factor score matrices by $\breve\bF^{(2)}$ and $\hat\bF^{(2)}$ are  exactly the same. Hence, the claims in part 2) still hold and
	\[
	\max_i\|\hat\bu_i^{(K)}\|=o_p(1),\text{ for } K\ge 2,
	\]
	which concludes Theorem 2.
	Therefore, when the iteration stops, the leading $r$ eigenvalues of $\tilde \bL(r_{\max})^\prime \tilde\bL(r_{\max})/p$ are of order 1 while the others are $o_p(1)$. As a result, the eigenvalue-ratio achieves maximization asymptotically only at $k=r$ and Theorem 3 holds directly. \qed
	
}
\end{appendices}

\end{document}